\newcommand{\unwind}{\sim}
\newcommand{\ocp}{$\text{OC}^\text{P}$\xspace}
\newcommand{\scp}{$\text{SC}^\text{P}$\xspace}
\newcommand{\lrp}{$\text{LR}^\text{P}$\xspace}
\newcommand{\ocip}{$\text{OC}^\text{IP}$}
\newcommand{\scip}{$\text{SC}^\text{IP}$}
\newcommand{\lrip}{$\text{LR}^\text{IP}$}
\newcommand{\octa}{$\text{OC}^\text{TA}$}
\newcommand{\scta}{$\text{SC}^\text{TA}$}
\newcommand{\lrta}{$\text{LR}^\text{TA}$}
\newcommand{\makeset}{\textsc{Make{-}Set}}
\newcommand{\union}{\textsc{Union}}
\newcommand{\findset}{\textsc{Find}}
\newcommand{\eqifind}[1]{\sim^{\findset}_{#1}}
\newcommand{\eqiscall}[1]{\sim^{\text{SC}}_{#1}}
\newcommand{\eqiobs}[1]{\sim^{\obs_{#1}}}
\newcommand{\ftview}{{\tt ftview}}
\newcommand{\lpre}{{\tt lpre}}
\newcommand{\final}[1]{{#1}^{\textsf{f}}}
\newcommand{\powerset}[1]{\mathcal{P}(#1)}
\newcommand{\convert}{\text{convert}}
\newcommand{\convertback}{\text{convertback}}
\newcommand{\qed}{$\Box$\\}
\newenvironment{proof}{\noindent {\bf Proof:} }{\hfill\qed}
\newcommand{\commentout}[1]{}
\newcommand{\nintrel}{\rightarrowtail}
\newcommand{\nintrelimage}[1]{{#1}^\nintrel}
\newcommand{\step}{{\tt step}}
\newcommand{\Actions}{A}
\newcommand{\Dom}{D}
\newcommand{\States}{S} 
\newcommand{\Observations}{O}
\newcommand{\dom}{{\tt dom}} 
\newcommand{\obs}{{\tt obs}}
\newcommand{\view}{{\tt view}}
\newcommand{\tview}{{\tt tview}}
\newcommand{\sources}{{\tt sources}}
\newcommand{\ipurge}{{\tt ipurge}}
\newcommand{\purge}{{\tt purge}}
\newcommand{\ta}{{\tt ta}} 
\newcommand{\pt}{{\tt to}}
\newcommand{\ito}{{\tt ito}}
\newcommand{\Alphabet}[1]{\textnormal{alph}(#1)}
\newcommand{\ipsec}{IP-se\-cure}
\newcommand{\psec}{P-se\-cure}
\newcommand{\tosec}{TO-se\-cure}
\newcommand{\itosec}{ITO-secure}
\newcommand{\tasec}{TA-se\-cure}
\newcommand{\spurge}[2]{#1{\restrict}#2}
\newcommand{\restrict}{\upharpoonleft}
\newcommand{\ipsecty}{IP-security}
\newcommand{\psecty}{P-security}
\newcommand{\tosecty}{TO-security}
\newcommand{\tasecty}{TA-security}
\newcommand{\itosecty}{ITO-security}
\newcommand{\Swap}{\leftrightarrow^\textsl{swap}}
\newcommand{\RefSwap}{\stackrel = \leftrightarrow^\textsl{swap}}
\newcommand{\Symrefswap}{\equiv^\textsl{oi}}
\newcommand{\RedIrr}{\rightarrow^\textsl{irr}}
\newcommand{\SymRedIrr}{\leftrightarrow^\textsl{irr}}
\newcommand{\SymrefRedIrr}{\stackrel = \leftrightarrow^\textsl{irr}}
\newcommand{\acat}{\circ} 
\newcommand{\pow}[1]{{\cal P}(#1)}
\newcommand{\hset}{{\cal T}}
\newtheorem{definition}{Definition}
\newtheorem{proposition}{Proposition}
\newtheorem{lemma}{Lemma}
\newtheorem{theorem}{Theorem}
\newtheorem{corollary}{Corollary}
\newcommand{\be}{\begin{compactenum}}
\newcommand{\ee}{\end{compactenum}}
\newcommand{\store}{\textit{store}\xspace}
\newcommand{\computewitness}{\textnormal{compute-witness}}
\title{Complexity and Unwinding for Intransitive Noninterference%
\footnote{This paper extends and significantly revises the paper
\cite{EMSW11}.  The main differences are 
that full proofs of results from \cite{EMSW11} are provided, 
new results concerning the notion of ITO-security are added, 
new results on unwindings for IP-security and TA-security are added, 
and these new unwindings are used as the basis for new algorithms that
yield better complexity bounds than presented in \cite{EMSW11}.} }
\newif\ifallproofs
\newcommand{\authorrefmark}[1]{$^#1$}
\author{
    Sebastian Eggert\authorrefmark{1} ~~~ 
    Ron van der Meyden\authorrefmark{2} \\
    Henning Schnoor\authorrefmark{1} ~~~~~~~~~~~~
    Thomas Wilke\authorrefmark{1} \\[15pt]
1: Institut f\"ur Informatik, \\
    Kiel University \\ 
 2: School of Computer Science and
    Engineering, \\University of New South Wales\\[15pt]
    Email: sebastian.eggert@email.uni-kiel.de, \\
    meyden@cse.unsw.edu.au, \\
    henning.schnoor@email.uni-kiel.de, \\
    thomas.wilke@email.uni-kiel.de
    }
\begin{document} 

\maketitle 

\begin{abstract} 
The paper considers several definitions of information flow security
for intransitive policies from the point of view of the complexity of
verifying whether a finite-state system is secure. 
The results are as follows.
Checking    
\begin{inparaenum}[(i)]
\item P-security (Goguen and Meseguer),
\item IP-security (Haigh and Young), and
\item TA-security (van der Meyden)
\end{inparaenum}
are all in \textbf{PTIME}, while checking TO-security 
(van der Meyden) is undecidable, as is checking ITO-security (van der Meyden).
 The most important ingredients in the proofs of the \textbf{PTIME} upper bounds are new characterizations of the respective security 
notions, which also
lead to new unwinding proof techniques that are shown to be sound and complete for these notions of security, and 
enable the algorithms to 
return simple counter-examples demonstrating insecurity. Our results for IP-security improve
a previous 
doubly  exponential 
bound of Hadj-Alouane et al.  

~\\
\noindent
{\bf Keywords:} noninterference, information flow, verification
\end{abstract}

\section{Introduction}

One of the fundamental methods in the construction of secure systems to high levels of assurance is to decompose the system
into trusted and  untrusted components,  arranged in an architecture that constrains the possible 
causal effects and 
flows of information between these components.
On the other hand,  resource limitations  and cost constraints  may make it desirable  for trusted and untrusted components   to share  resources.
For example,  it is cheaper for an intelligence  analyst  to handle   high security and low security information on a single desktop machine  than to use
two physically separated  machines.  This leads to complex systems designs and implementations,  in which the desired constraints on flows of information between trusted and untrusted components  need to be enforced in spite of the fact that these components share resources.  In order to provide high levels of assurance of implementations of this kind,   it is desirable to have a formal  theory of systems architecture and information flow, so that a design or implementation  may be formally verified  to conform to an information flow policy. 
Moreover, one would like, whenever 
possible, to automate the verification that a system satisfies such a formally defined policy. 
This motivates the problems we consider in this paper. We study the complexity of verification of a range of formally defined security policies that 
specify how a system is architecturally structured in terms of how information may flow between its components. 

{\bf Attack model:} The problems we consider in this paper address systems implementation attacks.
We work in the paradigm of information flow security, where it is assumed that a (passive) adversary may attack the system by
attempting to make subtle deductions from her possible observations of the system, exploiting covert channels that may exist in the system,
in order to learn secrets that she is not authorized to possess. The automated analyses we consider aim to provide assurance that
the system has been designed  in such a way that such attacks are not possible, or to discover such attacks when they exist. The analysis can be applied both in circumstances where it is feared that a rogue  systems developer may have deliberately constructed the system so as to contain
such prohibited flows of information, as well  as to ensure that such flows of information have not been inadvertently allowed to exist.

{\bf Policy model:} 
Notions of {\em noninterference}---a first definition was given by Goguen and Meseguer \cite{GogMes}---are one approach to the formalisation of information flow and causal relationships.   
Noninterference was first proposed in the context of transitive information flow  policies (with transitivity following from the
partial order on security domains) but it was subsequently noted
\cite{HY87}  that  systems architectures often require  intransitive  policies.  For example, a common architectural  pattern   is to restrict information flow  from a high-level domain to a low-level domain  so as to be possible only via a trusted downgrader (e.\,g., a declassification guard or encryption device).  This pattern motivates  an intransitive information flow policy, stating  that information flow is permitted from the high-level domain to the downgrader  and from the downgrader to the low-level domain,  but not directly from the high-level domain to the low-level domain.

Goguen and Meseguer's definition of noninterference, based on a ``purge'' function, does  not  yield the desired conclusions for intransitive policies.  Haigh and  Young proposed a variant 
%ron8: 
(that we refer to as \ipsecty)
for intransitive  policies based on an ``intransitive purge'' function.
Rushby \cite{rushby92} later refined their theory and developed connections to access control systems.
Van der Meyden \cite{meyden2007} has argued that the definitions of security for intransitive policies in these works suffer from  some subtle flaws, and
proposed some improved definitions, TA-security,  
TO-security and ITO-security, 
that first build an operational (full information protocol) model of the maximal permitted information flow in the system, and then compares  the actual information flow to this maximal permitted information flow. The revised
definitions can be shown to avoid the subtle flaws in the intransitive purge-based definition, and  lead to a more satisfactory proof theory and connection
to access control systems than in Rushby's work (e.g., yielding both soundness and completeness results, whereas Rushby proved only
soundness.)

{\bf Verification:} The goal of high assurance systems development by
formal verification motivates the investigation of techniques whereby a systems
design or implementation can be formally shown to satisfy a formal definition of security.
The technique of unwinding relations \cite{goguen84,rushby92} provides a proof method that has been applied to
establish that a system satisfies noninterference properties, but it requires significant human ingenuity to define an
unwinding relation that forms the basis for the proof, and typically also has involved manual driving (proof rule selection)
of the theorem proving tool  within which the proof is conducted.

A better alternative, more acceptable to engineers when it can be applied, is for the property
to be verified by fully automatic techniques. There is a substantial body of work on automated verification techniques
for transitive noninterference properties (which we discuss in Section~\ref{sec:related}), but there has been significantly less work on automated verification
techniques for intransitive  noninterference properties.

{\bf Contributions:} Our contribution in this paper is to provide a basis for automated verification of definitions of intransitive noninterference,
by developing a characterization of the computational complexity of deciding whether a given finite-state system
is secure with respect to an intransitive information flow policy according to this definition.
In particular, we consider Goguen and Meseguer's purge-based definition, 
%ron8: Rushby's formulation of Haigh and Young's definition,
\ipsecty, 
and van der Meyden's definitions of TA-security,  
%ron8: and TO-security. 
TO-security and ITO-security.
We show that the last 
%ron8: of these definitions is undecidable,
two of these definitions are undecidable,
but the others are decidable in polynomial time and even in nondeterministic logarithmic space. 
We give algorithms for the decidable cases and analyse their complexity. 
Our results are based on new characterizations of IP-security and TA-security. 
Using these new characterizations, we develop new notions of unwinding for IP-security and TA-security, 
that give sound and complete proof techniques, and yield polynomial time decision 
procedures for these two notions. 
%ron8: 
Our PTIME decision procedures exploit the new notions of unwinding. 
These new notions of unwinding are 
%ron8: 
also 
of independent interest, in that they apply not just to the finite 
state case, where we give the complexity bounds, but also to 
infinite state systems, where they can form the basis for proof theoretic  verification 
methods. 

The structure of the paper is as follows. 
In Section~\ref{sec:defs} we define the formal systems model that we work with,  
and recall the formal definitions of security for intransitive information flow policies that we study. 
New characterizations of 
%ron8: Haigh and Young's definition 
\ipsecty\ 
and van der Meyden's notion of TA-security 
are presented in Section~\ref{sec:charn}. 
%ron8: 
The new unwinding relations for IP-security and TA-security are developed in Section~\ref{sec:unwinding}.
Section~\ref{sec:complexity} gives the complexity results for 
%ron8: the four notions that we consider. 
all the security notions that we consider.
Our results are positioned within the literature in Section~\ref{sec:related}, where we discuss related work. 
Section~\ref{sec:concl} concludes with a discussion of open problems and future research directions. 
%ron8: 
A reduction that deduces the undecidability of ITO-security from the undecidability of 
TO-security is presented in an appendix.  

%%%%%%%%%%%%%%%%%%%%%%%%%%%%%%%%%%%%%%%%%%%%%%%%%%%%%%%%%%%%%%%%%%%%%%

\section{Basic Definitions and Notation} 
\label{sec:defs}

In this section, we introduce intransitive information flow 
policies and describe their motivation. We present a
deterministic asynchronous systems model in which such policies
may be interpreted, and then recall a number of different 
semantic interpretations of such policies in this system 
model that have been proposed in the literature. 

\subsection{Noninterference Policies}

Noninterference policies 
are reflexive relations ${\nintrel} \subseteq \Dom\times \Dom$, where 
$\Dom$ is a set of ``domains". The intuitive reading of $u\nintrel v$ 
is that 
``actions of domain $u$ are permitted to interfere with domain $v$'',
or ``information is permitted to flow from domain $u$ to domain $v$''.
For any set $U \subseteq \Dom$ the image of $U$, denoted $\nintrelimage{U}$,  is defined by $\nintrelimage{U} =
\{ v \in \Dom \mid \exists u \in U : u \nintrel v \}$. For a singleton
set $\{ u\}$ we also write $\nintrelimage{u}$ instead of
$\nintrelimage{ \{u\} }$.

The reason for the assumption of reflexivity is that, intuitively, 
a domain should be allowed to interfere with or have information about itself, 
since this cannot usually be prevented.  
%ron3: expanding discussion of transitivity to contrast with motivations for intransitivity
 In early work on noninterference \cite{GogMes}, 
 %it is 
 the relation $\nintrel$ is also assumed to be transitive. This follows 
 from the interpretation of domains as corresponding to 
 security levels associated to classes of information and access rights, 
 which have generally been taken to be partially ordered \cite{Denning}.  
(In  the classical multi-level security models, this partial order is derived 
from a linear order on security levels and the set containment order 
on sets of labels.) 

One of the motivations for the consideration of 
policies $\nintrel$ that are not transitive is that
classical multilevel security policies are too restrictive 
for practical purposes, allowing flow of information 
from lower security levels to higher security levels,
but prohibiting flow in the opposite direction. 
Such flows may be less frequent but are nevertheless required, 
e.g., for distribution of battle plans,  in response to 
freedom of information requests, 
%ron4: + 
or for transmission of encrypted content across an insecure network.
One of the ways this has been handled is to allow the 
general policy to be violated by a special downgrader component. 
A typical downgrader policy is  depicted in Figure~\ref{fig:downgrader}. 
Here the usual (transitive) multi-level policy for domains Public, Secret and 
Top-Secret is extended by the addition of two domains DownS and DownP, 
that are responsible for downgrading of information from Top-Secret to 
Secret, and from Secret to Public, respectively. 
These domains are {\em trusted} to enforce whatever policy constraints apply to the 
downgrading of information. Note that it would not be appropriate to 
apply an assumption of transitivity on this setting, since then, e.g., the edges
involving 
%ron4: Down1 
DownS
would imply that Top-Secret $\nintrel$ Secret, i.e., 
a direct flow of information from Top-Secret to Secret is permitted. 

\begin{figure} 
\centerline{\includegraphics[width=6.5cm]{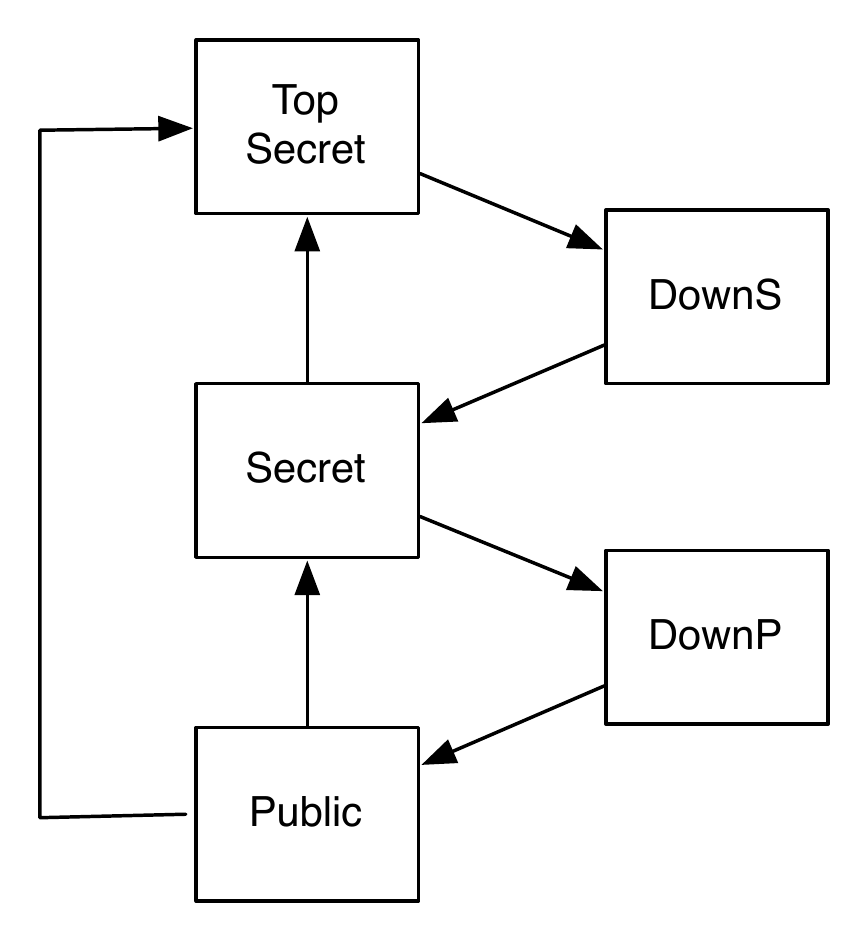}}
\caption{Downgrader Policy\label{fig:downgrader}}
\end{figure} 

Subsequent work on 
intransitive
 noninterference
has taken a somewhat extended interpretation of  the term 
``domain,'' treating this more as akin to ``component" in a
systems  architecture. Figure~\ref{fig:crypto} shows a 
systems architecture, discussed in \cite{rushby92} and \cite{BDRF08},
for a system in which messages are sent from a high security 
(Red) domain through a low security (Black) domain, with the 
global security policy stating that  all content, except the message header, 
must be encrypted, and uncontrolled flow of information from 
Red to Black is prohibited. The architecture proposes to achieve 
this goal by having the Bypass component check a (more detailed) policy 
on the allowed header structure and content, and by having 
the Crypto component enforce a local policy stating that all output must be 
encrypted.  These flows are recomposed into the encrypted message 
%ron4: 
(with header)
at the Black component.
%ron4: (which is not assumed to be trusted). 
Crypto and Bypass are assumed to be trusted components of low enough 
complexity that they can be verified to enforce their local policies. 
Red (which may contain Trojans) and Black (which is at a low security level) 
are not assumed to be trusted.
The argument for security of the system is intended to follow from the structure of the
information flows in the architecture, plus the  assumption that 
%ron4: certain components are trusted to  implement local policies.
the trusted components correctly implement their local policies.

%ron4: rescale 
\begin{figure} 
\centerline{\includegraphics[width=6.5cm]{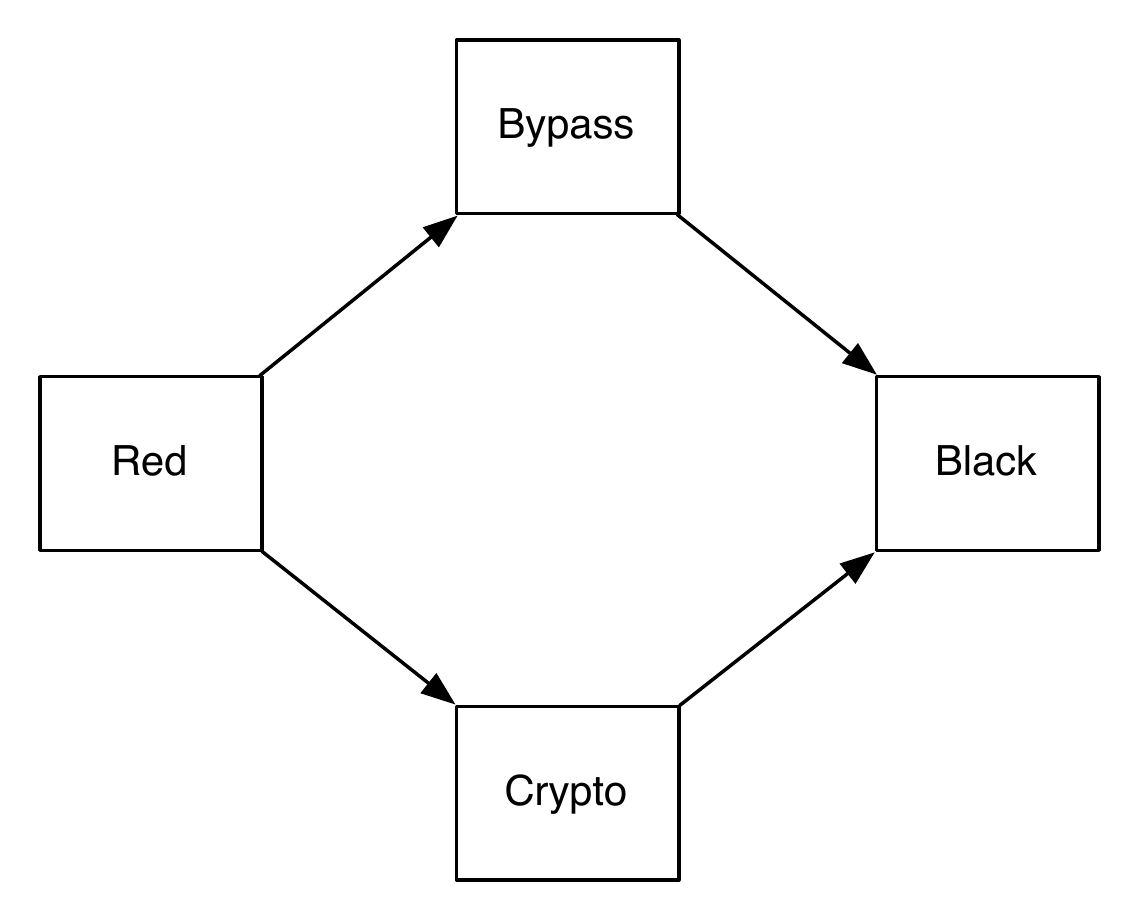}}
\caption{Policy for Encrypted Message Transmission\label{fig:crypto}}
\end{figure} 

MILS security, as expounded in  \cite{BDRF08}, proposes to 
base development of certifiably secure systems on 
design level arguments of this type, together with implementations
in which mechanisms such as separation kernels or periods processing
are used to enforce the systems architecture. 
We refer to \cite{BDRF08} for a more detailed discussion of MILS
security and the proposed structure of the argument for security 
of the system in Figure~\ref{fig:crypto}.

\begin{figure} 
\centerline{\includegraphics[width=5cm]{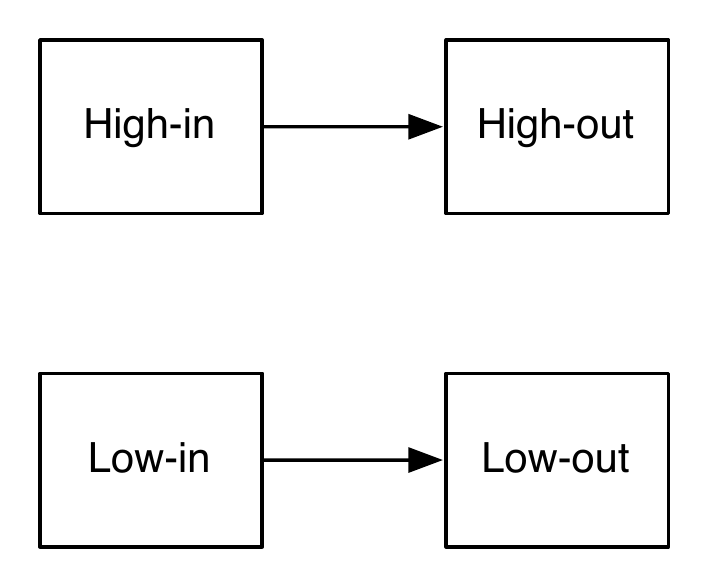}}
\caption{A MILS Design Level Policy\label{fig:multiplex}}
\end{figure}

\begin{figure}[t]
\centerline{\includegraphics[width=6cm]{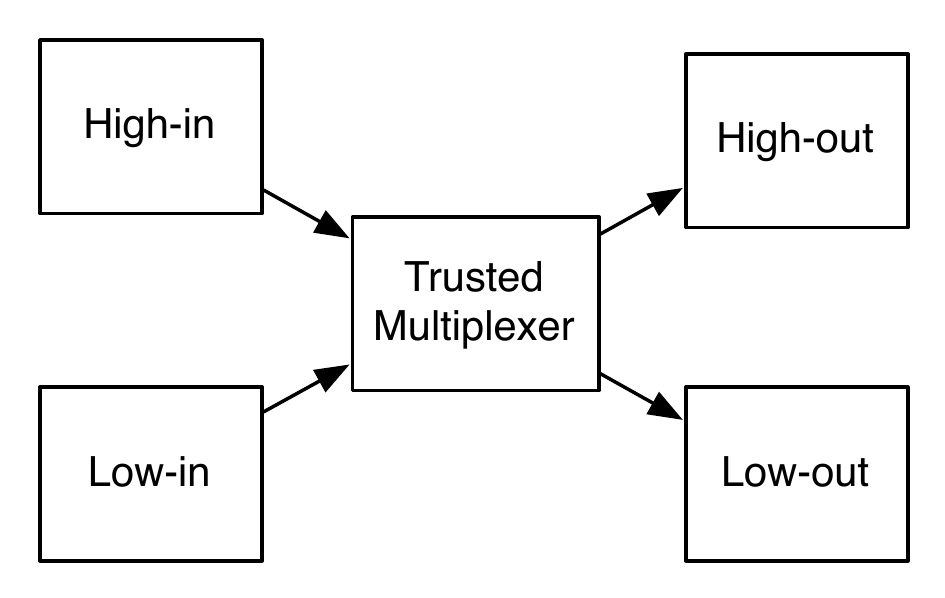}}
\caption{A MILS Resource Sharing Implementation\label{fig:multiplex2}}
\end{figure}

We note that intransitive information flow policies are intended to
express just the architectural structure of information flow, 
rather than encompass all the details of security policy. 
%ron4: On 
One 
key point is that implementations may involve resource sharing, 
which may mean that it is not immediately apparent that 
the design level architecture is enforced in the implementation. 

For example, Figure~\ref{fig:multiplex}  illustrates a design level policy for a system 
with multiple independent security levels that 
%ron_5: is implemented  
could be implemented, as shown in Figure~\ref{fig:multiplex2}, 
by a trusted multiplexer component that handles information  from multiple security levels. 
One of the issues in the verification of such systems is to determine whether such a resource sharing implementation 
correctly enforces the design level architecture. The definitions in the following sections
provide a number of distinct semantic interpretations of information flow policies 
that have been proposed to formalize what it means to implement 
the notion of correct enforcement.

\subsection{State-Observed Machine Model} 

Several different types of semantic models have been used in the literature on noninterference. 
(See \cite{MZ10} for a comparison and a discussion of their relationships.)
We work here with the state-observed machine model 
used by Rushby~\cite{rushby92}, but similar results would be obtained 
for other models.

This model consists of deterministic machines of the form
$\langle \States ,s_0, \Actions, \step, \obs,\linebreak\dom \rangle$, where $\States $ is a set
of states, $s_0\in \States $ is the \emph{initial state}, $\Actions$ is a set
of actions, $\dom\colon  \Actions \rightarrow \Dom$ associates each action
with an element of the set $D$ of security domains, $\step\colon \States \times
\Actions \rightarrow \States $ is a deterministic transition function, and
$\obs\colon \States \times D \rightarrow O$ maps states to an observation in some set
$O$, for each security domain. We may also refer to security domains 
more succinctly as ``agents''. We write $s\cdot
\alpha$ for the state reached by performing the sequence of actions
$\alpha\in \Actions^*$ from state $s$, defined inductively by $s\cdot
\epsilon = s$, and $s\cdot \alpha a
= \step(s\cdot \alpha, a) $ for $\alpha\in \Actions^*$ and $a\in
\Actions$. Here, $\epsilon$ denotes the empty sequence.
For any string $\alpha$, we say a symbol $a$ occurs in $\alpha$ if $\alpha = \beta a
\beta'$ for some strings $\beta, \beta'$. We
define $\Alphabet{\alpha}$ as the set of all symbols occurring in
$\alpha$.

\subsection{The Purge Function}

Noninterference is given a formal semantics in the transitive case
%ron3: 
\cite{GogMes}
using a definition based on a ``purge'' function. Given a set $E\subseteq D$ of 
domains and a sequence $\alpha \in \Actions^*$, we write 
$\spurge{\alpha}{E}$ for the subsequence of all actions $a$ in $\alpha $
with $\dom(a) \in E$. 
Given a policy $\nintrel$, 
we define the function $\purge\colon \Actions^* \times \Dom \rightarrow \Actions^*$ 
by \[\purge(\alpha,u) = \spurge{\alpha}{\{v\in D\mid v\nintrel u\}}.\]  
%such that $\purge(\alpha,u)$  is the subsequence of all actions $a$ in $\alpha$
%such that $\dom(a) \nintrel u$. 
(For clarity, we may use subscripting of
agent arguments of functions, writing, e.\,g.,  $\purge(\alpha,u)$ as
$\purge_u(\alpha)$.) The system $M$ is said to be 
\emph{secure with respect to the transitive policy $\nintrel$},
when, for all $\alpha\in \Actions^*$ and domains $u\in
\Dom$, we have $\obs_u(s_0\cdot \alpha) = \obs_u(s_0\cdot
\purge_u(\alpha))$. That is, each agent's observations are as if only
interfering actions had been performed.  An equivalent formulation 
(which we state more generally for policies that are not necessarily 
transitive, in anticipation of later discussion) is
the following:

\medskip

%ron4: adding space around def
\begin{definition}[\psecty]
  A system $M$ is \emph{\psec} with respect to a policy $\nintrel$
  if for all sequences $\alpha, \alpha'\in \Actions^*$ such that
  $\purge_u(\alpha) = \purge_u(\alpha')$, we have $\obs_u(s_0\cdot
  \alpha) = \obs_u(s_0\cdot \alpha')$. 
\end{definition} 

\medskip

This can be understood as saying
that agent $u$'s observation depends only on the sequence of
interfering actions that have been performed. 

\subsection{The Intransitive Purge Function}

%ron4: rescale 
\begin{figure} 
\centerline{\includegraphics[width=8.5cm] {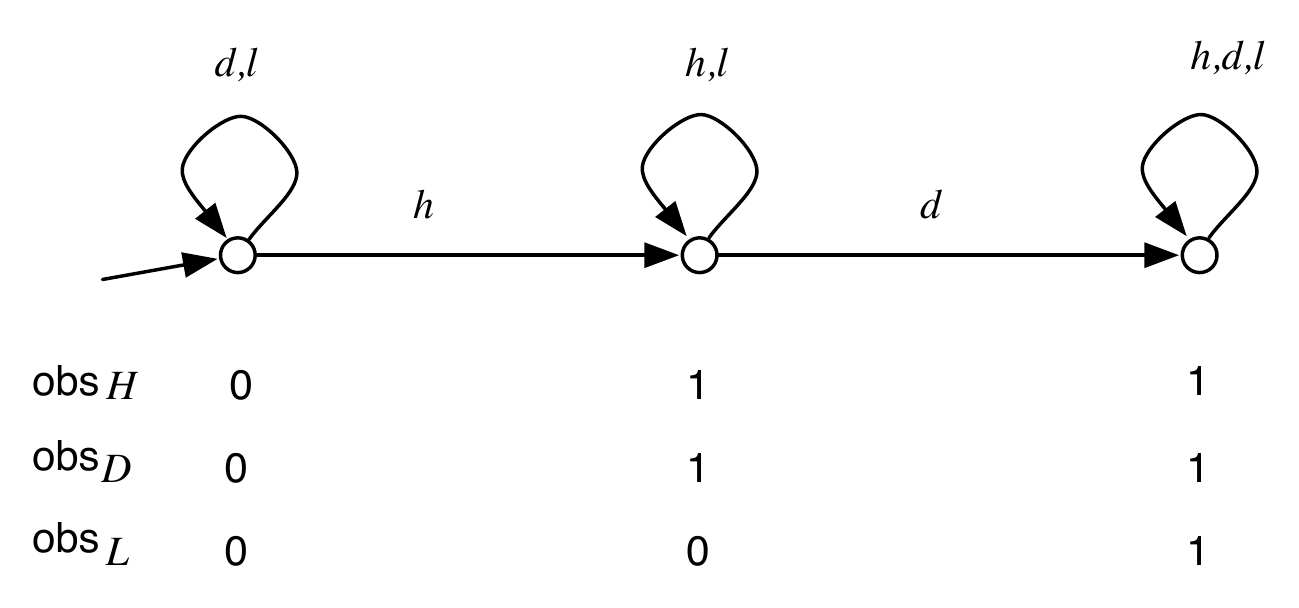}} 
\caption{A system that is \tosec\ but not \psec\label{fig:notpsec}}
\end{figure} 

%ron_3: + 
While \psecty\ is a reasonable definition of security for transitive information flow policies, 
it works less well for intransitive policies. Figure~\ref{fig:notpsec} illustrates a 
system that is, intuitively, secure for the downgrader policy $H\nintrel D\nintrel L$, but which does not satisfy 
\psecty. Here $h,d,l$ are actions of domains $H,D,L$, respectively, and the observations in each domain 
are depicted below the states. 
 Intuitively, the observations convey a single bit of information: ``has $H$ ever performed the action $h$?''.   
Domains $H$ and $D$ learn that $H$ has performed $h$ as soon as this action is performed
(by their observations turning to value $1$), but $L$ does not learn this until after $D$ subsequently 
performs the downgrading action $d$. Since the policy permits $D$ to transmit 
information about $H$, the system is secure. 
However, this system does not satisfy \psecty, since we have $\purge_L(hdl) = dl = \purge_L(dl)$ but 
%ron4: 
%$\obs_L(hdl) = 1 \neq 0 = \purge_L(dl)$.
$\obs_L(s_0\cdot hdl) = 1 \neq 0 = \obs_L(s_0\cdot \purge_L(dl))$.
 Intuitively, \psecty\ says that $L$ observations
depend only on what $D$ and  $L$  actions have been performed, so cannot
contain information about $H$, even though the policy, intuitively, permits $D$ to 
transmit such information. 

To address this deficiency,
%end _ron3+  
Haigh and Young~\cite{HY87} generalized the definition of the purge function to 
intransitive policies.
%ron4: as follows. 
Intuitively, 
 the intransitive purge of a 
sequence of actions with respect to a domain $u$ is the largest subsequence
of actions that could form part of a causal chain of effects
(permitted by the policy) ending with an effect on domain $u$. 
More formally
%ron4: 
(we follow the presentation from \cite{rushby92}),
the
definition makes use of a function $\sources\colon  \Actions^*\times \Dom
\rightarrow \pow{\Dom}$ defined inductively by $\sources(\epsilon,u) =
\{u\}$ and, for $a\in \Actions$ and $\alpha\in \Actions^*$, if there exists $v \in \sources(\alpha,u)$ with $\dom(a) \nintrel v$, then 
\begin{align*}
  \sources(a \alpha,u) 
  = \sources(\alpha,u) \cup \{\dom(a)\} \enspace,
\end{align*}
and else
\begin{align*}
  \sources(a \alpha,u) = \sources(\alpha,u) \enspace.
\end{align*}
Intuitively, $\sources(\alpha,u)$ is the set of domains $v$
such that there exists a sequence of permitted interferences 
from $v$ to $u$ within $\alpha$. The {\em intransitive purge} 
function $\ipurge\colon  \Actions^* \times \Dom \rightarrow \Actions^*$ 
is then defined inductively by 
$\ipurge(\epsilon,u) = \epsilon$ and, for $a\in \Actions$ and $\alpha\in \Actions^*$, if $\dom(a) \in \sources(a\alpha,u)$, then
\begin{align*}
\ipurge(a \alpha,u) = a \, \ipurge(\alpha,u) \enspace,
\end{align*}
and else
\begin{align*}
 \ipurge(a \alpha,u) = \ipurge(\alpha,u) \enspace.
\end{align*}

The intransitive purge function is then used in place of the purge function in 
Haigh and Young's definition: 

\medskip

%ron4: adding space around def
\begin{definition}[\ipsecty] 
A system $M$ is \ipsec\ with respect to a (possibly intransitive) policy $\nintrel$
if for all sequences $\alpha\in \Actions^*$, and $u\in D$, we have 
$\obs_u(s_0\cdot \alpha) = \obs_u(s_0\cdot \ipurge_u(\alpha))$.
\end{definition} 

\medskip

Since the function $\ipurge_u$ on $\Actions^*$ is idempotent, 
this definition, like the definition for the transitive case, 
can be formulated as: 
$M$ is \ipsec\ with respect to a policy $\nintrel$ if
for all $u\in D$ and all sequences $\alpha, \alpha'\in \Actions^*$
with $\ipurge_u(\alpha) = \ipurge_u(\alpha')$, we have 
$\obs_u(s_0\cdot \alpha) = \obs_u(s_0\cdot \alpha')$.
It can be seen that $\ipurge_u(\alpha) = \purge_u(\alpha)$ when 
$\nintrel$ is transitive, so \ipsecty\ is in fact a generalisation of the 
definition of security for transitive policies. 

\subsection{The \ta\ Function}

%ron4: rescale 
\begin{figure} 
\centerline{\includegraphics[width=8.5cm]{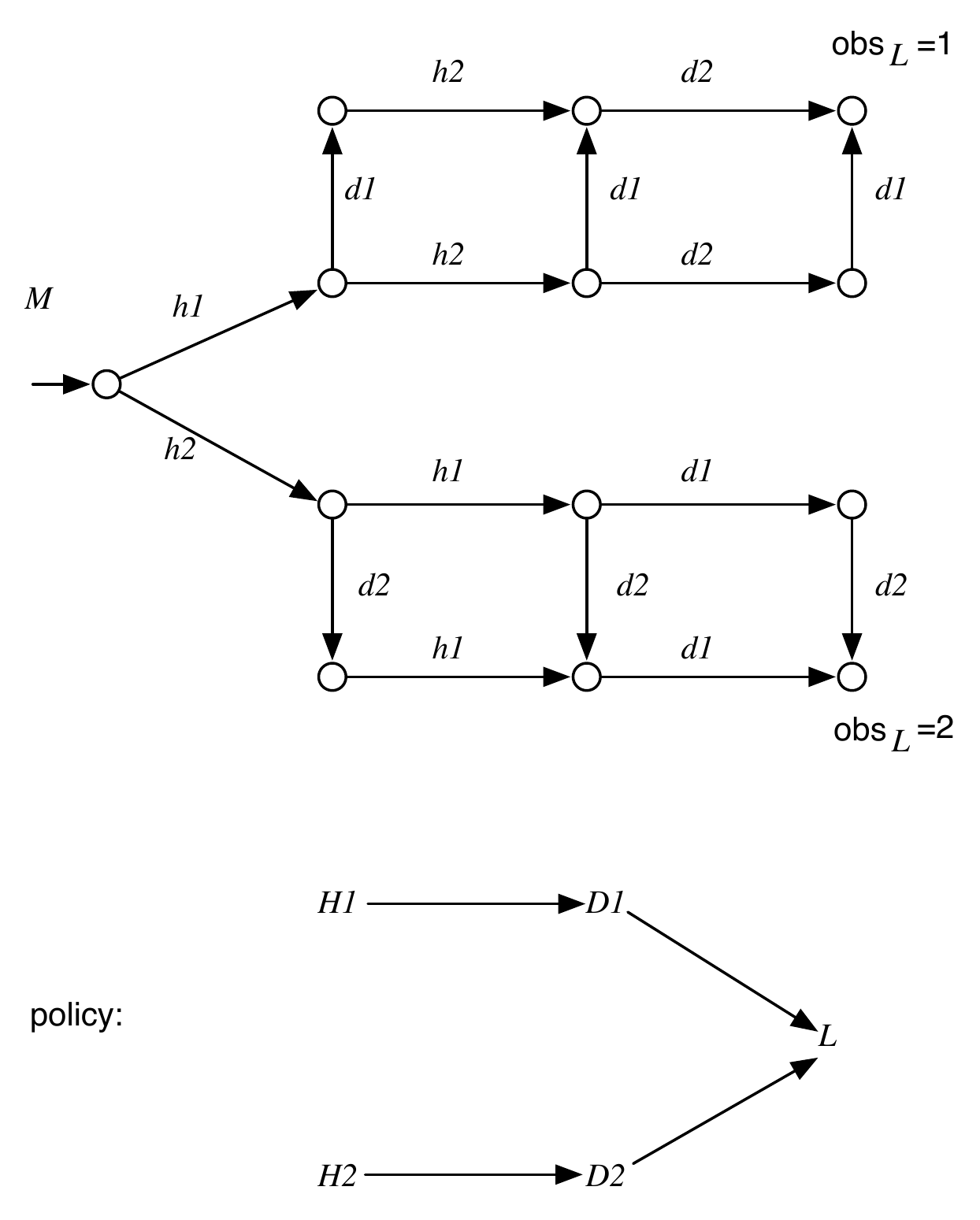}}
\caption{A system that is IP-secure but not TA-secure\label{fig:not-tasec}} 
\end{figure} 

It has been noted by van der Meyden \cite{meyden2007} that
\ipsecty\ classifies some systems as secure where there is, intuitively, 
an insecure flow of information that relates to a domain learning ordering
information about the actions of other domains that it should not have.  

%ron3: + 
Figure~\ref{fig:not-tasec} depicts 
%ron4: 
part of 
a system $M$ and a policy $\nintrel$ 
such that $M$ is \ipsec, but for which the conclusion that the system is secure is questionable. 
We sketch the argument for this here, and refer the reader to \cite{meyden2007} for a more
rigorous presentation.  Intuitively, the system is comprised of two High security level domains $H1,H2$, each 
with a downgrader 
%ron4: ($D1,D2$)
($D1,D2$, respectively)
 to the Low security domains $L$. 
The actions $h1,h2,d1,d2$ are associated to the domains $H1,H2,D1,D2$, respectively, and
state transitions are depicted only when there is a change of state. The observations of 
$L$ are depicted at two  of the states; at all other states we assume that $L$ makes
observation $0$. All other agents may be assumed to make observation $0$ at all states. 
Intutively, at 
%ron4: one of the states where  $L$'s observation has been depicted, $L$ observes 
the state where  $L$ observes  $1$, it is possible for $L$ to deduce that 
there has been an occurrence of $h1$ followed by an occurrence of $h2$; 
%ron4: at the other,  $L$ observes 
the state where  $L$ observes  $2$, it is possible for $L$ to deduce
that these actions have occurred in the opposite order. 

We show that this system is \ipsec:
Suppose we have $\ipurge_L(\alpha) = \ipurge_L(\beta)$, 
and one of $\obs_L(s_0\cdot \alpha)$ or   $\obs_L(s_0\cdot \beta)$ is 
$1$ or $2$, say the former is equal to 1.  Then this sequence must contain
an occurrence of $h1$ before an occurrence of $h2$, and each is followed
by $d_1$ and $d_2$, respectively. This observation shows, in fact, that 
$L$ {\em knows} the order of  the first $h_1$ and $h_2$ actions in the sequence $\alpha$.  
Because $\ipurge_L$ preserves $h1$ when it is followed by $d1$, 
and similarly for $h2$ and $d2$, and also preserves the order of 
actions that it retains,  the same statement must hold for $\beta$, 
and it then follows that also $\obs_L(s_0\cdot \beta) =1= \obs_L(s_0\cdot \alpha)$. If neither observation is in $1,2$, then both are
equal to $0$, and again we have the required equality of observations. 

On the other hand, the conclusion that the system is secure is somewhat peculiar. 
Each of the downgraders is individually permitted by the policy to know only about 
activity in its associated High level domain, and its own activity. Thus, individually, 
neither $D1$ nor $D2$ can know the order of the first  two $H1$ and $H2$ actions. 
Moreover, since the system is asynchronous, even if we were to combine all 
the information that the downgraders are permitted to know, 
we would still not be able to deduce the order on the $H1,H2$ actions.  
 We therefore have the peculiar conclusion that the system is classified
 by \ipsecty\ to be secure, but it allows $L$ to learn information 
 that would not be permitted to be known to the two domains $D1,D2$, 
 which are 
 supposed to filter all flow of information from $H1,H2$,  even if these 
 domains were to combine their information. 
 %ron3: end + 

To address this
%ron3
peculiarity, 
van der Meyden has proposed some other interpretations
of intransitive policies.   Both proceed by first defining a concrete operational model of the 
maximal amount of information that an agent is permitted to 
have after some sequence of actions has been performed. 
Security of the system is then defined by  requiring 
that an agent's observation may not contain more than this maximal amount of information. 

In the first operational model, 
when an agent performs an action, it transmits what it is permitted to 
know to other agents, subject to constraints in the policy. The following
definition expresses
%ron3: 
this 
 in a weaker way than the ipurge function.

Given sets $X$ and $\Actions$, let the set  
$\hset(X,\Actions)$ 
%of {\em hierarchical triples over $\Actions^*$}
be the smallest set containing $X$  
and such that if $x,y\in \hset$ and $z\in \Actions$  then $(x,y,z) \in \hset$. 
Intuitively, the elements of $\hset(X,\Actions)$ are binary trees with 
leaves labelled from $X$ and interior nodes labelled from $\Actions$. 

Given a policy $\nintrel$, define, for each agent $u\in \Dom$,  the function $\ta_u\colon  \Actions^* \rightarrow 
\hset(\{\epsilon\},\Actions)$ inductively by $\ta_u(\epsilon) = \epsilon$, 
and, for $\alpha\in \Actions^*$ and $a\in \Actions$, 
\begin{equation*}
\ta_u(\alpha a)  =  
\begin{cases}
  \ta_u(\alpha) & \text{if $\dom(a) \not\nintrel u$,} \\
  (\ta_u(\alpha), \ta_{\dom(a)}(\alpha), a) & \text{otherwise.}
\end{cases}
\end{equation*}
Intuitively, $\ta_u(\alpha)$ captures the maximal information that agent $u$ may, 
consistently with the policy $\nintrel$, have 
about the past actions of other agents. 
%ron2:(The nomenclature is intended to be suggestive 
%of {\em transmission} of information about {\em actions}.)
Initially, an agent has 
%ron3: 
no
information about what actions have been performed. 
The recursive clause describes how the maximal information $\ta_u(\alpha)$ 
permitted to flow to $u$ after the performance of $\alpha$ changes when the 
next action $a$ is performed. If $a$ may not interfere with $u$, then there 
is no change, otherwise, $u$'s maximal permitted information is 
increased by adding the maximal information permitted to $\dom(a)$
at the time $a$ is performed 
(represented by $\ta_{\dom(a)}(\alpha)$), 
as well the fact that $a$ has been performed. Thus, this 
definition captures the intuition that an agent may only transmit information that it 
is permitted to have, and then only to agents with which it is permitted to 
interfere.

\medskip

%ron4: space
\begin{definition}[\tasecty]
A system $M$ is  \tasec\ with respect to a policy $\nintrel$ if
for all agents $u$ and all $\alpha,\alpha'\in \Actions^*$ such that $\ta_u(\alpha) = \ta_u(\alpha')$, 
we have $\obs_u(s_0\cdot \alpha) = \obs_u(s_0\cdot \alpha')$.
\end{definition} 

\medskip

Intuitively, this says that each agent's observations provide the 
agent with no more than 
the maximal amount of information that may have been transmitted to it, 
as expressed by the functions $\ta$.

\subsection{The \pt\ Function}

In the definition of TA-security, the operational model of information flow 
given by the function $\ta$ permits a domain to transmit  information  
that it {\em may} have, even if it has never observed anything 
from which it could deduce that information. Arguably, 
this is too liberal. 

%ron3: + 
%ron4: rescale
\begin{figure} 
\centerline{\includegraphics[width=8.5cm] {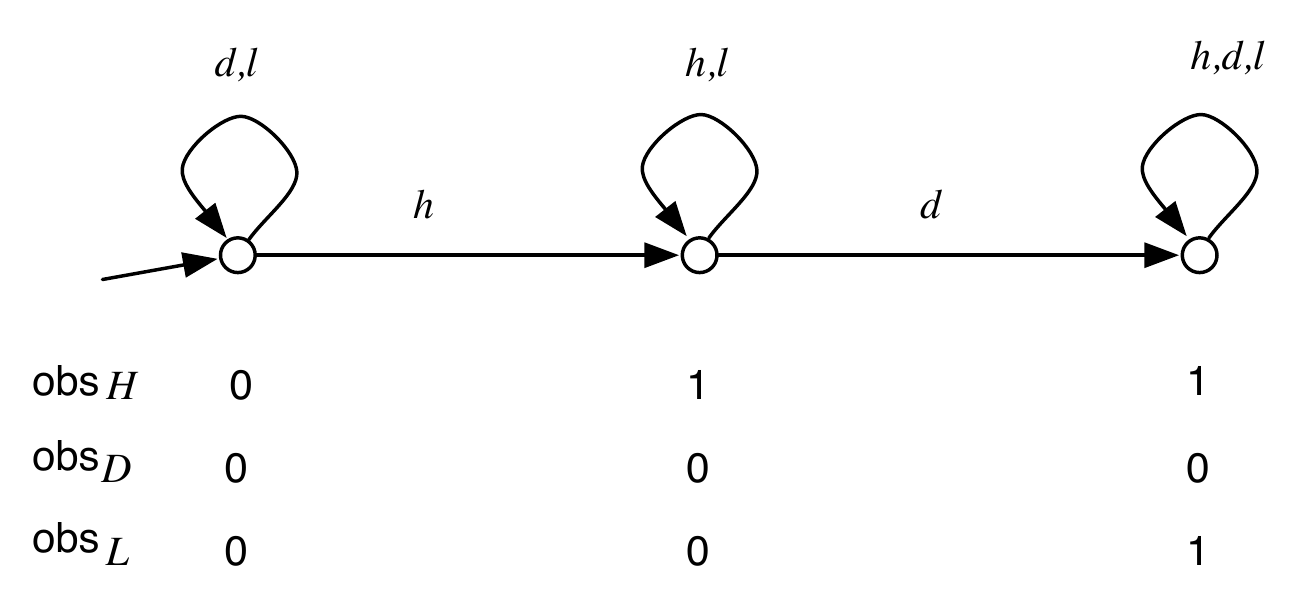}} 
\caption{A system that is \tasec\ but 
%ron8: not \tosec\label{fig:nottosec}}
neither \tosec\ not \itosec\label{fig:nottosec}}
\end{figure} 

Figure~\ref{fig:nottosec} shows a system
for the downgrader policy $H \nintrel D\nintrel L$, 
similar to that in  Figure~\ref{fig:notpsec}. 
It can be argued that the system is \tasec; we leave the 
details to the reader. 
Again, when $L$ observes  $1$, it can deduce that 
$H$ has performed the action $h$, and indeed, this  observation can 
only occur after $D$ has performed the action $d$, thereby 
downgrading the information about $H$. 
On the other hand, note that in this system, $D$'s 
observation is always $0$, so $D$ cannot know, on the
basis of its observations, whether $H$ has performed $h$. 
$D$ is therefore transmitting to $L$ information that it 
does not itself have. 
%ron3: end + 

Van der Meyden \cite{meyden2007} therefore also considers
a variant operational model in which a domain transmits
only what it has actually observed. This yields the alternate notion of 
\tosecty. 

The sequence of all observations and actions of a domain is denoted as
its view. Formally, the notion of \emph{view} is defined as follows. 
The definition uses an absorptive
concatenation function $\acat$, defined over a set $X$ by
$s\acat x = s$ if $x$ is equal to the final
element of $s$ (if any), and $s\acat x = s\cdot x$ (ordinary
concatenation) otherwise, for every $s\in
X^*$ and $x\in X$.  Define the view of domain $u$ with respect
to a sequence $\alpha\in \Actions^*$ using the function $\view_u\colon
\Actions^*
\rightarrow (\Actions \cup \Observations)^*$
(where $\Observations$ is the set of observations in the system) defined by 
\begin{align*}
\view_u(\epsilon) & = \obs_u(s_0)\text{, and}\\
 \view_u(\alpha a) &= (\view_u(\alpha) \cdot b
)\acat \obs_u(s_0\cdot \alpha) \enspace,
\end{align*}
 where $b =a$ if $\dom(a) =u$ and $b =
\epsilon$ otherwise.  That is, $\view_u(\alpha)$ is the sequence of
all observations and actions of domain $u$ in the run generated by
$\alpha$, compressed by the elimination of stuttering observations.
Intuitively, $\view_u(\alpha)$ is the complete record of information
available to agent $u$ in the run generated by the sequence of actions
$\alpha$. The reason we apply the 
absorptive 
concatenation is to capture that the
system is asynchronous, with agents not having access to a global
clock. 
%ron3: relate to stuttering, as suggested by reviewer
The effect of this operation is to reduce any stuttering 
of an observation in the run to a single copy. 
Thus, two sequences that only differ from each other in repetitions of
a single observation are not distinguishable by the agent.

Given a policy $\nintrel$, for each domain $u\in \Dom$, define the function 
$\pt_u\colon \Actions^*\rightarrow \hset((\Actions\cup \Observations)^*,\Actions)$
by $\pt_u(\epsilon) =\obs_u(s_0)$ and
\begin{equation*}
\pt_u(\alpha a) \\
= 
\begin{cases}
\pt_u(\alpha) & \text{if $\dom(a) \not \nintrel u$,}\\ 
  (\pt_u(\alpha), \view_{\dom(a)}(\alpha), a) &  \text{otherwise.}
\end{cases}
\end{equation*}
Intuitively, this definition takes the model of the maximal information 
that an action $a$ may transmit after the sequence $\alpha$ to be the fact that $a$ has occurred, 
together with the information that $\dom(a)$  {\em actually} has, as represented by 
its view $\view_{\dom(a)}(\alpha)$.  By  contrast, \tasecty\ uses in place of this the maximal information 
that $\dom(a)$ {\em may} have. 
%(The nomenclature is intended to be 
%suggestive of {\em transmission} of information about {\em observations}. )
We may now base the definition of security on the 
function $\pt$ rather than $\ta$.

\medskip

%ron4: space 
\begin{definition}[\tosecty]
The system $M$ is \tosec\ with respect to $\nintrel$ if
for all domains $u\in D$ and all  $\alpha, \alpha'\in \Actions^*$ with $\pt_u(\alpha) = \pt_u(\alpha')$, 
we have $\obs_u(s_0\cdot\alpha) = \obs_u(s_0\cdot \alpha')$.
\end{definition} 

\medskip

It is possible to give a flatter representation of the information in
$\pt_u(\alpha)$ 
that clarifies the relationship of this definition to \psecty. 
Define the \emph{possibly transmitted view} of domain $u$ for a sequence of
actions $\alpha$ to be the largest prefix $\tview_u(\alpha)$ of
$\view_u(\alpha)$ that ends in an action $a$ with $\dom(a) =u$.
Then we have the following result, 
which intuitively says that $u$'s observations depend only on 
(1) the parts of the views of other agents which are permitted to 
pass information to $u$ that they have actually acted to transmit, and 
(2) $u$'s knowledge of the ordering of its own actions and the actions 
of these other agents. 

\medskip

%ron4: space 
\begin{proposition} \emph{(Characterization of \tosecty\ \cite{meyden2007})} %TW dirty
  \label{prop:ptflat}
The system $M$ is \tosec\ with respect to a policy
$\nintrel$ iff for all sequences $\alpha,\alpha' \in \Actions^*$, and
domains $u\in \Dom$, if 
$\purge_u(\alpha) = \purge_u(\alpha')$ 
and $\tview_v(\alpha) = \tview_v(\alpha')$ for all domains $v\neq u$
such that $v\nintrel u$, then $\obs_u(s_0\cdot \alpha) =
\obs_u(s_0\cdot\alpha')$.
\end{proposition}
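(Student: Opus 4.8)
The plan is to reduce the proposition to a purely structural re-encoding of $\pt_u$, and then to isolate the single place where \tosecty\ is genuinely needed. First I would prove, by induction on $|\alpha|$, that for every $u$ the tree $\pt_u(\alpha)$ carries exactly the same information as the pair consisting of $\purge_u(\alpha)$ together with the values $\tview_v(\alpha)$ for \emph{all} $v$ with $v\nintrel u$, including $v=u$. Unwinding the recursive clause shows that $\pt_u(\alpha)$ is a left-leaning tree whose spine records, in order, one node $(\cdot,\view_{\dom(a)}(\beta),a)$ for each occurrence in $\alpha$ of an action $a$ with $\dom(a)\nintrel u$, where $\beta$ is the prefix of $\alpha$ preceding that occurrence. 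Reading off the third components returns $\purge_u(\alpha)$. For fixed $v\nintrel u$ every $v$-action is such an interfering occurrence, so the stored views are precisely $v$'s views immediately before each of its actions; the view before $v$'s \emph{last} action, extended by that action, equals $\tview_v(\alpha)$, and the views before earlier $v$-actions are the prefixes of $\tview_v(\alpha)$ cut at the $v$-action positions, which are fixed by $\purge_u(\alpha)$. Hence $\pt_u(\alpha)$ and $\bigl(\purge_u(\alpha),(\tview_v(\alpha))_{v\nintrel u}\bigr)$ are mutually recoverable, so $\pt_u(\alpha)=\pt_u(\alpha')$ holds iff $\purge_u(\alpha)=\purge_u(\alpha')$ and $\tview_v(\alpha)=\tview_v(\alpha')$ for \emph{all} $v\nintrel u$.

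One direction of the proposition is then immediate. Assuming the stated characterization, take $\alpha,\alpha',u$ with $\pt_u(\alpha)=\pt_u(\alpha')$; the structural lemma gives $\purge_u(\alpha)=\purge_u(\alpha')$ and, a fortiori for $v\neq u$, the equalities $\tview_v(\alpha)=\tview_v(\alpha')$, so the characterization yields $\obs_u(s_0\cdot\alpha)=\obs_u(s_0\cdot\alpha')$ and $M$ is \tosec.

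The substance of the proposition is the converse, where the only gap between the characterization's hypothesis and $\pt_u$-equality is the omitted term $v=u$. So the heart of the argument is to show that, \emph{for a \tosec\ system}, the hypotheses $\purge_u(\alpha)=\purge_u(\alpha')$ and $\tview_v(\alpha)=\tview_v(\alpha')$ for all $v\neq u$ with $v\nintrel u$ already force $\tview_u(\alpha)=\tview_u(\alpha')$; the structural lemma then gives $\pt_u(\alpha)=\pt_u(\alpha')$, and \tosecty\ delivers $\obs_u(s_0\cdot\alpha)=\obs_u(s_0\cdot\alpha')$. The security assumption is essential here: in general $u$'s observation function may separate states that nothing transmitted to $u$ accounts for, so $\tview_u$ is not recoverable from the remaining data without it. I would establish $\tview_u(\alpha)=\tview_u(\alpha')$ by induction on the number of $u$-actions. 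The $u$-actions themselves agree since they occur in $\purge_u$; each observation of $u$ recorded in its view is $\obs_u(s_0\cdot q)$ for a suitable prefix $q$ of $\alpha$, and applying \tosecty\ to $q$ and its counterpart $q'$ reduces matching that observation to proving $\pt_u(q)=\pt_u(q')$, whose $v=u$ component is a strictly shorter transmitted view supplied by the induction hypothesis and whose other components come from restricting the global hypotheses to $q,q'$.

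The main obstacle is making this prefix induction rigorous. Two points require care. First, one must align the prefixes $q$ of $\alpha$ with prefixes $q'$ of $\alpha'$: the natural alignment is at corresponding interfering actions, but actions of domains $w\not\nintrel u$ are interleaved freely and the absorptive concatenation $\acat$ compresses stuttering observations, so the alignment cannot be by raw length. Second, one must verify a prefix-restriction property, namely that $\purge_u(\alpha)=\purge_u(\alpha')$ and $\tview_v(\alpha)=\tview_v(\alpha')$ descend to the aligned prefixes for the relevant $v$. I expect both to follow from tracking the step-by-step evolution of $\purge_u$ and of each $\tview_v$, using that $\acat$ merges exactly the stuttering the asynchronous model is meant to hide; this bookkeeping, rather than any single idea, is where the real work lies.
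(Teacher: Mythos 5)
Your proposal is essentially correct, but note first that the paper never proves Proposition~\ref{prop:ptflat} itself: it is imported from \cite{meyden2007}, and the only proof of this shape actually carried out in the paper is the appendix's characterization of \itosecty\ (Proposition~\ref{prop:charn-ito}). Measured against that template, your decomposition is the same in spirit---flatten the tree into $\purge_u(\alpha)$ plus stored views, and recover each stored view from $\tview_v(\alpha)$ by cutting at the action positions dictated by the purge---but your treatment of the crux is genuinely different, and in fact more careful. The appendix proof establishes ``hypotheses imply $\ito_u(\alpha)=\ito_u(\beta)$'' by induction on combined length \emph{without} using security inside the induction, and in the case $\dom(a)=u$ it silently writes $\view_u(\alpha')=\view_u(\beta')$, which the hypotheses do not supply: with two domains, $H\not\nintrel L$, and $\obs_L$ the parity of the number of $H$-actions performed, the sequences $l$ and $hl$ satisfy the hypotheses for $u=L$, yet the trees store the distinct $L$-views $0$ and $01$ at their final nodes. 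Your proposal isolates exactly this point: the $v=u$ component of $\pt_u$ is the one piece of data absent from the proposition's hypothesis, it is \emph{not} recoverable unconditionally, and it must be rebuilt by a bootstrapping induction that invokes \tosecty\ on aligned prefixes. I checked that your plan closes. Aligning prefixes at corresponding occurrences of interfering actions (the positions of $\purge_u$), the purge and $\tview_v$ hypotheses do restrict to aligned prefixes $q,q'$; applying \tosecty\ to a pair $(q,qa)$ with $\dom(a)\not\nintrel u$ shows that $u$'s observation can change only at interfering actions, which is what makes the stutter-compressed views comparable despite the free interleaving of non-interfering actions; and the induction (most cleanly run over the number of interfering actions, which subsumes your induction on $u$-actions) yields $\pt_u(q_i)=\pt_u(q'_i)$ at every aligned prefix, hence $\tview_u(\alpha)=\tview_u(\alpha')$, hence $\pt_u(\alpha)=\pt_u(\alpha')$ and the desired equality of observations. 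So your route buys an argument that is sound precisely where the paper's in-house analogue glosses over the difficulty; what it costs is the alignment bookkeeping, which you flagged honestly and which does go through.
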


\subsection{The \ito\ Function}

In order to compare with a definition of Roscoe and Goldsmith \cite{RoscoeG99}, 
van der Meyden has also introduced a variant of \tosecty\ called \itosecty, in which 
information is transmitted slightly faster. We also consider this 
notion here since our complexity results bear on algorithmic claims of Roscoe and 
Goldsmith. 

Given a policy $\nintrel$, for each domain $u\in D$, define the function 
$\ito_u:\Actions^*\rightarrow \hset(O(A\cup O)^*,A)$
by $\ito_u(\epsilon) =\obs_u(s_0)$ and
\[
\ito_u(\alpha a) =\left\{
\begin{array}{ll}
\ito_u(\alpha) & \mbox{if $\dom(a) \not \nintrel u$,}\\ 
  (\ito_u(\alpha), \view_{\dom(a)}(\alpha), a) &  \mbox{if $\dom(a) = u$,}\\
  (\ito_u(\alpha), \view_{\dom(a)}(\alpha a), a) &  \mbox{otherwise.}
  \end{array}\right.\] 
This definition is just like that of $\pt$, with the difference that
the information that may be transmitted to $u$ by an action $a$ 
such that $\dom(a) \nintrel u$ but $\dom(a) \neq u$, includes the
observation $\obs_{\dom(a)}(s_0\cdot \alpha a)$ obtained in domain 
$\dom(a)$ immediately after the occurrence of action $a$. 
Intuitively, the definition of security based on this notion will
allow that the action $a$ transmits not just the information
observable to $\dom(a)$ at the time that it is invoked, but also the
new information that it computes and makes observable in $\dom(a)$.
This information is not included in the value
$\ito_{\dom(a)}(\alpha a)$ itself, since the definition of security
will state that the  new observation may depend only on this value.
The nomenclature in this case is intended to be suggestive of {\em
immediate} transmission of information about observations.

The following definition follows the pattern of the
others, but based is on the functions $\ito$. 
\begin{definition} 
The system $M$ is \itosec\ with respect to $\nintrel$ if
for all domains $u\in D$ and all  $\alpha, \alpha'\in \Actions^*$ with 
$\ito_u(\alpha) = \ito_u(\alpha')$, 
we have $\obs_u(s_0\cdot\alpha) = \obs_u(s_0\cdot \alpha')$. 
\end{definition} 

\medskip

Figure~\ref{fig:itonottosec} gives an example of a system, for the downgrader 
policy $H \nintrel D\nintrel L$, that is ITO-secure, but not TO-secure. 
Intuitively, in the action sequence $hd$, the downgrader
learns that $h$ has been performed (by making observation $1$)
from the observation that it makes after performing the action $d$. 
ITO-security permits that the information in this observation is transmitted
to $L$ by the action $d$, whereas TO-security does not. 

\begin{figure} 
\centerline{\includegraphics[width=8.5cm] {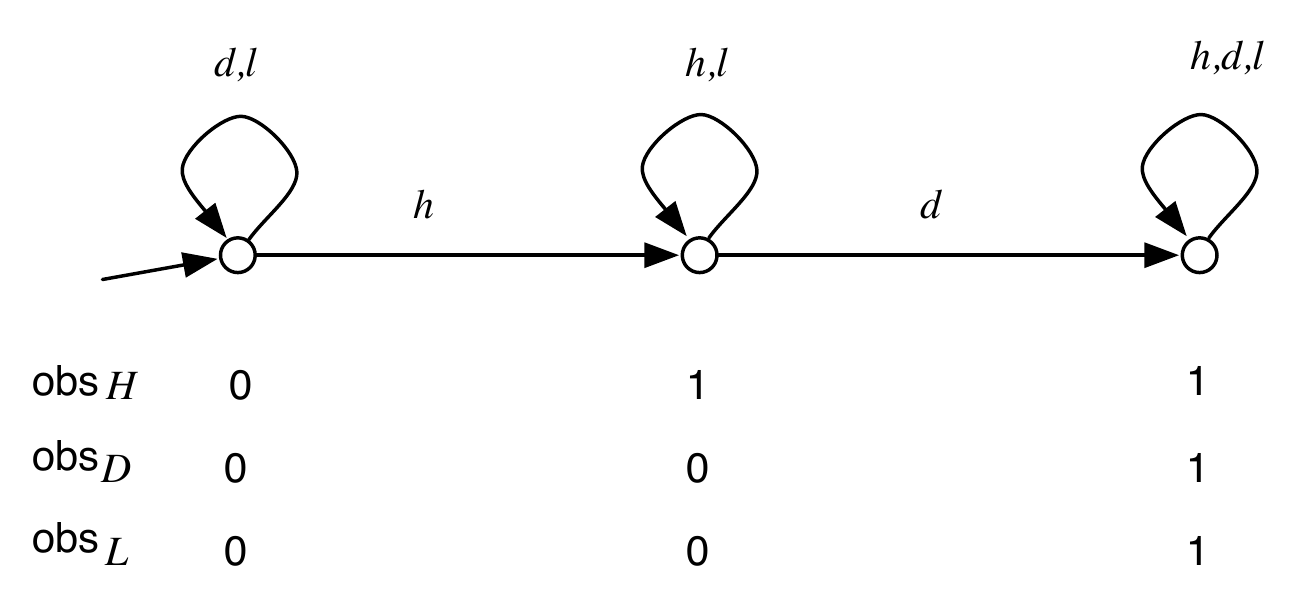}} 
\caption{A system that is \itosec\ but not \tosec\label{fig:itonottosec}}
\end{figure}

%ron3:+ 
The definitions introduced above are
shown in  \cite{meyden2007} to be related as
follows: \psecty\ implies \tosecty\ implies 
%ron8: 
\itosecty\ implies 
\tasecty\ implies \ipsecty.
The converse of each of these implications does not hold: 
%ron8: Figures~\ref{fig:notpsec}-\ref{fig:nottosec} 
Figures~\ref{fig:notpsec}-\ref{fig:itonottosec} 
provide counter-examples. 
In the special case of transitive policies  $\nintrel$, all these 
notions are equivalent.

\section{Characterization of \ipsecty\ and \tasecty}
\label{sec:charn} 

%ron8: add some intro
In this section, we develop new characterizations of \ipsecty\ and \tasecty, 
that enable the new unwindings and decision procedures for these notions 
of security.

\subsection{Characterization of \ipsecty}

We present a new characterization of \ipsecty. This characterization
is the main tool for our later algorithm that verifies \ipsecty\ in polynomial time. 

Intuitively, the ipurge function that defines \ipsecty\ removes actions that should be irrelevant for
the domain $u$ from its ``visible trace.''  This leads us to the definition of the relation $\RedIrr_u$: 
for $u \in \Dom$ and $\alpha,\alpha'\in \Actions^*$, we define $\alpha \RedIrr_u \alpha'$ if 
$\ipurge_u(\alpha ) = \ipurge_u(\alpha')$ and 
there exist $\beta, \beta' \in \Actions^*$, and $a \in \Actions$ such that 
$\alpha = \beta a \beta'$ and $\alpha' = \beta \beta'$. 
That is, $\alpha \RedIrr_u \alpha'$ if $\alpha'$ is
obtained from $\alpha$ by removing a single action that is ``irrelevant'' in
the sense that (according to the information flow allowed by the policy) $u$
should not be able to observe whether the removed action has occurred at all.
%ron2: With $\SymRedIrr$ the symmetric closure of $\RedIrr$ is denoted. 
The symmetric closure of $\RedIrr$ is denoted with $\SymRedIrr$.

Note that if $\ipurge_u(\alpha) = \ipurge_u(\alpha')$, 
then there exists a sequence $\alpha = \alpha_0 \SymRedIrr \alpha_1\SymRedIrr \ldots   
\SymRedIrr \alpha_n = \alpha'$. If $\obs_u(s_0 \cdot  \alpha) \neq \obs_u(s_0 \cdot
\alpha')$, then we must have   $\obs_u(s_0 \cdot  \alpha_k) \neq \obs_u(s_0 \cdot
\alpha_{k+1})$ for some $k$. Thus, 
% i.e., this ...
directly from the definition of \ipsecty, 
we obtain
that a system is IP-insecure iff
there exists a domain $u \in \Dom$, a reachable state $q \in \States$, $a \in \Actions$ and
$\alpha \in \Actions^*$ such that $\ipurge_u(a \alpha) =
\ipurge_u(\alpha)$ and $\obs_u(q \cdot a \alpha) \neq \obs_u(q \cdot
\alpha)$. 
We now state a lemma that shows that we can put some restrictions on
$\alpha$. 
The lemma shows that if a system is not \ipsec, then an $\alpha$ and $a$ as above
exist such that additionally, $\alpha$ does not contain any action $c$ whose domain $\dom(c)$ 
can be influenced by $\dom(a)$. This allows us to reduce the search space for a witness of insecurity significantly when designing our algorithms.

 In the following, we will always assume that every state $s\in S$ is reachable,
i.e., there is a sequence $\alpha\in\Actions^*$ such that $s_0\cdot\alpha=s$.

\medskip

\begin{lemma}
\label{lemma:char-ipsecty}
  A system $M$ is IP-insecure iff there exist $u \in \Dom$, $q \in \States$, $a \in \Actions$ and
$\alpha \in \Actions^*$ such that 
\begin{asparaenum}[(i)]
\item $\ipurge_u(a \alpha) = \ipurge_u(\alpha)$,
\item $\obs_u(q \cdot a \alpha) \neq \obs_u(q \cdot \alpha)$, and
\item $\nintrelimage{\dom(a)} \cap \{ \dom(c) | c \in \Alphabet{\alpha} \} =
  \emptyset$.
\end{asparaenum}
\end{lemma}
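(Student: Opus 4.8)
The plan is as follows. The ``only if'' direction is the substantive one; the ``if'' direction is essentially immediate from the characterization of IP-insecurity stated just before the lemma, since conditions (i) and (ii) alone already witness insecurity (condition (iii) is an additional restriction we grant ourselves). So I would focus on showing that whenever a witness $(u,q,a,\alpha)$ satisfying (i) and (ii) exists, we can find one that additionally satisfies (iii).

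The core idea is a \emph{normalization/minimization} argument: among all witnesses satisfying (i) and (ii), choose one in which $\alpha$ is as short as possible (or more precisely, choose $q$, $a$, $\alpha$ minimizing $|\alpha|$), and argue that minimality forces (iii). Suppose for contradiction that (iii) fails, so $\alpha$ contains some action $c$ with $\dom(c) \in \nintrelimage{\dom(a)}$, i.e. $\dom(a) \nintrel \dom(c)$. I would write $\alpha = \beta c \gamma$ and try to produce a strictly shorter witness by deleting $c$ and relocating things appropriately. The key observations driving this are: first, that $\dom(a)\nintrel\dom(c)$ means $a$ is ``visible'' to $u$ along any causal chain that $c$ participates in, so the presence of such a $c$ is exactly what could make the ipurge equality (i) hold for the ``wrong'' reason; and second, that we have the freedom to move the anchor state $q$, since $q$ need only be reachable, and $q\cdot(\text{prefix})$ is itself a reachable state.

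The \textbf{key steps}, in order, would be: (1) dispatch the ``if'' direction by invoking the preceding characterization. (2) For ``only if'', fix a minimal-length witness $\alpha$ satisfying (i) and (ii). (3) Assume (iii) fails and pick the relevant $c$ in $\alpha$ with $\dom(a)\nintrel\dom(c)$. (4) Establish the relevant behavior of $\ipurge_u$ and $\sources$ under deletion of $c$ and/or absorption of $a$ into the state $q$: specifically, show how the $\sources$ and $\ipurge$ computations for the modified sequence relate to the original, using the inductive definitions of $\sources$ and $\ipurge$ given earlier. (5) Construct the shorter witness and verify it still satisfies (i) and (ii), contradicting minimality.

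The \textbf{hard part} will be step (4): carefully tracking how $\ipurge_u(a\alpha)=\ipurge_u(\alpha)$ is preserved (or re-established) after removing an action $c$ with $\dom(a)\nintrel\dom(c)$, while simultaneously preserving the observational inequality (ii). The subtlety is that $\ipurge$ is defined via the $\sources$ function, and deleting an interior action $c$ can change which domains belong to $\sources$ for \emph{other} actions in $\alpha$, potentially altering whether $a$ itself survives the ipurge. I expect the crucial technical lemma to be that the hypothesis $\dom(a)\nintrel\dom(c)$, combined with $\ipurge_u(a\alpha)=\ipurge_u(\alpha)$, forces $c$ to be irrelevant in a way that lets it be removed without disturbing the $\sources$-membership of the surviving actions; the asynchronous, deterministic structure of the machine (so that $q\cdot\delta$ is well-defined and $(q\cdot\delta)$ can serve as a fresh anchor state) is what makes the relocation of the witness possible. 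I would isolate this as the one genuinely delicate case analysis and expect the rest to follow by routine application of the inductive definitions.
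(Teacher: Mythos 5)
Your proposal is correct and takes essentially the same approach as the paper: fix a witness of (i)--(ii) with $|\alpha|$ minimal, then derive a contradiction from an action $c$ in $\alpha$ with $\dom(a)\nintrel\dom(c)$, using exactly the key fact you isolate (condition (i) forces $c$ itself to be purged, since any $\sources$-chain from $\dom(c)$ to $u$ would extend to one from $\dom(a)$) together with relocation of the anchor state. The paper settles your ``hard part'' with a three-way case split on whether deleting $c$ changes the observation reached from $q\cdot\beta$, from $q\cdot a\beta$, or from neither, each case yielding a strictly shorter witness (in the first two, $c$ becomes the new distinguished action; in the third, $c$ is simply deleted, with (i) preserved by monotonicity of $\sources$ under subsequences).
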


\ifallproofs

\begin{proof}
  Let  a system $M$ be IP-insecure. 
  Then there exist
  $u \in \Dom$, $q \in \States$, $a \in \Actions$ and
$\alpha \in \Actions$ such that $\ipurge_u(a \alpha) =
\ipurge_u(\alpha)$ and $\obs_u(q \cdot a \alpha) \neq \obs_u(q \cdot
\alpha)$. 
We fix this domain $u$ and choose $q, a, \alpha$ such that $| \alpha
|$ is minimal for all choices of $q, a, \alpha$ satisfying 
$\ipurge_u(a \alpha) =
\ipurge_u(\alpha)$ and $\obs_u(q \cdot a \alpha) \neq \obs_u(q \cdot
\alpha)$. 

Assume that there exists $b \in \Actions$ such that $\alpha = \beta b
\beta'$ for some $\beta, \beta' \in \Actions^*$ with $\dom(a) \nintrel
\dom(b)$. 
We show that such an action $b$ can't exist by considering the
following three cases: 
\begin{enumerate}
\item  $\obs_u(q \cdot \beta b \beta') \neq \obs_u(q
  \cdot \beta \beta')$:
  We set $q' = q\cdot \beta$, $a' = b$ and $\alpha' = \beta'$. 
  Then the condition $\ipurge_u(b \beta') = \ipurge_u(\beta')$ is
  satisfied and we could choose $q', a', \alpha'$ instead of $q, a,
  \alpha$. This contradicts the minimal length of $\alpha$, since $|
  \alpha' | < | \alpha |$.
\item $\obs_u(q \cdot a \beta b \beta') \neq \obs_u(q
  \cdot a \beta \beta')$:
  We set $q' = q \cdot a \beta$, $a'= b$ and $\alpha' = \beta'$. 
  With the same argument as in the previous case, we get a
  contradiction.
\item $\obs_u(q \cdot \beta b \beta') = \obs_u(q
  \cdot \beta \beta')$ and $\obs_u(q \cdot a \beta b \beta') = \obs_u(q
  \cdot a \beta \beta')$: 
  This gives
  \begin{align*}
\obs_u(q
  \cdot \beta \beta')  =
\obs_u(q \cdot \beta b \beta') 
& \neq \obs_u(q \cdot a \beta b \beta') 
 & = \obs_u(q
  \cdot a \beta \beta') \enspace. 
  \end{align*}
 We set $q' = q$, $a'= b$ and $\alpha' = \beta\beta'$. 
 Again, the condition $\ipurge_u(\beta\beta') = \ipurge_u(a \beta
 \beta')$ is satisfied.
 We get a contradiction since $| \alpha'| < | \alpha |$.  
\end{enumerate}
\end{proof}\ 

In fact, with nearly the same proof, one can show that, if the length
of $\alpha$ is minimal for all choices of $q, a, \alpha$, then $\alpha
= \ipurge_u(\alpha)$. 
\fi

\subsection{Characterization of \tasecty}\label{sect:ta characterization}
We present a new characterization of \tasecty\ that also makes precise its relationship
to \ipsecty. As seen earlier, the latter is concerned with the question which actions an agent
$u$ may observe at all, hence $\ipurge_u(\alpha)$ is obtained from $\alpha$ by removing
from $\alpha$ actions that should be ``unobservable'' for $u$, provided that information
only flows as specified by the security policy. 
The definition of \tasecty\ in~\cite{meyden2007} was motivated by the observation that the 
security-relevant information that 
%possibly 
should be unobservable to some agents  is not just  which actions appear at all,
but also information about the \emph{order} in which certain actions are performed. 
This type of information-flow is not prohibited by the definition of \ipsecty.

In this section we 
show 
that what separates the definition of \ipsecty\ from that of \tasecty\ 
is  the question how much information is known about execution orders of
actions.  
\tasecty\ can essentially be seen as \ipsecty\ 
plus the requirement that an agent
should only have access to ``timing information'' (i.e., information about the order
of the occurrence of actions) insofar as permitted by the security policy.

To formalize this, we require a few technical definitions.
The following definition captures the situation in which an agent $u$ should not have information
about the order in which certain actions are performed, although 
%ron5: he 
it 
may know 
whether these actions have been performed, and how often.

\medskip

\begin{definition}[swappable]
\label{def:swappable}
Let $\alpha, \alpha' \in \Actions^*$ and $a, b \in \Actions$ and $u \in
  \Dom$. We write $\alpha a b \alpha' \Swap_u \alpha b a \alpha'$ iff
  $\nintrelimage{\dom(a)} \cap \nintrelimage{\dom(b)} \cap \{ u,
  \dom(c) | c \in \Alphabet{a b \alpha'} \}  = \emptyset$.  
%henning3: added ``in $ab\alpha$'' since this of course depends on \alpha
In this case, we call the actions $a$ and $b$ \emph{swappable} in
$\alpha ab\alpha'$. 

For any relation $\rightarrow$, we define $\stackrel = \rightarrow$ as
the reflexive closure of $\rightarrow$ and $\stackrel * \rightarrow$
as the reflexive, transitive closure of $\rightarrow$.
\end{definition}

We will see later that Definition~\ref{def:swappable} captures exactly the issue mentioned above: 
If $\alpha ab\alpha'\Swap_u\alpha ba\alpha'$, then the action sequences $\alpha ab\alpha'$ and
$\alpha ba\alpha'$ should be indistinguishable for agent $u$, even though 
%ron5: he 
it 
is allowed to know
whether actions $a$ and $b$ have been performed. The reason why, intuitively, $u$ should not have access
to this ``timing information'' is that only agents $w\in\nintrelimage{\dom(a)}\cap\nintrelimage{\dom(b)}$ can directly observe whether $a$ or $b$ is performed first. If no agent that can observe this information directly performs any action in $\alpha$, then, after performing $\alpha$, the agent $u$ should not have this information either (unless of course, $u$ is in the intersection.)
%ron5:  himself).

We call strings $\alpha, \alpha' \in \Actions^*$ \emph{order
  indistinguishable for $u$}, 
%ron5:   iff $\alpha \stackrel * \leftrightarrow^\textsl{swap}_u \alpha'$ and write $\alpha \Symrefswap_u \alpha'$. 
and write $\alpha \Symrefswap_u \alpha'$,  if $\alpha \stackrel * \leftrightarrow^\textsl{swap}_u \alpha'$.  

%ron5: this refers to omitted material
\ifallproofs
We note that the definition of swappable could be relaxed to also allow swaps of actions $a$ and $b$
when the agents that observe both $a$ and $b$ only perform actions in $\alpha$ that cannot be observed
by $u$ via the policy---however we will later only apply these definitions to action sequences to
which $\ipurge$ has already been applied. We therefore use the above definition to simplify notation.
\fi
The following lemma shows that our definition correctly captures the above intuition. It states that
information about the order of ``swappable'' operations are indeed hidden from an agent by the
definition of \tasecty.

\medskip

\begin{lemma}
\label{lemma:swap swappable elements}
 Let $u \in \Dom$,  $\alpha, \alpha' \in \Actions^*$ with $\alpha
 \Swap_u \alpha'$, then $\ta_u(\alpha) = \ta_u(\alpha')$.
\end{lemma}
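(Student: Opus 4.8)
The plan is to first unfold $\Swap_u$: by Definition~\ref{def:swappable} there are $\beta,\gamma\in\Actions^*$ and $a,b\in\Actions$ with $\alpha=\beta ab\gamma$ and $\alpha'=\beta ba\gamma$, and, writing $W=\{u\}\cup\{\dom(c)\mid c\in\Alphabet{ab\gamma}\}$,
\[
\nintrelimage{\dom(a)}\cap\nintrelimage{\dom(b)}\cap W=\emptyset .
\]
Before anything else I would record two consequences of reflexivity of $\nintrel$. Since $a$ and $b$ occur in $ab\gamma$, both $\dom(a)$ and $\dom(b)$ lie in $W$; as $\dom(a)\in\nintrelimage{\dom(a)}$ and $\dom(b)\in\nintrelimage{\dom(b)}$ by reflexivity, the displayed emptiness condition forces $\dom(a)\notin\nintrelimage{\dom(b)}$ and $\dom(b)\notin\nintrelimage{\dom(a)}$, that is, $\dom(b)\not\nintrel\dom(a)$ and $\dom(a)\not\nintrel\dom(b)$.

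The difficulty is that $\ta_u$ is defined recursively through $\ta_{\dom(c)}$ for each retained action $c$, so I cannot argue about $u$ in isolation. I would therefore strengthen the statement and prove, by induction on $|\gamma_1|$ over all prefixes $\gamma_1$ of $\gamma$, the claim that $\ta_w(\beta ab\gamma_1)=\ta_w(\beta ba\gamma_1)$ holds \emph{simultaneously for every} $w\in W$. Taking $w=u$ and $\gamma_1=\gamma$ then yields the lemma. The point of choosing $W$ this way is that it is closed under the recursion: any action $c$ consumed from $\gamma$ during the computation has $\dom(c)\in W$, so the inductive hypothesis is available for it.

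For the base case $\gamma_1=\epsilon$, fix $w\in W$. Membership $w\in W$ together with the emptiness condition rules out $\dom(a)\nintrel w$ and $\dom(b)\nintrel w$ holding simultaneously, leaving three cases. If neither domain interferes with $w$, both $\ta_w(\beta ab)$ and $\ta_w(\beta ba)$ reduce to $\ta_w(\beta)$. If exactly $\dom(a)\nintrel w$ (the remaining case being symmetric), expanding the definition gives $\ta_w(\beta ab)=(\ta_w(\beta),\ta_{\dom(a)}(\beta),a)$ and $\ta_w(\beta ba)=(\ta_w(\beta),\ta_{\dom(a)}(\beta b),a)$, so equality reduces to $\ta_{\dom(a)}(\beta)=\ta_{\dom(a)}(\beta b)$; this holds precisely because $\dom(b)\not\nintrel\dom(a)$, one of the reflexivity consequences derived above.

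For the inductive step $\gamma_1=\gamma_1'c$, fix $w\in W$ and assume the claim for $\gamma_1'$ (for all members of $W$). If $\dom(c)\not\nintrel w$, both sides equal their values at $\gamma_1'$ and the hypothesis closes the case. If $\dom(c)\nintrel w$, the two values expand to $(\ta_w(\beta ab\gamma_1'),\ta_{\dom(c)}(\beta ab\gamma_1'),c)$ and $(\ta_w(\beta ba\gamma_1'),\ta_{\dom(c)}(\beta ba\gamma_1'),c)$; the first components agree by the hypothesis at $w$, and, because $c$ occurs in $\gamma$ so that $\dom(c)\in W$, the second components agree by the hypothesis at $\dom(c)$. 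I expect the main obstacle to be exactly this closure bookkeeping---pinning down $W$ so that the recursion never leaves it, which is what legitimizes the simultaneous induction over all of $W$; once that is in place, the case analysis is routine.
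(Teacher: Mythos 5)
Your proof is correct and follows essentially the same route as the paper's: an induction on the length of the common suffix, with the induction hypothesis strengthened to hold simultaneously for a family of agents closed under the $\ta$ recursion (the paper quantifies over all $u'$ for which the swap relation holds, you over the fixed set $W$ of domains occurring in $ab\gamma$ together with $u$ --- the same idea in slightly different packaging). The base-case computation, the use of reflexivity to derive $\dom(a)\not\nintrel\dom(b)$ and $\dom(b)\not\nintrel\dom(a)$, and the inductive step invoking the hypothesis at both $w$ and $\dom(c)$ all match the paper's argument.
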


\medskip

\ifallproofs

\begin{proof}
Let be $\beta, \beta' \in \Actions^*$ and $a, b \in \Actions$ such
that
$\alpha = \beta a b \beta'$ and $\alpha' = \beta b a \beta'$. 
We proceed with an induction on the length of $\beta'$. 
First we assume that
$\beta a b \Swap_u \beta b a$ and  show for all 
$u' \not\in \nintrelimage{\dom(a)} \cap
\nintrelimage{\dom(b)}$ that $\ta_{u'}(\beta a b) = \ta_{u'}(\beta b a)$
holds. 
Note that by definition of $\Swap_u$ we have $u, \dom(a), \dom(b)
\not\in \nintrelimage{\dom(a)} \cap \nintrelimage{\dom(b)}$. 
Without loss of generality, assume that $u'\in \nintrelimage{\dom(a)}
\cup \nintrelimage{\dom(b)}$, since otherwise 
$\ta_{u'}(\beta a b) = \ta_{u'}(\beta) = \ta_{u'}(\beta b a)$. 
Also without loss of generality assume that $u' \in
\nintrelimage{\dom(a)}$. 
Therefore $\dom(b) \not\nintrel \dom(a)$ and $\dom(b) \not\nintrel
u'$.
This gives
\begin{multline*}
  \ta_{u'}(\beta a b) = \ta_{u'}(\beta a)  = ( \ta_{u'}(\beta),
  \ta_{\dom(a)}(\beta), a) \\
  = (\ta_{u'}(\beta b), \ta_{\dom(a)}(\beta b), a) = \ta_{u'}(\beta b
  a) \enspace. 
\end{multline*}
Assume that $\alpha c \Swap_u \alpha' c$ for some $c \in \Actions$. 
And we assume that inductively, for any agent $u'\in \Dom$:
\begin{equation*}
  \text{If } \alpha \Swap_{u'} \alpha', \text{ then } \ta_{u'}(\alpha)
  = \ta_{u'}(\alpha') \enspace. 
\end{equation*}
The definition of $\Swap$ gives
\begin{align*}
  \emptyset  = & \nintrelimage{\dom(a)} \cap \nintrelimage{\dom(b)}
  \cap
  \{ u, \dom(d) | d \in \Alphabet{a b \beta' c} \} \\
   = & \nintrelimage{\dom(a)} \cap \nintrelimage{\dom(b)} 
  \cap ( \{ u, \dom(d) | d \in \Alphabet{a b \beta'} \} \\
  & \phantom{\cap (\ } \cup \{ \dom(c), \dom(d)
  | d \in \Alphabet{a b \beta'} \} )\\
  = & (\nintrelimage{\dom(a)} \cap \nintrelimage{\dom(b)} 
  \ \ \ \ \cap \{ u, \dom(d) | d \in \Alphabet{a b \beta'} \} )  \\
  & \phantom{\cap (\ } \cup  
  (\nintrelimage{\dom(a)} \cap \nintrelimage{\dom(b)} 
  \cap \{ \dom(c), \dom(d)
  | d \in \Alphabet{a b \beta'} \} ) \enspace. 
  \end{align*}
Therefore $\beta a b \beta' \Swap_{\dom(c)} \beta b a \beta'$, and from the
prerequisites we also know that 
$\beta a b \beta' \Swap_u \beta b a \beta '$.
Applying the induction hypothesis gives $\ta_u(\alpha) =
\ta_u(\alpha')$ and $\ta_{\dom(c)}(\alpha) = \ta_{\dom(c)}(\alpha')$. 
If $\dom(c) \not\nintrel u$ the we get directly,
\begin{equation*}
  \ta_u(\alpha c) = \ta_u(\alpha) = \ta_u(\alpha') = \ta_u(\alpha' c)
  \enspace. 
\end{equation*}
In the case of $\dom(c) \nintrel u$ we get: 
\begin{align*}
  \ta_u(\alpha c)  = (\ta_u(\alpha), \ta_{\dom(c)}(\alpha), c) 
  & = (\ta_u(\alpha'), \ta_{\dom(c)}(\alpha'), c) \enspace. 
\end{align*}
\end{proof}
\fi

The following corollary combines the above result and the fact that \tasecty\ implies \ipsecty: 

\medskip

\begin{corollary}
  \label{corollary:symrefswap}
  Let be $u \in \Dom$ and $\alpha, \alpha' \in \Actions^*$ with $\ipurge_u(\alpha)
  \Symrefswap_u \ipurge_u(\alpha')$, then $\ta_u(\alpha) = \ta_u(\alpha')$.
\end{corollary}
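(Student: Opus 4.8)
The plan is to combine two ingredients: the single–swap invariance of $\ta_u$ proved in Lemma~\ref{lemma:swap swappable elements}, and the fact, underlying the implication that \tasecty\ implies \ipsecty\ \cite{meyden2007}, that applying the intransitive purge leaves $\ta_u$ unchanged, i.e.
\[
\ta_u(\ipurge_u(\gamma)) = \ta_u(\gamma) \quad\text{for all } u \in \Dom,\ \gamma \in \Actions^* .
\]
(Equivalently, $\ipurge_u(\gamma)=\ipurge_u(\gamma')$ implies $\ta_u(\gamma)=\ta_u(\gamma')$; the two formulations coincide since $\ipurge_u$ is idempotent.) This is precisely the refinement of equivalence relations that witnesses \tasecty\ $\Rightarrow$ \ipsecty, so I will take it as given.

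First I would lift Lemma~\ref{lemma:swap swappable elements} from a single swap to the relation $\Symrefswap_u$. Since $\Symrefswap_u$ is by definition the reflexive–transitive closure of $\Swap_u$, any pair $\beta \Symrefswap_u \beta'$ is linked by a finite chain $\beta = \gamma_0 \Swap_u \gamma_1 \Swap_u \cdots \Swap_u \gamma_n = \beta'$. A routine induction on the length $n$, applying Lemma~\ref{lemma:swap swappable elements} at each step and using transitivity of equality (with the reflexive base $n=0$ trivial), yields $\ta_u(\beta) = \ta_u(\beta')$. Instantiating $\beta := \ipurge_u(\alpha)$ and $\beta' := \ipurge_u(\alpha')$ and invoking the hypothesis $\ipurge_u(\alpha) \Symrefswap_u \ipurge_u(\alpha')$ then gives $\ta_u(\ipurge_u(\alpha)) = \ta_u(\ipurge_u(\alpha'))$.

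It then remains only to chain the two ingredients:
\[
\ta_u(\alpha) = \ta_u(\ipurge_u(\alpha)) = \ta_u(\ipurge_u(\alpha')) = \ta_u(\alpha'),
\]
where the outer equalities are the function-level fact above and the middle equality is the conclusion of the previous step. This is exactly the claim.

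The combination itself is short, so the only point deserving care is the status of the auxiliary equation $\ta_u(\ipurge_u(\gamma)) = \ta_u(\gamma)$, which carries the genuine content about how $\ipurge_u$ and $\ta_u$ interact. If one prefers not to assume it directly, it can be recovered abstractly: take the tree-unfolding machine whose state set is $\Actions^*$, with $s_0 = \epsilon$, $\step(\gamma,a) = \gamma a$, the given $\dom$, and $\obs_u$ defined by $\obs_u(\gamma) = \ta_u(\gamma)$ (so $\obs$ maps into $O = \hset(\{\epsilon\},\Actions)$). This machine is \tasec\ by construction, hence \ipsec\ by the cited implication, and \ipsecty\ of this particular machine is precisely $\ipurge_u(\gamma)=\ipurge_u(\gamma') \Rightarrow \ta_u(\gamma)=\ta_u(\gamma')$. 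Either way, establishing or citing this equation is the one place where real work is used; everything else is bookkeeping.
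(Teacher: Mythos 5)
Your proof is correct and takes essentially the same route as the paper's: the paper likewise combines the fact (cited from the proof of Theorem~1 of \cite{meyden2007}, i.e., precisely the refinement witnessing \tasecty\ $\Rightarrow$ \ipsecty) that $\ipurge_u(\alpha)=\ipurge_u(\alpha')$ implies $\ta_u(\alpha)=\ta_u(\alpha')$ with repeated application of Lemma~\ref{lemma:swap swappable elements} along the swap chain, which is exactly your chaining $\ta_u(\alpha) = \ta_u(\ipurge_u(\alpha)) = \ta_u(\ipurge_u(\alpha')) = \ta_u(\alpha')$. Your optional rederivation of the auxiliary fact via the tree-unfolding machine is a nice self-contained extra that the paper does not include, but the core decomposition is identical.
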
 

\medskip

\ifallproofs
\begin{proof}
 From the definition of \tasecty\  it follows (see also the proof of Theorem~1 in 
 the full version of~\cite{meyden2007}) that $\ipurge_u(\alpha)=\ipurge_u(\alpha')$ implies 
 $\ta_u(\alpha)=\ta_u(\alpha')$. The corollary now follows from this
 result and 
repeated  
 application of Lemma~\ref{lemma:swap swappable elements}.
\end{proof}
\fi

We now 
%ron5: prove 
state 
the result mentioned earlier: \tasecty\ is, in a very precise sense, \ipsecty\
plus the requirement that agents should not be able to distinguish between action sequences
that are order indistinguishable. The following theorem shows that the information that an agent
is not permitted to have in the definition of \ipsecty, in addition to information already
forbidden to 
%ron5: him 
it
by \ipsecty, is exactly the information about the orders of actions
that are ``swappable.''

\medskip

\begin{theorem}
  \label{thm:ta-swap}
  Let be $u \in \Dom$ and $\alpha, \alpha' \in \Actions^*$. 
  Then $\ta_u(\alpha) = \ta_u(\alpha')$ if and only if
  $\ipurge_u(\alpha) \Symrefswap_u \ipurge_u(\alpha')$.
\end{theorem}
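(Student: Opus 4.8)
The statement splits into two implications, and the easy one is already in hand: the implication from $\ipurge_u(\alpha) \Symrefswap_u \ipurge_u(\alpha')$ to $\ta_u(\alpha) = \ta_u(\alpha')$ is precisely Corollary~\ref{corollary:symrefswap}. Thus the entire content of the theorem lies in the forward direction, that $\ta_u(\alpha) = \ta_u(\alpha')$ forces $\ipurge_u(\alpha) \Symrefswap_u \ipurge_u(\alpha')$. Before attacking it, I would record the normalizing observation already used in the proof of the corollary: since $\ipurge_u$ is idempotent and $\ipurge_u(\beta) = \ipurge_u(\beta')$ implies $\ta_u(\beta) = \ta_u(\beta')$, we have $\ta_u(\alpha) = \ta_u(\ipurge_u(\alpha))$ for every $\alpha$. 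Hence, writing $\beta = \ipurge_u(\alpha)$ and $\beta' = \ipurge_u(\alpha')$, the forward direction reduces to the claim that for $u$-purged sequences (those with $\ipurge_u(\beta) = \beta$), $\ta_u(\beta) = \ta_u(\beta')$ implies $\beta \Symrefswap_u \beta'$.

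For this reduced claim the plan is induction on $|\beta| + |\beta'|$, carried out simultaneously for all agents $u$ (the agent changes in the recursion, so the induction hypothesis must be available for every domain). The base case is trivial, since a $u$-purged sequence has empty $\ta_u$-tree only when it is empty. For the inductive step, note first that a nonempty $u$-purged sequence must end in an action whose domain interferes with $u$, since otherwise its last action would be deleted by $\ipurge_u$. So I may write $\beta = \beta_0 a$ and $\beta' = \beta_0' a'$ with $\dom(a) \nintrel u$ and $\dom(a') \nintrel u$, and then $\ta_u(\beta) = (\ta_u(\beta_0), \ta_{\dom(a)}(\beta_0), a)$ and symmetrically for $\beta'$. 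Equality of the trees yields $a = a'$ (the root labels), $\ta_u(\beta_0) = \ta_u(\beta_0')$, and $\ta_{\dom(a)}(\beta_0) = \ta_{\dom(a)}(\beta_0')$. Applying the normalizing observation and the induction hypothesis to each of these two equalities (for the agents $u$ and $\dom(a)$ respectively, and to the strictly shorter purged sequences obtained by applying $\ipurge_u$ and $\ipurge_{\dom(a)}$ to $\beta_0, \beta_0'$) gives $\ipurge_u(\beta_0) \Symrefswap_u \ipurge_u(\beta_0')$ and $\ipurge_{\dom(a)}(\beta_0) \Symrefswap_{\dom(a)} \ipurge_{\dom(a)}(\beta_0')$.

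The main obstacle is to assemble these two order-indistinguishabilities, stated for different agents, into the single goal $\beta_0 a \Symrefswap_u \beta_0' a$. The bridge I expect to need is a decomposition lemma for $\ipurge$ under appending: for $\dom(a) \nintrel u$, the action occurrences kept by $\ipurge_u(\gamma a)$ are exactly the union of those kept by $\ipurge_u(\gamma)$ and by $\ipurge_{\dom(a)}(\gamma)$, listed in the order inherited from $\gamma$ and followed by the final $a$. Since $\beta_0 a$ is $u$-purged, this identifies $\beta_0$ with the corresponding merge of $\ipurge_u(\beta_0)$ and $\ipurge_{\dom(a)}(\beta_0)$, and likewise for $\beta_0'$, so the task becomes to show that the two merges are related by $\Symrefswap_u$, after which the common trailing $a$ is appended. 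The delicate points are (i) that a merge of two sequences, each known only up to its own agent's swaps, is pinned down up to $\Symrefswap_u$ — conceptually this is the statement that the two trees determine a common partial order on the retained occurrences and that $\Swap_u$ is exactly the freedom to transpose $\nintrel$-incomparable adjacent occurrences, so that any two $u$-purged linearizations are connected by swaps — and (ii) that a swap legal in the prefix remains legal once $a$ is appended, which must be checked against Definition~\ref{def:swappable} because appending $a$ enlarges the set $\{u, \dom(c) \mid c \in \Alphabet{\cdot}\}$ that a swap must avoid; here I expect to use that $\dom(a) \nintrel u$ together with the swappability of the transposed occurrences rules out $\dom(a)$ lying in the relevant intersection. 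Establishing the decomposition lemma and this linearization argument is where the real effort lies; once they are in place, the induction closes and, combined with the reduction to purged sequences, yields the theorem.
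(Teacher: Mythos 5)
Your handling of the backward implication, the normalization to $u$-purged sequences, the base case, and the decomposition of the trees at the last action are all correct, and the decomposition lemma you posit for $\ipurge$ under appending is indeed true: for $\dom(a)\nintrel u$ one has $\sources(\delta a,u)=\sources(\delta,u)\cup\sources(\delta,\dom(a))$ for every suffix $\delta$, so the occurrences retained by $\ipurge_u(\gamma a)$ inside $\gamma$ are exactly those retained by $\ipurge_u(\gamma)$ or by $\ipurge_{\dom(a)}(\gamma)$. The genuine gap is the step you yourself flag as ``where the real effort lies'': assembling $\ipurge_u(\beta_0)\Symrefswap_u\ipurge_u(\beta_0')$ and $\ipurge_{\dom(a)}(\beta_0)\Symrefswap_{\dom(a)}\ipurge_{\dom(a)}(\beta_0')$ into $\beta_0 a\Symrefswap_u\beta_0' a$. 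Since the converse direction (Corollary~\ref{corollary:symrefswap}) holds at all lengths, this assembly claim is, modulo your induction hypothesis, \emph{equivalent} to the inductive step of the theorem itself; deferring it to a ``linearization argument'' defers the entire content of the hard direction. Moreover, the conceptual justification you offer---that $\Swap_u$ is exactly the freedom to transpose $\nintrel$-incomparable adjacent occurrences, so that all $u$-purged linearizations of a common partial order are swap-connected, as in trace theory---does not engage the two features that make this hard. First, swappability in Definition~\ref{def:swappable} is context dependent: whether adjacent $c,d$ may be transposed depends on $\{u\}\cup\{\dom(e)\mid e\in\Alphabet{cd\delta}\}$ for the \emph{entire suffix} $\delta$ (including the appended $a$ and including $\dom(c),\dom(d)$ themselves), so ``independence'' is not a static relation on letters. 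Concretely, if $c$ precedes $d$, $c$ is retained only by $\ipurge_u(\beta_0)$, $d$ only by $\ipurge_{\dom(a)}(\beta_0)$, and $\dom(d)\nintrel\dom(c)$, then the transposition is forbidden even though $c$ and $d$ come from different components of your merge; so re-interleaving the merge is not freely available. Second, the occurrence set retained by $\ipurge$ is itself order dependent (performing that forbidden transposition changes which occurrences $\ipurge_u$ keeps), so there is no fixed partial order whose linearizations you can enumerate: the object being ``linearized'' moves under the very swaps you want to perform. Your point (ii), which rules out only $\dom(a)$ from the intersection, addresses one of several obstructions and leaves the others untouched.

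For comparison, the paper closes exactly this hole by a global extremal argument rather than a merge lemma: assuming $\alpha,\alpha'$ are $u$-purged with $\ta_u(\alpha)=\ta_u(\alpha')$, it chooses $\alpha''\Symrefswap_u\alpha'$ having the longest common prefix with $\alpha$ (and with the critical occurrence placed leftmost), and shows that a blocked swap at the first point of disagreement produces, via a $\sources$-chain witnessing the non-swappability, a subtree of $\ta_u(\alpha')$ whose occurrence counts cannot match those of the corresponding subtree of $\ta_u(\alpha)$, a contradiction. Some argument of this exchange-and-counting kind---either in the paper's global form or as a full proof of your linearization claim---is what your proposal is missing; without it, the forward direction has not been established.
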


\medskip

\ifallproofs
\begin{proof}
  If $\ipurge_u(\alpha) \Symrefswap_u \ipurge_u(\alpha')$, then $\ta_u(\alpha)=\ta_u(\alpha')$ follows from Corollary~\ref{corollary:symrefswap}. 

  For the other direction, assume that $\ta_u(\alpha) = \ta_u(\alpha')$. 
  Since $\ipurge_u(\alpha)=\ipurge_u(\alpha')$ implies $\ta_u(\alpha)=\ta_u(\alpha')$
  (Corollary~\ref{corollary:symrefswap}), we can without loss of generality assume that $\alpha =
  \ipurge_u(\alpha)$ and $\alpha' = \ipurge_u(\alpha')$. 
  We also assume $\alpha\neq\alpha'$. Note that the number of occurrences of an action $a$ in $\alpha$ is
  the same as in $\alpha'$. 
  Let be $\alpha'' \in \Actions^*$ such that $\alpha''\Swap_u \alpha'$
  and $\alpha''$ has a common prefix with $\alpha$ of maximal length among all $\alpha''$ with this property. 
  We can write $\alpha = \beta a \beta'$ and $\alpha'' = \beta
  \beta''$ with $\beta, \beta', \beta'' \in \Actions^*$, $a \in
  \Actions$ and $a \beta' \neq \beta''$. 
  We also assume that the position of $a$ in $\beta''$ is the
  left-most position among all possible choices of $\alpha''$. 
  The $\beta''$ is of the form $\gamma  b a \gamma'$ with $\gamma b a
  \gamma' \not\Swap_u \gamma a b \gamma'$ and $\gamma, \gamma' \in
  \Actions^*$, $b \in \Actions$. 
  Therefore there is some agent $u' \in \sources_u(\gamma')$ with
  $\dom(a) \nintrel u'$ and $\dom(b) \nintrel u'$. 
  Therefore the tree $\ta_{u'}(\beta\gamma b a)$ is a subtree of $\ta_u(\alpha')$. 
  Because of the corresponding number of occurrences of $a$ in $\alpha$, the corresponding subtree would be $\ta_{u'}(\beta a)$. 
  But the number of occurrences of $b$ in this two trees does not match. 
  This contradicts the assumption that $\alpha' \neq \alpha''$. 
\end{proof}
\fi

The characterization obtained by the above theorem is now stated in the following corollary:

\medskip

\begin{corollary}\label{corollary:ta sec characterization 2}
A system $M$ is \tasec\ if and only if it is \ipsec\ and for every state $q$, every agent $u$,
and every $a,b\in\Actions$, $\alpha\in\Actions^*$, if $a$ and $b$ are swappable in $ab\alpha$,
then $\obs_u(q\cdot ab\alpha)=\obs_u(q\cdot ba\alpha)$.
\end{corollary}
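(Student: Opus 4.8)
The plan is to obtain both implications from Theorem~\ref{thm:ta-swap}, which says $\ta_u(\alpha)=\ta_u(\alpha')$ holds precisely when $\ipurge_u(\alpha)\Symrefswap_u\ipurge_u(\alpha')$, trading the definition of \tasecty\ for the conjunction of \ipsecty\ with the single-swap condition stated in the corollary. The one structural fact that makes this trade possible is that whether $a$ and $b$ are swappable in $ab\alpha$ is a property of the suffix $ab\alpha$ alone: the defining condition $\nintrelimage{\dom(a)}\cap\nintrelimage{\dom(b)}\cap\{u,\dom(c)\mid c\in\Alphabet{ab\alpha}\}=\emptyset$ never refers to any action preceding $a$. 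Hence if $a,b$ are swappable in $ab\alpha$, then $\gamma ab\alpha\Swap_u\gamma ba\alpha$ for every prefix $\gamma\in\Actions^*$. Together with the standing assumption that every state is reachable, so that each $q\in\States$ may be written $q=s_0\cdot\gamma$, this lets me pass freely between the state-local statement of the corollary and the relations $\Swap_u$ and $\Symrefswap_u$ anchored at $s_0$.

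For the forward direction I assume $M$ is \tasec. That it is \ipsec\ is immediate, since the paper already records that \tasecty\ implies \ipsecty. For the swap condition I would fix a reachable state $q=s_0\cdot\gamma$, an agent $u$, and actions $a,b$ swappable in $ab\alpha$. By prefix-independence $\gamma ab\alpha\Swap_u\gamma ba\alpha$, so Lemma~\ref{lemma:swap swappable elements} gives $\ta_u(\gamma ab\alpha)=\ta_u(\gamma ba\alpha)$, and \tasecty\ then yields $\obs_u(s_0\cdot\gamma ab\alpha)=\obs_u(s_0\cdot\gamma ba\alpha)$, that is, $\obs_u(q\cdot ab\alpha)=\obs_u(q\cdot ba\alpha)$.

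For the converse I assume $M$ is \ipsec\ and satisfies the swap condition, and I take $\alpha,\alpha'$ with $\ta_u(\alpha)=\ta_u(\alpha')$. Theorem~\ref{thm:ta-swap} supplies $\ipurge_u(\alpha)\Symrefswap_u\ipurge_u(\alpha')$, i.e.\ a finite chain $\ipurge_u(\alpha)=\gamma_0\Swap_u\gamma_1\Swap_u\cdots\Swap_u\gamma_n=\ipurge_u(\alpha')$ of single swaps. Using \ipsecty\ I replace $\obs_u(s_0\cdot\alpha)$ by $\obs_u(s_0\cdot\gamma_0)$ and $\obs_u(s_0\cdot\alpha')$ by $\obs_u(s_0\cdot\gamma_n)$, so it suffices to keep the observation constant along the chain. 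Each step $\gamma_i\Swap_u\gamma_{i+1}$ has the shape $\delta ab\delta'\Swap_u\delta ba\delta'$; setting $q=s_0\cdot\delta$ (reachable) and noting that $a,b$ are swappable in $ab\delta'$, the hypothesised swap condition gives $\obs_u(q\cdot ab\delta')=\obs_u(q\cdot ba\delta')$, i.e.\ $\obs_u(s_0\cdot\gamma_i)=\obs_u(s_0\cdot\gamma_{i+1})$. Chaining these equalities through $\gamma_0,\dots,\gamma_n$ and combining with the two \ipsecty\ identities yields $\obs_u(s_0\cdot\alpha)=\obs_u(s_0\cdot\alpha')$, which is exactly \tasecty.

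I expect the only real obstacle to be the bookkeeping that mediates between the three guises of a swap: the order-indistinguishability relation $\Symrefswap_u$ on $\ipurge$-reduced sequences delivered by Theorem~\ref{thm:ta-swap}, the single-swap relation $\Swap_u$ on full sequences anchored at $s_0$, and the corollary's state-local equation quantified over all reachable $q$. The two facts that align them---prefix-independence of swappability and universal reachability---must be invoked at precisely the right points; once they are in place, the forward direction collapses to one application of Lemma~\ref{lemma:swap swappable elements} and the converse to a walk along the $\Symrefswap_u$-chain, with no genuine computation required.
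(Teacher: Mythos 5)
Your proof is correct and takes essentially the same route as the paper's: both directions hinge on Theorem~\ref{thm:ta-swap} together with Lemma~\ref{lemma:swap swappable elements}, using \ipsecty\ to dispose of the $\ipurge$/irrelevant-action steps and the swap condition to walk along the chain of single $\Swap_u$ steps. If anything, your explicit appeal to prefix-independence of swappability and to reachability (to pass between $s_0$-anchored sequences and the state-local condition at an arbitrary $q$) is more careful than the paper's own write-up, which leaves that translation implicit.
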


\medskip

\ifallproofs
\begin{proof}
We first show that if a system is not \tasec, then it violates the condition.
 Theorem~\ref{thm:ta-swap} says that for all action sequences $\alpha$ and
$\alpha'$ we have $\ta_u(\alpha) = \ta_u(\alpha')$ if and only if there exist actions $\alpha_0, \dots, \alpha_n$ such that 

\begin{align*}
\alpha & \SymrefRedIrr \alpha_0 \SymrefRedIrr \dots \SymrefRedIrr
\alpha_k \\
& \RefSwap \alpha_{k+1} \RefSwap \dots \RefSwap \alpha_l \\
& \SymrefRedIrr \alpha_{l+1} \SymrefRedIrr \dots \SymrefRedIrr \alpha_n \SymrefRedIrr \alpha' \enspace. 
\end{align*}

For the converse, we assume that $M$ is \ipsec\ and not \tasec. Then there exist traces $\alpha$ and $\alpha'$
and a state $q$ such that $\obs_u(q\cdot\alpha)\neq\obs_u(q\cdot\alpha')$, and $\ta_u(\alpha)=\ta_u(\alpha')$.
Let $\alpha_0$, $\alpha_1$, \dots\ be the sequence of action sequences as above. Since $M$ is \ipsec, it follows that
if $\alpha_i\SymRedIrr\alpha_{i+1}$, then $\obs_u(q\cdot\alpha_i)=\obs_u(q\cdot\alpha_{i+1})$. Therefore, the above
implies that there are traces $\alpha_i$ and $\alpha_{i+1}$ such that $\alpha_i\RefSwap\alpha_{i+1}$ and
$\obs_u(q\cdot\alpha_i)\neq\obs_u(q\cdot\alpha_{i+1})$ as claimed.

Now assume that the system $M$ is \tasec. In~\cite{meyden2007}, it was shown that \tasecty\ implies \ipsecty.
It remains to prove that $M$ satisfies the condition of the corollary.
If the condition is not satisfied, then there
s a state $q$, agent $u$, and actions $a,b\in\Actions$ and $\alpha\in\Actions^*$ such that $a$ and $b$
are swappable in $ab\alpha$, and $\obs_u(q\cdot ab\alpha)\neq\obs_u(q\cdot ba\alpha)$.
Since Theorem~\ref{thm:ta-swap} implies that $\ta_u(ab\alpha)=\ta_u(ba\alpha)$, it follows
that the system is not \tasec, a contradiction.
\end{proof}
\fi

\section{Unwindings} \label{sec:unwinding} 

%ron: added intro 
The characterizations discussed above provide the 
basis for complexity results for each of the notions of security. In the case
of P-security, IP-security and TA-security, our complexity results are obtained by 
means of an appropriate notion of unwinding. For P-security, this is the classical 
notion of unwinding for transitive policies, but for IP-security and TA-security, 
we develop, in this section, novel notions of unwinding that we show to be both sound and complete. 
These notions are of independent interest in that they provide proof methods that can be 
applied to establish security (e.g., by use of theorem provers) even when the decision procedures
are inapplicable (e.g., because the system is infinite-state, or beyond the practical scope of 
our decision procedures.) The complexity results are given in the following section. 

%ron: maybe this belongs higher up 
For the remainder of this section, we assume that all states of a system $M$ are reachable, noting 
that all our notions of security hold in $M$ iff they hold in the reachable part of $M$. 

We first recall the notions of unwinding defined in  \cite{rushby92}. 
Given a system $M$, an {\em unwinding} with repect to a policy $\nintrel$ is a collection of equivalence  
relations $\unwind_u$ on the set of states of $M$, for $u\in \Dom$, satisfying the following conditions, 
for all $u\in \Dom$, $a\in \Actions$ and $s,t\in S$:  
\begin{itemize}
\item[\ocp:] If $s \unwind_u t$ then $\obs_u(s) = \obs_u(t)$.
\item[\scp:] If $s \unwind_u t$  then $s \cdot a \unwind_u  t \cdot
  a$.
\item[\lrp:] If 
  $\dom(a) \not \nintrel u$ then $s \unwind_u s\cdot
  a$.
\end{itemize}
Rushby shows that, for {\em transitive} policies $\nintrel$ (where  all our notions of security coincide \cite{rushby92,meyden2007}), a system $M$ is P-secure with respect to $\nintrel$  iff there exists an unwinding on $M$.   
In fact, this result applies even for intransitive policies, with exactly the same proof.  

Rushby also defines a {\em weak unwinding} with respect to a policy $\nintrel$ to be a similar collection 
of relations that satisfy \ocp, \lrp and the following weaker version of \scp: 
\begin{itemize}
\item[WSC:] If $s \unwind_u t$ and $a \in \Actions$ and $s\unwind_{\dom(a)} t$ 
then $s \cdot a \unwind_u t \cdot  a$. 
\end{itemize} 
It is shown in \cite{rushby92} that if there exists a weak unwinding on a system $M$ with respect to $\nintrel$, then 
$M$ is IP-secure with respect to $\nintrel$. Subsequently, it was shown in \cite{meyden2007} that the 
conclusion can be strengthened to TA-security of $M$, and also that if $M$ is TA-secure then there
exists a weak unwinding on the unfolding of $M$, a system that behaves exactly like $M$ except that it 
records  in the system state the sequence of actions performed. However, it 
is {\em not} the case that there is always a weak unwinding on the original state space of a TA-secure system.

\subsection{An Unwinding for IP-security}

We now define a new notion of unwinding that we show to be sound and complete for 
IP-security. 

We define an unwinding relation $\unwind_u^v$ 
%ron: 
on the states of the system $M$
for every 
%agent 
pair of domains  
$u, v \in
\Dom$.
The 
%agent
domain 
 $u$ takes the part of 
 %low, i.e., 
 the observer. 
The 
%agent 
domain 
$v$ 
%takes the part of high. 
corresponds to a domain whose actions are not supposed to directly affect the 
observations of $u$. 
%The following conditions are necessary to be satisfied by each of
%these unwinding relations. 
We say that the collection of equivalence relations 
$\unwind^v_u$ for $u,v\in \Dom$ is an {\em IP-unwinding on $M$ with respect to $\nintrel$}, 
if for all domains $u,v$, states $s,t$ of $M$, and actions $a\in A$, the following hold: 
\begin{itemize}
\item[\ocip:] If $s \unwind_u^v t$ then $\obs_u(s) = \obs_u(t)$.
\item[\scip:] If $s \unwind_u^v t$ and
% if $a \in \Actions$ with
  $v  \not\nintrel  \dom(a)$ then $s \cdot a \unwind_u^v t \cdot
  a$.
\item[\lrip:] If $v \not \nintrel u$ and 
%if $a \in \Actions$ with
  $\dom(a) = v$ then $s \unwind_u^v s\cdot
  a$.
\end{itemize}
With 
%these unwinding relations 
this definition it is possible to give a full
characterisation of \ipsecty. 

\begin{theorem} \label{thm:ip-unwind}
  A system $M$ is \ipsec\ iff 
%  for every $u, v \in \Dom$
%  it exists an equivalence relation $\unwind_u^v$ that
%  satisfies the conditions \ocip, \scip\ and \lrip. 
there exists an IP-unwinding on $M$ with respect to $\nintrel$.
\end{theorem}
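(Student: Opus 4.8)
The plan is to prove both directions, establishing soundness (an IP-unwinding implies \ipsecty) and completeness (\ipsecty\ implies the existence of an IP-unwinding). For soundness, I would assume we have an IP-unwinding $\unwind^v_u$ and prove that $M$ is \ipsec\ by showing the contrapositive: if $M$ is IP-insecure, then no IP-unwinding can exist. By Lemma~\ref{lemma:char-ipsecty}, IP-insecurity gives us $u\in\Dom$, a reachable state $q$, an action $a$, and $\alpha\in\Actions^*$ with $\ipurge_u(a\alpha)=\ipurge_u(\alpha)$, $\obs_u(q\cdot a\alpha)\neq\obs_u(q\cdot\alpha)$, and crucially $\nintrelimage{\dom(a)}\cap\{\dom(c)\mid c\in\Alphabet{\alpha}\}=\emptyset$. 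Setting $v=\dom(a)$, this last condition says that no action in $\alpha$ has domain in $\nintrelimage{v}$, i.e.\ $v\not\nintrel\dom(c)$ for every $c$ occurring in $\alpha$. The idea is then to use \lrip\ (with $v\not\nintrel u$, which follows from $\ipurge_u(a\alpha)=\ipurge_u(\alpha)$ forcing $\dom(a)\not\nintrel u$) to get $q\unwind^v_u q\cdot a$, and then repeatedly apply \scip\ along the actions of $\alpha$---each such application is licensed precisely because $v\not\nintrel\dom(c)$ for each $c$ in $\alpha$---to conclude $q\cdot\alpha\unwind^v_u q\cdot a\alpha$. Then \ocip\ yields $\obs_u(q\cdot\alpha)=\obs_u(q\cdot a\alpha)$, contradicting the insecurity. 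The one subtlety is that \lrip\ only gives us $s\unwind^v_u s\cdot a$ for a single action of domain exactly $v$; I would confirm that $\dom(a)=v$ holds by our choice $v=\dom(a)$, so the base step is immediate.

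For completeness, I would assume $M$ is \ipsec\ and construct an explicit IP-unwinding. The natural candidate is to define $s\unwind^v_u t$ to hold when $s$ and $t$ are indistinguishable to $u$ under all continuations that avoid transmitting information from $v$; concretely, I would try defining $s\unwind^v_u t$ iff for all $\gamma\in\Actions^*$ containing no action $c$ with $v\nintrel\dom(c)$, we have $\obs_u(s\cdot\gamma)=\obs_u(t\cdot\gamma)$. One then verifies the three conditions. Property \ocip\ follows by taking $\gamma=\epsilon$. Property \scip\ (for $v\not\nintrel\dom(a)$) follows because prepending such an $a$ to the forbidden-continuation $\gamma$ keeps it forbidden, so the defining condition is preserved under the action $a$. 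The work is in \lrip: we must show that when $v\not\nintrel u$ and $\dom(a)=v$, we have $s\unwind^v_u s\cdot a$, i.e.\ for every forbidden $\gamma$, $\obs_u(s\cdot\gamma)=\obs_u(s\cdot a\gamma)$. This is exactly where \ipsecty, via Lemma~\ref{lemma:char-ipsecty}, is invoked: since $\dom(a)=v\not\nintrel u$ and no action of $\gamma$ lies in $\nintrelimage{v}$, one checks $\ipurge_u(a\gamma)=\ipurge_u(\gamma)$, and then IP-security forces the two observations to agree.

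The main obstacle I anticipate is the \lrip\ verification in the completeness direction, and specifically confirming the purge identity $\ipurge_u(a\gamma)=\ipurge_u(\gamma)$ from the hypotheses $\dom(a)=v\not\nintrel u$ together with the fact that $\gamma$ contains no action whose domain lies in $\nintrelimage{v}$. The point is that $a$ cannot enter $\ipurge_u(a\gamma)$: by the definition of $\sources$, for $\dom(a)=v$ to be retained we would need some $w\in\sources(a\gamma,u)$ with $v\nintrel w$, but the domains reachable into $\sources(\gamma,u)$ are built only from $u$ and the domains of actions in $\gamma$, none of which $v$ may interfere with, and $v\not\nintrel u$ handles the base case. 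I would make this $\sources$-tracking argument precise as the technical core, and also verify that $\unwind^v_u$ is genuinely an equivalence relation (reflexivity, symmetry, transitivity are all immediate from the universally-quantified definition). A secondary check is that the relation defined this way satisfies \scip\ only under the stated side condition $v\not\nintrel\dom(a)$ and not unconditionally, which matches the asymmetry between \scip\ and the classical \scp; I would note that this asymmetry is exactly what allows the unwinding to live on the original (rather than unfolded) state space.
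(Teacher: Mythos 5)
Your proposal is correct and takes essentially the same route as the paper: the soundness direction is the paper's argument verbatim (Lemma~\ref{lemma:char-ipsecty} gives the witness, \lrip\ and repeated \scip\ yield $q\cdot\alpha \unwind^v_u q\cdot a\alpha$, contradicting \ocip), and the completeness direction uses the identical relation $s \unwind^v_u t$ iff $\obs_u(s\cdot\gamma)=\obs_u(t\cdot\gamma)$ for all $\gamma$ avoiding $\nintrelimage{v}$, with the same verifications of \ocip\ and \scip. The only cosmetic difference is in verifying \lrip: you invoke Lemma~\ref{lemma:char-ipsecty} contrapositively at the reachable state $s$ (using the $\sources$-tracking fact $\ipurge_u(a\gamma)=\ipurge_u(\gamma)$), whereas the paper decomposes $s=s_0\cdot\beta$ and applies the definition of \ipsecty\ directly via the chain $\obs_u(s_0\cdot\beta\alpha)=\obs_u(s_0\cdot\ipurge_u(\beta\alpha))=\obs_u(s_0\cdot\ipurge_u(\beta a\alpha))=\obs_u(s_0\cdot\beta a\alpha)$, which rests on the same purge identity.
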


\begin{proof}
We first prove the implication from left to right. 
Let $M$ be an \ipsec \ system. For $u, v \in \Dom$, 
  we define the following relation:
  \begin{equation*}
    s \unwind_u^v t \text{ iff } \forall \alpha \in 
    \{ a \in \Actions | v \not \nintrel \dom(a) \}^*: 
    \obs_u( s \cdot \alpha) = \obs_u(t \cdot \alpha) \enspace.  
  \end{equation*}
  We show that this collection of relations is an IP-unwinding on $M$ with respect to $\nintrel$.  
  
  Obviously, $\unwind_u^v$ is an equivalence relation. 
   For any $s, t \in \States$, \ocip\ is trivially satisfied by taking the empty
  string for $\alpha$. 
  It is left to show that $\unwind_u^v$ satisfies the conditions 
  \scip\ and \lrip. 

  For \scip, let  $a \in \Actions$ with $v \not \nintrel
  \dom(a)$.
  By definition of $s \unwind_u^v t$ we have, for all $\alpha \in \{ c
  \in \Actions | v \not \nintrel \dom(c) \}^*$, that $ \obs_u(s \cdot
  \alpha) = \obs_u(t \cdot \alpha)$. 
  This implies that, for all  $\alpha \in \{ c
  \in \Actions | v \not \nintrel \dom(c) \}^*$, we have $ \obs_u(s \cdot a
  \alpha) = \obs_u(t \cdot a \alpha)$. 
  Again by definition, this gives  $s\cdot a  \unwind_u^v t\cdot a$. 
  
  For \lrip\ we assume that $v \not \nintrel u$.
  Let $s\in S$ and $a \in \Actions$ with $\dom(a) = v$. 
  %ron: adding this to make this self-contained 
  Since we assume that all states are reachable, there exists a sequence of actions $\beta$ such that 
  $s = s_0\cdot \beta$. 
  Then by applying \ipsecty\ of $M$: 
  \begin{align*}
    \forall \alpha \in 
    \{ c \in \Actions | v \not \nintrel \dom(c) \}^*: 
    \obs_u( s \cdot \alpha) & = 
    %ron: adding \beta argument for self-contained
    \obs_u( s_0 \cdot \beta \alpha) \\ 
  &  =  \obs_u(s_0 \cdot \ipurge_u(\beta \alpha)) \\
    & = \obs_u(s_0 \cdot \ipurge_u(\beta a \alpha)) \\
        & = \obs_u(s_0 \cdot \beta a \alpha) \\
    & = \obs_u(s \cdot a \alpha) \enspace. 
  \end{align*}
  This shows that $s \unwind_u^v s \cdot a$.

  For the other direction of this proof we assume that $M$ is
  IP-insecure, and show that there does not exist an IP-unwinding. 
  By Lemma~\ref{lemma:char-ipsecty}, 
   there exists a reachable state $q \in \States$,  $u \in \Dom$, $a \in
  \Actions$ and $\alpha \in \Actions^*$ such that
  $\ipurge_u(a \alpha) = \ipurge_u(\alpha)$ and $\obs_u(q \cdot a
  \alpha) \neq \obs_u(q \cdot \alpha)$ and $\nintrelimage{\dom(a)}
  \cap \{ \dom(c) | c \in \Alphabet{\alpha} \} = \emptyset$. 
  Set $v = \dom(a)$ and let  $\unwind_u^v$ an equivalence relation on $\States$ that
  satisfies \scip\ and \lrip. 
  %ron: 
  We derive a contradiction to \ocip, showing that no IP-unwinding can exist. 
  By \lrip\  we have $q \unwind_u^v q \cdot a$, since $v \not \nintrel
  u$. 
  By applying \scip\  multiple times, we get 
  $q \cdot \alpha \unwind_u^v
  q \cdot a \alpha$. 
  Since $\obs_u(q \cdot \alpha) \neq \obs_u(q \cdot a \alpha)$, \ocip\ 
  is not satisfied. 
\end{proof}

%ron: 
Note that the equivalence relations satisfying \scip\ and \lrip\ form 
a sublattice of ${\cal P}(S^2)$. This gives the following corollary, that 
will form the basis for the decision procedure for IP-security 
described in Section~\ref{sec:alg-ip}. 

\begin{corollary}
  Let $M$ be a finite system and, for every $u, v \in \Dom$,
   let $\unwind_u^v$
  be the smallest
  equivalence relation that satisfies \scip\  and \lrip\ 
  %ron: 
  with respect to $\nintrel$. 
  Then $M$ is \ipsec\ with respect $\nintrel$ to iff $\unwind_u^v$ satisfies \ocip for every $u, v
  \in \Dom$.
\end{corollary}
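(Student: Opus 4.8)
The plan is to derive the corollary from Theorem~\ref{thm:ip-unwind} together with the lattice-theoretic observation stated just before it. First I would justify that, for each pair $u,v\in\Dom$, the smallest equivalence relation $\unwind_u^v$ satisfying \scip\ and \lrip\ is actually well defined. To this end I would consider the family $\mathcal{F}_{u,v}$ of all equivalence relations on $\States$ that satisfy both \scip\ and \lrip, and check that it is nonempty and closed under arbitrary intersection. Nonemptiness is immediate, since the full relation $\States\times\States$ belongs to $\mathcal{F}_{u,v}$ (it contains every pair, so \scip\ and \lrip\ hold trivially). For closure under intersection, the equivalence-relation axioms are preserved by intersection in the usual way; \scip\ is a closure condition, so if every relation in a subfamily sends $(s,t)$ to $(s\cdot a,t\cdot a)$ whenever $v\not\nintrel\dom(a)$, then so does their intersection; and \lrip\ forces a fixed set of pairs $(s,s\cdot a)$ (those with $\dom(a)=v$ and $v\not\nintrel u$) that lies in every member of the family, hence in the intersection. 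Consequently $\unwind_u^v=\bigcap\mathcal{F}_{u,v}$ is itself in $\mathcal{F}_{u,v}$ and is contained in every member, i.e.\ it is the least element, which is exactly the content of the sublattice remark.

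With this in hand, both directions are short and reduce to Theorem~\ref{thm:ip-unwind}. For the direction assuming that each $\unwind_u^v$ satisfies \ocip, I would observe that the collection $\{\unwind_u^v\}$ then satisfies \ocip\ by hypothesis and \scip, \lrip\ by the very definition of $\unwind_u^v$; hence it is an IP-unwinding on $M$, and Theorem~\ref{thm:ip-unwind} yields that $M$ is \ipsec. For the converse, suppose $M$ is \ipsec. Then Theorem~\ref{thm:ip-unwind} supplies some IP-unwinding, say $\approx_u^v$. Each $\approx_u^v$ satisfies \scip\ and \lrip, so $\approx_u^v\in\mathcal{F}_{u,v}$, and by minimality $\unwind_u^v\subseteq\approx_u^v$. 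Since $\approx_u^v$ satisfies \ocip\ and passing to a subrelation only removes related pairs, $\unwind_u^v$ inherits \ocip: whenever $s\unwind_u^v t$ we have $s\approx_u^v t$, and therefore $\obs_u(s)=\obs_u(t)$.

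I expect the only genuine subtlety to be the intersection-closure step, and specifically the preservation of \lrip. In contrast to \ocip\ and \scip, which are closure-style conditions transparently preserved under intersection, \lrip\ asserts that certain pairs \emph{belong} to the relation, a lower-bound requirement that in general is preserved by unions rather than intersections. What makes it go through here is that the pairs \lrip\ forces are fixed in advance and independent of the particular relation, so they lie in every member of $\mathcal{F}_{u,v}$ and hence in the intersection. Once this is recorded, the remainder is a routine combination of Theorem~\ref{thm:ip-unwind} with the monotonicity of \ocip\ under subrelations, and no further computation is needed.
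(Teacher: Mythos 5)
Your proposal is correct and follows exactly the route the paper intends: the paper's entire justification is the remark that the relations satisfying \scip\ and \lrip\ are closed under intersection (the ``sublattice'' observation) combined with Theorem~\ref{thm:ip-unwind}, and your argument fills in precisely those details, including the right reason why \lrip\ survives intersection (its forced pairs are fixed in advance) and the monotonicity of \ocip\ under subrelations. Nothing is missing.
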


\subsection{An unwinding for TA-security} 

Next, we characterize TA-security by means of a new notion of unwinding. 

Define a TA-unwinding of system $M$ with respect to policy $\nintrel$ to be 
a collection of equivalence relations
$\unwind_u^{v, w}$ on the set of states of $M$, 
for every $u \in \Dom$ and every $v, w \in \Dom$ with $v \neq w$, 
satisfying the  following conditions, for all states $s,t\in S$ and actions $a,b\in \Actions$:
%The unwinding condtions for this relations are: 
\begin{itemize}
\item[\octa:] If $s \unwind_u^{v,w} t$ then $\obs_u(s) = \obs_u(t)$.
\item[\scta:] If $s \unwind_u^{v,w} t$ and if $a \in \Actions$ with
  $v \not\nintrel  \dom(a)$ or $w \not\nintrel \dom(a)$
  then 
  $s \cdot a \unwind_u^{v, w} t \cdot a$.
\item[\lrta:] If 
%$a, b \in \Actions$ with 
$\dom(a) = v$ and $\dom(b) =
  w$ and $v \not\nintrel w$ and  $w\not\nintrel v$, and 
  %ron: 
  either 
  $v \not\nintrel u$ or $w \not\nintrel u$, then $s \cdot ab  \unwind_u^{v, w} s\cdot
  ba$.
\end{itemize}
%ron: 
(If there is just one domain, then the empty collection is taken to be a TA-unwinding.)
The following result characterises TA-security using this notion of unwinding. 

\begin{theorem}
  A system $M$ is \tasec\ 
  %ron: + throughout 
  with respect to $\nintrel$ 
  iff $M$ is \ipsec\ with respect to $\nintrel$  and 
  there exists a 
TA-unwinding of $M$ with respect to $\nintrel$. 
\end{theorem}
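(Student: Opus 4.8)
The plan is to prove both directions by relating the TA-unwinding conditions to the characterization of TA-security provided in Corollary~\ref{corollary:ta sec characterization 2}. That corollary states that $M$ is \tasec\ iff $M$ is \ipsec\ and, for every state $q$, agent $u$, and swappable $a,b$ in $ab\alpha$, we have $\obs_u(q\cdot ab\alpha)=\obs_u(q\cdot ba\alpha)$. Since both sides of the theorem already require \ipsecty, the real work is to show that, given \ipsecty, the existence of a TA-unwinding is equivalent to this ``swap-invariance'' condition. This mirrors the structure of the IP-unwinding theorem (Theorem~\ref{thm:ip-unwind}), so I would follow that template closely.

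\textbf{Left to right (soundness of the condition, or rather completeness of unwinding).} Assuming $M$ is \tasec, I would define a candidate relation for each triple $u,v,w$ with $v\neq w$ by
\begin{equation*}
s \unwind_u^{v,w} t \text{ iff } \forall \alpha\in\{a\in\Actions \mid v\not\nintrel\dom(a)\text{ and }w\not\nintrel\dom(a)\}^*: \obs_u(s\cdot\alpha)=\obs_u(t\cdot\alpha).
\end{equation*}
Each such relation is clearly an equivalence relation, and \octa\ holds by taking $\alpha=\epsilon$. Condition \scta\ is immediate: if $v\not\nintrel\dom(a)$ or $w\not\nintrel\dom(a)$, then $a$ itself lies in the restricting alphabet, so prepending $a$ to any permitted $\alpha$ keeps it permitted, and the defining universal quantification transfers from $s,t$ to $s\cdot a,t\cdot a$. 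The interesting clause is \lrta: given $\dom(a)=v$, $\dom(b)=w$, $v\not\nintrel w$, $w\not\nintrel v$, and ($v\not\nintrel u$ or $w\not\nintrel u$), I must show $s\cdot ab\unwind_u^{v,w}s\cdot ba$. I would write $s=s_0\cdot\beta$ (using reachability), then for any permitted $\alpha$ argue that $a$ and $b$ are swappable in $ab\alpha$ relative to agent $u$: the swappability set $\nintrelimage{\dom(a)}\cap\nintrelimage{\dom(b)}\cap\{u,\dom(c)\mid c\in\Alphabet{ab\alpha}\}$ is empty because the two disjointness hypotheses ($v\not\nintrel w$, $w\not\nintrel v$, and the ``or'' on $u$) together with the alphabet restriction on $\alpha$ force the intersection to vanish. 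Then Corollary~\ref{corollary:ta sec characterization 2} (via \tasecty) yields $\obs_u(s\cdot ab\alpha)=\obs_u(s\cdot ba\alpha)$, which is exactly the defining condition for $s\cdot ab\unwind_u^{v,w}s\cdot ba$.

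\textbf{Right to left.} Assuming $M$ is \ipsec\ and a TA-unwinding exists, I would show the swap-invariance condition of Corollary~\ref{corollary:ta sec characterization 2} and conclude \tasecty. Take any state $q$, agent $u$, and $a,b$ swappable in $ab\alpha$. Set $v=\dom(a)$ and $w=\dom(b)$. Swappability gives that no domain in $\nintrelimage{v}\cap\nintrelimage{w}$ observes $u$ or acts in $ab\alpha$; in particular $v\not\nintrel w$, $w\not\nintrel v$, and at least one of $v\not\nintrel u$, $w\not\nintrel u$ holds (if both $v\nintrel u$ and $w\nintrel u$, then $u$ itself would lie in the forbidden intersection). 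Hence \lrta\ applies and gives $q\cdot ab\unwind_u^{v,w}q\cdot ba$ when $v\neq w$. I then apply \scta\ repeatedly along $\alpha$: each action $c$ in $\alpha$ has a domain $\dom(c)$ lying outside $\nintrelimage{v}\cap\nintrelimage{w}$ (again by swappability), so $v\not\nintrel\dom(c)$ or $w\not\nintrel\dom(c)$, which is precisely the side condition \scta\ requires to extend the equivalence by $c$. This yields $q\cdot ab\alpha\unwind_u^{v,w}q\cdot ba\alpha$, and \octa\ then gives $\obs_u(q\cdot ab\alpha)=\obs_u(q\cdot ba\alpha)$. Together with \ipsecty, Corollary~\ref{corollary:ta sec characterization 2} delivers \tasecty. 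The one edge case is $v=w$, i.e.\ $\dom(a)=\dom(b)$; but then swappability forces $\nintrelimage{v}\cap\{u,\dom(c)\mid\ldots\}=\emptyset$, so in particular $v\not\nintrel u$ and $v\not\nintrel\dom(c)$ for all $c$ in the alphabet, and I can instead invoke the IP-unwinding machinery (or argue directly via \ipsecty) that both $ab\alpha$ and $ba\alpha$ ipurge to sequences not containing $a,b$, making the observations equal.

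\textbf{Main obstacle.} The delicate step is verifying, in both directions, that the three algebraic side conditions on \lrta\ and \scta\ correspond \emph{exactly} to the set-theoretic emptiness condition defining swappability in Definition~\ref{def:swappable}; in particular, handling the asymmetry between the two ``$v\not\nintrel\dom(a)$ or $w\not\nintrel\dom(a)$'' disjunction in \scta\ and the alphabet-closure property that swappability guarantees along the tail $\alpha$. Getting the $v=w$ degenerate case and the interplay with the required \ipsecty\ hypothesis to line up cleanly is where the bookkeeping is easiest to get wrong, so I would isolate the reduction of swappability to these conditions as an explicit sub-claim before running the unwinding argument.
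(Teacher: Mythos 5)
You follow the paper's own route: both directions are reduced to Corollary~\ref{corollary:ta sec characterization 2}, the forward direction constructs the unwinding relations as ``same $u$-observations under all continuations over a restricted alphabet,'' and the converse applies \lrta\ once and then \scta\ along $\alpha$ to turn swap-invariance of the relation into swap-invariance of observations via \octa. Your converse direction is correct and matches the paper (which states it contrapositively), including the derivation of $v \not\nintrel w$, $w \not\nintrel v$, and ``$v \not\nintrel u$ or $w \not\nintrel u$'' from swappability. The $v = w$ ``edge case'' you set aside is in fact vacuous: since $a \in \Alphabet{ab\alpha}$ and $\nintrel$ is reflexive, $\dom(a) = v$ would itself lie in $\nintrelimage{v} \cap \nintrelimage{w} \cap \{u, \dom(c) \mid c \in \Alphabet{ab\alpha}\}$, so swappability can never hold with $\dom(a) = \dom(b)$; the paper simply notes that $v \neq w$ follows, and your fallback argument via \ipsecty, while workable, is unnecessary.

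The forward direction, however, contains a genuine error: you define the restricting alphabet with ``and,'' i.e.\ $\{a \in \Actions \mid v \not\nintrel \dom(a) \text{ and } w \not\nintrel \dom(a)\}$, where the paper (necessarily) uses ``or.'' Your verification of \scta\ --- ``if $v \not\nintrel \dom(a)$ or $w \not\nintrel \dom(a)$, then $a$ itself lies in the restricting alphabet'' --- is false for your alphabet: an action with $v \not\nintrel \dom(a)$ but $w \nintrel \dom(a)$ triggers the hypothesis of \scta\ yet is excluded from your alphabet, so $s \unwind_u^{v,w} t$ tells you nothing about continuations of $s \cdot a$ and $t \cdot a$. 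This is not merely a presentational gap; the ``and'' relation really can violate \scta\ in a \tasec\ system. Take the system of Figure~\ref{fig:notpsec} (TO-secure, hence TA-secure, for the policy $H \nintrel D \nintrel L$) and the triple $(u,v,w) = (L,L,H)$: your alphabet is then empty (every domain lies in $\nintrelimage{L} \cup \nintrelimage{H} = \{L,H,D\}$), so your relation degenerates to equality of current $L$-observations and relates $s_0$ to $s_0 \cdot h$; but the action $d$ satisfies the \scta\ hypothesis (since $L \not\nintrel D$), while $\obs_L(s_0 \cdot d) = 0 \neq 1 = \obs_L(s_0 \cdot hd)$, so \scta\ fails for this relation. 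Replacing ``and'' by ``or'' repairs everything at once: every action admitted by the \scta\ hypothesis then lies in the alphabet, so your closure argument becomes valid; the swappability check for \lrta\ still goes through, because the ``or'' condition on the letters of $\alpha$ says precisely $\dom(c) \notin \nintrelimage{v} \cap \nintrelimage{w}$; and the resulting proof is exactly the paper's.
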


\begin{proof}
We first show the direction from left to right. 
    Let system $M$ be  \tasec\ with respect to $\nintrel$. 
    It follows that $M$ is IP-secure with respect to $\nintrel$, so 
    it suffices to show that there exists a TA-unwinding with respect to $\nintrel$. 
    For this, we define the following relation on the states of $M$, for  $u, v, w \in \Dom$ 
    with $v \neq w$: 
  \begin{equation*}
    s \unwind_u^{v, w} t \text{ iff } \forall \alpha \in 
    \{ a \in \Actions | v \not \nintrel \dom(a) \text{ or }
    w \not\nintrel \dom(a) \}^* \colon 
    \obs_u( s \cdot \alpha) = \obs_u(t \cdot \alpha) \enspace.  
  \end{equation*}
  That the property \octa\ is satisfied 
  %for the same reason as in the  previous proof. 
  is immediate from the choice of $\alpha = \epsilon$. 
  
  %For showing 
  To show \scta, let 
  %be 
  $s, t \in \States$ with $s \unwind_u^{v,
    w} t$ and $a \in \Actions$
  with $v \not \nintrel \dom(a)$ or $w \not \nintrel \dom(a)$. 
 By definition of $s \unwind_u^{v, w} t$ we have for all 
 $\alpha \in \{ c \in \Actions | v \not \nintrel \dom(c) 
 \text{ or } w \not\nintrel \dom(c) \}^*$, that $ \obs_u(s \cdot
  \alpha) = \obs_u(t \cdot \alpha)$. 
  This implies that for all  $\alpha \in \{ c
  \in \Actions | v \not \nintrel \dom(c) \text{ or } w \not \nintrel
  \dom(c) \}^*$, we have $\obs_u(s \cdot a
  \alpha) = \obs_u(t \cdot a \alpha)$. 
  Again by definition, this gives  $s\cdot a  \unwind_u^{v, w}  t\cdot a$. 
  
  For \lrta\ 
  let 
  %be 
  $a, b \in \Actions$ with $\dom(a) = v$ and $\dom(b) = w$ 
  and $v \not\nintrel w$, $w \not\nintrel v$ and $v \not\nintrel u$ or
  $w\not\nintrel u$. 
  Let 
  %be 
  $s \in \States$. 
  %Then it 
  Since we assume that all states are reachable, there
  exists $\beta \in \Actions^*$ with 
  $s = s_0 \cdot \beta$ . 
  For any 
  $\alpha \in 
    \{ c \in \Actions | v \not \nintrel \dom(c) \text{ or } w \not
    \nintrel \dom(c)  \}^*$
    we have
    \begin{equation*}
      \nintrelimage{v} \cap \nintrelimage{w} \cap \{ u, \dom(c) \mid c
      \in \Alphabet{ab\alpha} \} = \emptyset \enspace. 
    \end{equation*}
    This shows that $a$ and $b$ are swappable in $\beta a b \alpha$. 
    By Lemma~\ref{lemma:swap swappable elements}, 
    %Lemma 2 (S\&P), 
    we have $\ta_u(\beta a b \alpha) = \ta_u(\beta
    b a \alpha)$. 
  Then by applying \tasecty\ of $M$ we get for any such an $\alpha$:
  \begin{align*}
    \obs_u(s \cdot a b \alpha)  = \obs_u(s_0 \cdot \beta a b \alpha)    
     = \obs_u(s_0 \cdot \beta b a \alpha ) 
     = \obs_u(s \cdot b a \alpha)
  \end{align*}
  This shows that $s \cdot a b  \unwind_u^{v,w} s \cdot b a$.

  For the other direction of this proof we assume that $M$ is
  \ta-insecure, but \ipsec, 
  %ron: +
  with respect to $\nintrel$, and show that there does not exist a TA-unwinding on $M$ with respect to $\nintrel$.
  From 
  %Cor 2 S&P 
  Corollary~\ref{corollary:ta sec characterization 2}, it follows that 
  %it 
  there exists
$q \in \States$, $u \in \Dom$, $a, b, \in \Actions$, $\alpha \in
\Actions^*$ 
%with
such that  $a$ and $b$ are swappable in $a b \alpha$ and 
$\obs_u(q \cdot a b \alpha) \neq \obs_u(q \cdot b a \alpha)$. 
It follows that $v \neq w$. 
We set $v = \dom(a)$ and $w = \dom(b)$. 
%and let $\unwind_u^{v, w}$ be
We suppose that $\unwind_u^{v, w}$ is
an equivalence relation on $\States$ that satisfies \scta\  and
\lrta, 
%ron:+ 
and show that it cannot also satisfy \octa. 

Since $a$ and $b$ are swappable in $a b \alpha$ it follows directly
that $v \not\nintrel w$, $w \not\nintrel v$ and $v \not\nintrel u$ or
$w \not\nintrel u$. 
Therefore by \lrta, we have $q \cdot a b \unwind_u^{v, w} q \cdot b a$. 
Since $v \not\nintrel \dom(c)$ or $w \not\nintrel\dom(c)$ for all $c
\in \Alphabet{\alpha}$, it follows by \scta\  that 
$q \cdot a  b\alpha \unwind_u^{v, w} q \cdot b a \alpha$.
  Since $\obs_u(q \cdot a b \alpha) \neq \obs_u(q \cdot
   b a \alpha)$, we obtain that 
  \octa\ 
  is not satisfied. 
\end{proof}

Note that by Theorem~\ref{thm:ip-unwind}, the reference to IP-security 
may be replaced by the existence of an IP-unwinding, giving a characterization 
of TA-security that is stated entirely in terms of the existence of unwinding relations.

\section{Complexity} 
\label{sec:complexity} 

In this section, we consider the complexity of algorithmic 
verification of the notions we have discussed, 
in the case of finite state systems. 

We show that three of these notions (\psecty, \ipsecty, and \tasecty) are decidable, and in fact can be decided in polynomial time, and we prove that \tosecty\ 
%ron8: is 
and ITO-security are 
undecidable.

\subsection{Complexity of \psecty}\label{sec:alg-p}

The algorithm presented in Figure~\ref{fig:alg-psecty} checks \psecty. 
%ron: 
The main idea of the algorithm is to use the unwinding characterization of 
\psecty, and compute the minimal equivalence relations satisfying 
conditions \scp and \lrp and check that these satisfy \ocp. 
The equivalence relations are represented as partitions of the set $S$, 
using the disjoint-set data
structure which provides functions \makeset, \union\ and \findset, 
with low amortized cost per operation
\cite{Tarjan75}. 
%For details see Cormen, or somewhere else..

\begin{figure}
\begin{algorithm}[H]
	\caption{Decide \psecty}
 \ForEach{ $u \in \Dom$} 
  {
    \tcc{ create a new partition } 
    \ForEach{ $ s \in \States$ } 
    { 
      \makeset($s$) \;
    }
    %ron9: set $P$ be an empty list \;
    let $P$ be an empty list \;
    let $\store$ be empty \;
    \tcc{ apply LR conditions } 
    \ForEach{ $s \in \States$ } 
    { 
			\nllabel{alg:lrcond}
      \ForEach{ $a \in \Actions$ with $\dom(a) \not \nintrel u$ } 
      { 
        \If{ $\findset(s) \neq \findset( s \cdot a )$ }
        {
%ron9: 					$\store((s\cdot a, s), (s, s), (a, \epsilon))$ in \store \;
					add $((s\cdot a, s), (s, s), (a, \epsilon))$ to \store \;
					insert $(s\cdot a, s)$ into the list $P$ \;
          $\union(s\cdot a, s)$ \; 
          \If{ $\obs_u(s \cdot a) \neq \obs_u(s)$ }
					{
						\Return $\computewitness(s\cdot a, s, \epsilon, \epsilon)$ \;
					}
        }
      }
    }
    \tcc{ apply SC conditions } 
			\While{ $P \neq \emptyset$ } 
      { 
        take a pair $(s, t)$ out of $P$ \;
        \nllabel{alg:remove}
        \ForEach{ $a \in \Actions$ }
        { 
          \nllabel{alg:foreachloop}
          \If{ $\findset(s \cdot a) \neq \findset(t \cdot a)$ }
          {
%ron9						$\store((s\cdot a, t\cdot a), (s, t), (a, a))$ in \store \;
						add $((s\cdot a, t\cdot a), (s, t), (a, a))$ to \store \;
						insert $(s \cdot a, t\cdot a)$ into the list $P$ \;
						$\union(s\cdot a, t\cdot a)$ \;
						\If{ $\obs_u(s \cdot a) \neq \obs_u(t \cdot a)$ }
						{
							\Return $\computewitness(s\cdot a, t\cdot a, \epsilon, \epsilon )$ \;
						}
					}
				}
			}
		}
\Return{``secure''}  
\end{algorithm}
\caption{An algorithm for \psecty.\label{fig:alg-psecty}}
\end{figure} 

\begin{figure}
\begin{procedure}[H]
\caption{compute-witness($s$, $t$, $\alpha$, $\beta$)}
 \If{ $s = t$ }
 {	
	find a shortest path $\gamma$ from $s_0$ to $s$ \;
	\Return $(\gamma \alpha, \gamma \beta)$ \;
	}
	\Else 
	{
		choose stored entry $((s, t), (s', t'), (a, b))$ \;
		$\computewitness(s', t', a \alpha, b \beta)$ \;
	}
\end{procedure}
\end{figure}

The set $P$ maintains the pairs of states that result in a union step. To compute a witness in the case of insecurity, the store data structure keeps track of  
the justification for each union step. 
Every entry of \store consists of three pairs of the form $(s, t),(s', t'), (a, b)$. Such an entry is stored if a union is applied on $s$ and $t$ in a step, where $s$ is reached from $s'$ by an action (or the empty trace) $a$ and $t$ is reached from $t'$ by $b$. Since a union is only applied if $\findset(s) \neq \findset(t)$ and after the union $\findset(s) = \findset(t)$ holds, only at most one stored entry has $(s, t)$ as a first pair. 
If the union is performed on states with different observations, the system is insecure and the $\computewitness$ procedure computes a witness for insecurity, i.e., two 
%ron9: traces  having the same $\purge_u$ value, but result 
runs that have the same $\purge_u$ value, but end
in states with different observations. 

To reference different values of the $\findset$ function, we
parameterize it with the number of unions, done during each iteration of the outer foreach-loop, 
%ron9: + 
i.e., for a fixed value of domain $u$. 
The function $\findset_i$,  
%ron9: + 
for $i\geq 0$, 
denotes the values of $\findset$ after the
%ron9: $i$-th application of union during one iteration of the outer foreach-loop we call $i$ the union-number (of this iteration). 
$i$-th application of union during one iteration of the outer foreach-loop.  We call $i$ a union-number (of this iteration). 

Similarly, $P_i$ denotes the set of pairs in $P$ 
after the $i$-th application of union. Note that $P$ maintains this value until the line just before the next application of union. 
The set $P_{\leq i} = \bigcup_{j \leq i} P_j$ is the set of pairs inserted into $P$ up to the $i$-th union-number, including the pairs that are removed from $P$. 
Since $P_{\leq i}$ is a set of pairs of states, we will consider it as a binary relation on $\States$. 

Define for all $s, t \in \States$
\begin{equation*}
	s \eqiobs{u} t  \text{ iff } \obs_u(s) = \obs_u(t)
	\enspace.
\end{equation*}

To refer to stages in the construction of the 
equivalence relation,
define for all union-numbers $i$ and 
for all $s, t \in \States$
\begin{align*}
  s \eqifind{i} t & \text{ iff } \findset_i(s) = \findset_i(t) \\
  s \eqiscall{i} t & \text{ iff } s \eqifind{i} t \text{ and } s \cdot a
  \eqifind{i} t \cdot a \text{ for all } a \in \Actions 
\end{align*}
Note that both relations are equivalence relations on $S$. 
Note also that these relations are monotone in $i$, i.e., if $s
\eqifind{i} t$ then $s \eqifind{i+1} t$ and if $s \eqiscall{i} t$ then
$s \eqiscall{i+1} t$.

To show correctness of the algorithm, we will show that is sufficient to guarantee \scp and \ocp for the pairs of states collected in $P$. 
\begin{lemma}
  Fix an iteration of the outer foreach-loop, and let $i$ be a union-number of this iteration. 
Then $\eqifind{i}$ is the smallest equivalence relation on $\States$ that includes $P_{\leq i}$. 
\end{lemma}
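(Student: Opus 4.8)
The plan is to prove the lemma by a straightforward induction on the union-number $i$, the only real content being the standard correctness property of the disjoint-set data structure: after a sequence of \makeset\ and \union\ operations, $\findset(x)=\findset(y)$ holds precisely when $x$ and $y$ lie in the same block of the partition generated by the unioned pairs (see \cite{Tarjan75}). Before starting the induction I would first pin down the correspondence between union-numbers and the pairs stored in $P$. By inspection of the algorithm, in both the LR phase and the SC phase each \union\ operation is immediately preceded by the insertion into $P$ of exactly the pair on which that \union\ is performed. Writing $p_j=(x_j,y_j)$ for the pair unioned at the $j$-th union-number, it follows that, viewed as a binary relation, $P_{\leq i}=\{p_1,\dots,p_i\}$; the removal of pairs from $P$ during the SC phase is irrelevant here, since $P_{\leq i}$ by definition retains the removed pairs.

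For the base case $i=0$, the initial \makeset\ loop places every state in its own singleton block, so $\findset_0(s)=\findset_0(t)$ iff $s=t$; hence $\eqifind{0}$ is the diagonal relation, which is exactly the smallest equivalence relation containing $P_{\leq 0}=\emptyset$. For the inductive step I would assume that $\eqifind{i}$ is the smallest equivalence relation containing $P_{\leq i}$. The $(i{+}1)$-st \union\ merges the blocks of $x_{i+1}$ and $y_{i+1}$ and leaves all other blocks unchanged, so by the correctness property of union-find, $\eqifind{i+1}$ is the smallest equivalence relation containing $\eqifind{i}\cup\{(x_{i+1},y_{i+1})\}$. Combining this with the induction hypothesis and the elementary fact that the smallest equivalence relation containing $R\cup\{p\}$ coincides with the smallest equivalence relation containing the equivalence closure of $R$ together with $p$, I obtain that $\eqifind{i+1}$ is the smallest equivalence relation containing $P_{\leq i}\cup\{p_{i+1}\}=P_{\leq i+1}$, closing the induction.

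I expect that there is no genuine mathematical obstacle here: the argument is essentially bookkeeping on top of the known behaviour of union-find. The one point requiring care is the index alignment---establishing that $P_{\leq i}$ really is the set of the first $i$ unioned pairs, and correctly handling the off-by-one relationship between the insertion of a pair, the \union\ that follows it, and the snapshots $\findset_i$ and $P_i$ as defined in the surrounding text. Once that correspondence is fixed, the induction and the appeal to union-find correctness are routine.
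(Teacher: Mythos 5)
Your proposal is correct and follows essentially the same route as the paper's proof: induction on the union-number $i$, with the base case being the identity relation after \makeset, and the inductive step resting on the observation that each \union\ is performed on exactly the pair just inserted into $P$, so that $\eqifind{i}$ is the equivalence closure of $P_{\leq i}$. Your version merely spells out the index alignment and the union-find correctness invariant more explicitly than the paper's terser argument.
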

\begin{proof}
%ron9: 
By induction on $i$. For $i=0$ we have $P_i$ empty, so the smallest equivalence relation is the identity relation, which is 
also the relation corresponding to the initial value of $\findset_i$. At each union, 
the pairs inserted into $P$ are exactly those that are used by the union. Therefore, $\eqifind{i}$ is the reflexive, symmetric, transitive closure of $P_{\leq i}$. Hence, $\eqifind{i}$ is the smallest equivalence relation on $\States$ that includes $P_{\leq i}$. 
\end{proof}

First, we show that the witness produced by the algorithm is correct. 

\begin{lemma}
 \label{lem:pinsecty-alg}
 If the algorithm terminates with a witness $(\alpha, \beta)$, the analyzed system is P-insecure and we have 
 $\purge_u(\alpha) = \purge_u(\beta)$ and $\obs_u(s_0 \cdot \alpha) \neq \obs_u(s_0 \cdot \beta)$, where $u$ is the  agent chosen by the outer foreach-loop in the iteration where the compute-witness procedure is called. 
\end{lemma}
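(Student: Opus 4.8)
The plan is to prove the correctness of the \computewitness\ procedure by isolating two invariants that are preserved along every recursive call and then reading off the claimed properties at the terminating call. Concretely, I would show that whenever \computewitness$(s,t,\alpha,\beta)$ is invoked during the iteration of the outer loop for domain $u$, the following hold: (I1) $\purge_u(\alpha)=\purge_u(\beta)$, and (I2) $\obs_u(s\cdot\alpha)\neq\obs_u(t\cdot\beta)$. Granting these for the base case, the lemma follows at once, as described in the last paragraph.

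First I would record the shape of the entries in \store. By inspecting the only two places where entries are added, every stored entry $((\hat s,\hat t),(s',t'),(a,b))$ satisfies $\hat s=s'\cdot a$ and $\hat t=t'\cdot b$: in the \lrp-branch the entry is $((s\cdot a,s),(s,s),(a,\epsilon))$ with $\dom(a)\not\nintrel u$, and in the \scp-branch it is $((s\cdot a,t\cdot a),(s,t),(a,a))$. Moreover, since a \union\ is performed only when its two arguments lie in different classes, every pair pushed onto $P$, hence every first component of a stored entry, consists of two distinct states; so the base case $s=t$ can be reached only through an \lrp-entry whose parent is an identity pair $(s,s)$. I would also invoke the uniqueness noted just before the lemma, that at most one stored entry has a given first component, so that the lookup in \computewitness\ is well defined.

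Next I would verify the invariants. For the initial calls made by the algorithm we have $\alpha=\beta=\epsilon$, so (I1) is trivial and (I2) is exactly the observation mismatch that triggered the call. For the recursive step, suppose \computewitness$(\hat s,\hat t,\alpha,\beta)$ with $\hat s\neq\hat t$ recurses to \computewitness$(s',t',a\alpha,b\beta)$ via the stored entry above. Invariant (I2) is preserved because $s'\cdot(a\alpha)=(s'\cdot a)\cdot\alpha=\hat s\cdot\alpha$ and likewise $t'\cdot(b\beta)=\hat t\cdot\beta$, so the two observations are literally unchanged. For (I1) I would use that $\purge_u$ is an erasing monoid homomorphism on $\Actions^*$: in the \lrp-case $b=\epsilon$ and $\dom(a)\not\nintrel u$, so $\purge_u(a\alpha)=\purge_u(\alpha)=\purge_u(\beta)=\purge_u(b\beta)$; in the \scp-case $a=b$, so $\purge_u(a\alpha)=\purge_u(a)\purge_u(\alpha)=\purge_u(a)\purge_u(\beta)=\purge_u(a\beta)$. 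Termination is by a well-founded measure: in the \scp-case the parent $(s',t')$ was inserted into $P$ at a strictly smaller union-number than $(\hat s,\hat t)$, while the \lrp-case recurses directly into the base case.

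Finally, at the base case $s=t$ the procedure fixes a shortest path $\gamma$ with $s_0\cdot\gamma=s$ and returns $(\gamma\alpha,\gamma\beta)$. Using the homomorphism property once more, $\purge_u(\gamma\alpha)=\purge_u(\gamma)\purge_u(\alpha)=\purge_u(\gamma)\purge_u(\beta)=\purge_u(\gamma\beta)$ by (I1); and since $s_0\cdot\gamma\alpha=s\cdot\alpha$ and $s_0\cdot\gamma\beta=s\cdot\beta=t\cdot\beta$, invariant (I2) gives $\obs_u(s_0\cdot\gamma\alpha)\neq\obs_u(s_0\cdot\gamma\beta)$. These are exactly the two properties the lemma requires of the returned witness, and together they exhibit a violation of the definition of \psecty, so $M$ is P-insecure. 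I expect the main obstacle to be the bookkeeping in the inductive step, namely verifying the reconstruction identities $\hat s=s'\cdot a$ and $\hat t=t'\cdot b$ uniformly for both entry types and checking the \purge\ invariant in each case, together with a clean formulation of the well-founded measure (the union-number) that justifies termination of the recursion.
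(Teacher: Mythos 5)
Your proof is correct and takes essentially the same approach as the paper's: the paper likewise traces the \computewitness\ recursion back through the stored entries to $s_0$, noting that the two reconstructed paths differ only in actions $a$ with $\dom(a)\not\nintrel u$ (hence have equal $\purge_u$ values) while the endpoint observations differ. Your write-up just makes the paper's terse argument rigorous by isolating the two invariants explicitly and supplying the uniqueness and termination bookkeeping.
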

\begin{proof}
 The procedure compute-witness is called with two states as parameter having different observations for an agent $u$. 
 From these states, the compute-witness procedure constructs two paths back to $s_0$. From the stored values, it follows that these paths only differ in actions $a$ with $\dom(a) \not\nintrel u$. Therefore, we have $\purge_u(\alpha) = \purge_u(\beta)$. 
 Hence the constructed paths $\alpha$ and $\beta$ are a witness for the insecurity of the system. 
\end{proof}

The correctness of the algorithm follows from this Lemma, showing that the converse holds, too. 
\begin{lemma}
 \label{lem:psecty-alg}
 If the algorithm terminates with ``secure'', the analyzed system is \psec. 
\end{lemma}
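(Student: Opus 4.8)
The plan is to exhibit, for each agent $u$, an unwinding relation and then invoke Rushby's characterization, which (as recalled in the excerpt) says that $M$ is \psec\ iff there exists an unwinding on $M$, and which holds even for intransitive policies. Fix $u$ and let $\unwind_u$ denote $\eqifind{n}$, where $n$ is the final union-number of the iteration of the outer foreach-loop for $u$. By the lemma showing that $\eqifind{i}$ is the smallest equivalence relation on $\States$ containing $P_{\leq i}$, the relation $\unwind_u$ is precisely the smallest equivalence relation containing the set $P$ of all pairs inserted during this iteration. I will verify that $\unwind_u$ satisfies \lrp, \scp and \ocp; since this holds for every $u$, the family $\{\unwind_u\}_{u\in\Dom}$ is an unwinding and the claim follows.

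For \lrp, the LR phase inspects every $s\in\States$ and every $a\in\Actions$ with $\dom(a)\not\nintrel u$, and either already has $\findset(s)=\findset(s\cdot a)$ or performs $\union(s\cdot a,s)$; in both cases $s$ and $s\cdot a$ lie in a common class at that moment. Since the relations $\eqifind{i}$ are monotone in $i$ (each union only coarsens the partition), this persists to $\unwind_u$, giving $s\unwind_u s\cdot a$.

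The main step, and the chief obstacle, is \scp; here I first reduce to the pairs collected in $P$. Because $\unwind_u$ is the smallest equivalence relation containing $P$, any pair with $s\unwind_u t$ is witnessed by a chain $s=r_0,r_1,\dots,r_k=t$ in which each consecutive pair lies in $P$, in its converse, or is trivial. Hence, if $r\cdot a\unwind_u r'\cdot a$ holds for every $(r,r')\in P$ and every $a\in\Actions$, then, using the symmetry and transitivity of $\unwind_u$ along such a chain, $s\cdot a\unwind_u t\cdot a$, which is \scp in full. It therefore suffices to establish \scp on the generating pairs. Now every pair inserted into $P$, whether by the LR phase or the SC phase, is eventually removed and processed by the while-loop, which terminates because each union strictly decreases the finite number of classes while a pair is inserted only together with a union. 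When a pair $(s,t)$ is processed, the loop, for each $a\in\Actions$, ensures $\findset(s\cdot a)=\findset(t\cdot a)$, performing $\union(s\cdot a,t\cdot a)$ if necessary; by monotonicity this persists, so $s\cdot a\unwind_u t\cdot a$ for all $(s,t)\in P$ and all $a$. Combined with the reduction, this gives \scp.

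For \ocp I reuse the same reduction. The only way the algorithm returns ``secure'' rather than invoking $\computewitness$ is that no observation test fails, and these tests are performed exactly at each union, i.e.\ for each pair $(s,t)\in P$; thus $\obs_u(s)=\obs_u(t)$ for every generating pair. Transitivity of equality along the defining chain of any $s\unwind_u t$ then yields $\obs_u(s)=\obs_u(t)$, establishing \ocp. With \lrp, \scp and \ocp verified for every $u$, the relations $\unwind_u$ form an unwinding of $M$ with respect to $\nintrel$, so $M$ is \psec\ by Rushby's characterization.
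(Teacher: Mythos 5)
Your proof is correct and follows essentially the same route as the paper's: both verify that the final union-find relation --- the smallest equivalence relation containing the pairs collected in $P$ --- satisfies \lrp, \scp\ and \ocp, and then conclude via the unwinding characterization of \psecty. The only cosmetic difference is that where you propagate \scp\ and \ocp\ from the generating pairs along explicit chains, the paper packages the same step as a minimality argument (the relation $\eqiscall{m}$ of pairs closed under the SC condition is an equivalence relation containing $P_{\leq m}$, hence contains the generated relation, and \ocp\ is handled by induction on union steps).
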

\begin{proof}
	Consider one iteration of the outer foreach-loop, where some agent $u\in \Dom$ is chosen. 
 Let $m$ be the last union-number after this iteration of the outer foreach-loop.  
 We will show that the unwinding conditions \lrp, \scp and \ocp hold for $\eqifind{m}$. 
 It is clear that after the foreach-loop starting in line~\ref{alg:lrcond} the \lrp condition holds for $\eqifind{m}$, since for every $s \in \States$ and every $a \in \Actions$ with $\dom(a) \not\nintrel u$, we have $s \eqifind{m} s \cdot a$. 
 
 After each union step, it is checked that the observations are equal for the two merged states. 
 Assuming inductively that the observations are all equal on each of the two merged set of states, it follows that the observations are equal on the merged set. Therefore, for every $i \leq m$ we have ${\eqifind{i}} \subseteq {\eqiobs{u}}$. 
  Moreover, the relation $\eqifind{m}$ satisfies \ocp. 

 After this iteration of the outer foreach-loop,  $P_m$ is empty and therefore, we have for every $s, t \in \States$ that 
 if 
 $(s, t) \in P_{\leq m}$, then $s \eqifind{m} t$ and $s \cdot a \eqifind{m} t \cdot a$ for all actions $a \in \Actions$.
 This is guaranteed by the foreach-loop in line~\ref{alg:foreachloop} for every pair taken out of $P$ in the line above. 
 Therefore $P_{\leq m} \subseteq {\eqiscall{m}}$. 
 Since $\eqiscall{m}$ is an equivalence relation and since $\eqifind{m}$ is the smallest equivalence relation that contains $P_{\leq m}$, it follows that ${\eqifind{m}} \subseteq {\eqiscall{m}}$.
 Therefore, $\eqifind{m}$ satisfies \scp. 
\end{proof}

The following Lemma shows the correctness of the compute-witness procedure and gives a bound for its running time. 

\begin{lemma}
%  The procedure compute-witness finds a path from input states $s$ and $t$ to $s_0$ in time $O(|S| \cdot |A|)$ if such a path exists. 
The procedure compute-witness computes a witness in $O(|\States| \cdot |\Actions|)$. 
\end{lemma}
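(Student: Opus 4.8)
The plan is to prove two things: that the recursion terminates and reaches its base case---so that a genuine witness is returned---and that the total running time is $O(|\States|\cdot|\Actions|)$. The validity of the returned pair $(\alpha,\beta)$ as a witness (that $\purge_u(\alpha)=\purge_u(\beta)$ and that the two final observations differ) is already supplied by Lemma~\ref{lem:pinsecty-alg}, so only termination and the time bound remain to be shown here.

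For termination I would trace the chain of stored entries that the recursion follows. As noted in the text, each pair $(s,t)$ occurs as the first component of at most one entry of \store, so the entry selected in the recursive branch is uniquely determined. There are two kinds of entries. An entry created by the \lrp\ loop has the form $((s\cdot a,s),(s,s),(a,\epsilon))$, so following it yields the call $\computewitness(s,s,\dots)$, which meets the base case $s=t$ at once. An entry created by the \scp\ loop has the form $((s\cdot a,t\cdot a),(s,t),(a,a))$, where the pair $(s,t)$ was itself inserted into $P$---and hence received its own stored entry---at a strictly smaller union-number. Thus the union-number of the current pair strictly decreases with each \scp\ entry traversed, and the recursion must reach an \lrp\ entry, i.e.\ the base case, after finitely many steps.

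For the running time I would bound the recursion depth by the number of unions performed in a single iteration of the outer loop. Each union merges two distinct classes of the partition of $\States$, decreasing the class count by one; since the partition begins with $|\States|$ singleton classes, at most $|\States|-1$ unions occur, and by the decreasing-union-number argument the recursion traverses at most this many entries. Its depth is therefore $O(|\States|)$. Each step does only constant work: an $O(1)$ lookup of the unique stored entry (via a table indexed by pairs of states) together with prepending one action to each of $\alpha$ and $\beta$, which is $O(1)$ under a linked-list representation. Hence the recursion costs $O(|\States|)$ and builds sequences $\alpha,\beta$ of length $O(|\States|)$.

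The only remaining cost is the shortest-path computation in the base case. As every state is reachable, a breadth-first search from $s_0$ in the transition graph---having $|\States|$ vertices and $|\States|\cdot|\Actions|$ edges---returns a shortest path $\gamma$ to $s$ (of length at most $|\States|-1$) in time $O(|\States|+|\States|\cdot|\Actions|)=O(|\States|\cdot|\Actions|)$. Combining this with the $O(|\States|)$ cost of the recursion gives the stated bound. I expect the main obstacle to be the termination argument of the second paragraph: one must see that the recursion depth is governed by the monotonically decreasing union-numbers rather than by the lengths of the action sequences being assembled. Once that is in place, the running-time accounting is routine, with the single breadth-first search dominating all other costs.
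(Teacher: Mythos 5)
Your proof is correct, and its overall skeleton matches the paper's: follow the chain of stored entries from the offending pair back to a diagonal pair $(s',s')$, bound the length of that chain by the number of unions performed (at most $|\States|-1$), and then charge $O(|\States|\cdot|\Actions|)$ for the shortest-path computation from $s_0$. The only genuine difference is how the two arguments certify that the chain reaches a diagonal pair. The paper proves a global structural invariant by induction on union-numbers: the graph whose edges join the first and second components of stored entries is a forest of directed rooted trees whose roots are exactly the diagonal pairs $(s,s)$, because each LR-entry attaches a fresh vertex to a diagonal root, and each SC-entry attaches a vertex that cannot yet be in the graph (its two states still lie in different $\findset$-classes) to a previously inserted pair. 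You instead use a local decreasing-measure argument: along the chain the union-number at which each visited pair was inserted into $P$ strictly decreases, so the recursion halts, and it can only halt at the base case since every non-diagonal pair that occurs as a second component of an SC-entry has its own stored entry. Your variant is somewhat more economical, needing only the observation that SC-entries point to earlier-inserted pairs, whereas the paper's invariant delivers extra structure (acyclicity of the whole store graph and uniqueness of the root per component) that is not strictly required for the bound. Both arguments bound the recursion depth by the number of unions and both treat the constant-time lookup of the unique stored entry for a given first component as an implementation detail, so the final accounting, dominated by the breadth-first search, is the same.
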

\begin{proof}
 First, we will show that the graph induced by the stored values is a directed rooted tree. 
 For every stored entry $e$ of the form $e = ((s, t), (s', t'), (a, b))$ consider the projections 
 $\pi_0(e) = (s, t)$, $\pi_1(e) = (s', t')$ and $\pi_2(e) = (a, b)$. 
 For every union-number $i$, the graph of the stored values is $G_i = (V_i, E_i)$ with $V_i = \{\pi_0(e), \pi_1(e) \mid e \text{ is a stored entry up to the $i$-the union number}\}$ and $E_i = \{(\pi_0(e), \pi_1(e)) \mid e \text{ is a stored entry up to the $i$-the union number}\}$. 
 We will show by an induction on $i$, that the connected components of $G_i$ are directed rooted trees where all edges are oriented towards some root and all roots are of the form $(s,s)$ and that $P_{\leq i} \subseteq V_i \subseteq P_{\leq i} \cup \{(s, s) \mid s \in \States \}$ holds. 
 In the iterations of the loop for the LR-conditions, only edges of the form $((s, t), (s', s'))$ with $s \neq t$ are inserted. Therefore the resulting graph consists of directed rooted trees.  
 In the iterations of the loop for the SC-conditions, if an edge of the form $e = ((s, t), (s', t')$ is inserted into $G_{i}$, then 
 $\findset_{i}(s) \neq \findset_{i}(t)$ and therefore 
 %ron9: $(s, t) \not\in P_{\leq i}$ 
 $(s, t) \not\in P_{\leq i} \cup\{ (s, s) \mid s \in \States \}$
 and by induction hypothesis, $(s, t) \not\in V_{i}$. 
 Since $(s', t') \in P_{\leq i}$, the edge $e$ connects a new vertex with a vertex from $V_{i}$. Therefore, the resulting graph $G_{i+1}$ is again a directed rooted tree. 
 
 It follows that if the procedure compute-witness is called with some states $(s,t) \in P_{\leq i}$ then it finds a path to the unique root $(s', s')$ of the connected component of $(s, t)$ within $O(|S|)$. 
 A shortest path from $s'$ to $s_0$ can be found in $O(|S| \cdot |A|)$. 
\end{proof}

The running time of the whole algorithm can be analyzed as follows. 
Since the union is only applied on states in different sets, 
%ron9: the whole
for each $u\in D$, the total  
number of unions is bounded by $|\States|$. Note, that the insertion into $P$ is only combined with a union step, therefore the sets of pairs inserted into $P$ during the whole run of the algorithm is bounded by $|\States|$, too. 
Also, the number of stored triple of pairs is the same as the number of unions, since for every pair $(s, t)$ the union is only applied at most once. The stored values provide a function from the first to the second and third pair. %Therefore, the witness can be computed in linear time, if the stored values are hold in an array. 
The union-find operations can be implemented so as to 
have an amortized cost of $ \alpha(|S|)$, where $\alpha$ is the 
very slow growing (effectively constant for practical purposes) ``inverse'' of Ackermann's function
\cite{Tarjan75}. Thus, the running time of this algorithm is bounded by 
$O(|D| \cdot |A| \cdot |S| \cdot \alpha(|S|))$.

\subsection{Complexity of  \ipsecty} \label{sec:alg-ip}

An approach similar to that for \psecty\ works in the case of \ipsecty, based
on the unwinding characterization of \ipsecty. The 
algorithm is given in Figure~\ref{fig:alg-ipsecty}.

\begin{figure}
\begin{algorithm}[H]
	\caption{Decide \ipsecty}
 \ForEach{ $u \in \Dom$} 
  {
		\ForEach{ $v \in \Dom$ with $v \not\nintrel u$ }
		{
			\tcc{ create a new partition } 
			\ForEach{ $ s \in \States$ } 
			{ 
				\makeset($s$) \;
			}
			%ron9: set 
			let $P$ be an empty list \;
			let $\store$ be empty \;
			\tcc{ apply LR conditions } 
			\ForEach{ $s \in \States$ } 
			{ 
				%\nllabel{alg:lrcond}
				\ForEach{ $a \in \Actions$ with $\dom(a) = v$ } 
				{ 
					\If{ $\findset(s) \neq \findset( s \cdot a )$ }
					{
%ron9: 						$\store((s\cdot a, s), (s, s), (a, \epsilon))$ in \store \;
						add $((s\cdot a, s), (s, s), (a, \epsilon))$ to \store \;
						insert $(s\cdot a, s)$ into the list $P$ \;
						$\union(s\cdot a, s)$ \; 
						\If{ $\obs_u(s \cdot a) \neq \obs_u(s)$ }
						{
							\Return $\computewitness(s\cdot a, s, \epsilon, \epsilon)$ \;
						}
					}
				}
			}
			\tcc{ apply SC conditions } 
				\While{ $P \neq \emptyset$ } 
				{ 
					take a pair $(s, t)$ out of $P$ \;
					% \nllabel{alg:remove}
					\ForEach{ $a \in \Actions$ with $v \not\nintrel \dom(a)$ }
					{ 
						% \nllabel{alg:foreachloop}
						\If{ $\findset(s \cdot a) \neq \findset(t \cdot a)$ }
						{	
%ron9: 							$\store((s\cdot a, t\cdot a), (s, t), (a, a))$ in \store \;
							add $((s\cdot a, t\cdot a), (s, t), (a, a))$ to \store \;
							insert $(s \cdot a, t\cdot a)$ into the list $P$ \;
							$\union(s\cdot a, t\cdot a)$ \;
							\If{ $\obs_u(s \cdot a) \neq \obs_u(t \cdot a)$ }
							{
								\Return $\computewitness(s\cdot a, t\cdot a, \epsilon, \epsilon )$ \;
							}
						}
					}
				}
			}
		}
\Return{``secure''}  
\end{algorithm}
\caption{An algorithm for \ipsecty.\label{fig:alg-ipsecty}}
\end{figure}

The argument for correctness  is similar to that for 
the algorithm for \psecty: the algorithm  computes the minimal equivalence relations
satisfying \scip  and \lrip, and checks that these satisfy \ocip. 
The argument for the SC conditions is similar to that of Lemma~\ref{lem:pinsecty-alg} and \ref{lem:psecty-alg}. 
The running time of the algorithm for checking \ipsecty\ is 
$O(|D|^2 \cdot |A| \cdot |S| \cdot \alpha(|S|))$.

\subsection{Complexity of \tasecty}   \label{sec:alg-ta}

\begin{figure}
\begin{algorithm}[H]
	\caption{Decide \tasecty}
%ron9: +	
	Run the Algorithm for IP-security: if it returns "secure", then  \\
	\ForEach{ $u \in \Dom$} 
  {
		\ForEach{ $v \in \Dom$ }
		{ 
			\ForEach{ $w \in \Dom$ with $w \not\nintrel v$ and $v \not\nintrel w$ and $w \not\nintrel u$ }
			{
				\tcc{ create a new partition } 
				\ForEach{ $ s \in \States$ } 
				{ 
					\makeset($s$) \;
				}
				%ron9: set 
				let $P$ be an empty list \;
				let $\store$ be empty \;
				\tcc{ apply LR conditions } 
				\ForEach{ $s \in \States$ } 
				{ 
					% \nllabel{alg:lrcond}
					\ForEach{ $a \in \Actions$ with $\dom(a) = v$ } 
					{
						\ForEach{ $b \in \Actions$ with $\dom(b) = w$ }
						{
							\If{ $\findset(s \cdot ab) \neq \findset( s \cdot ba )$ }
							{
%ron9: 								$\store((s\cdot ab, s\cdot ba), (s, s), (ab, ba))$ in \store \;
								add $((s\cdot ab, s\cdot ba), (s, s), (ab, ba))$ to \store \;
								insert $(s\cdot ab, s\cdot ba)$ into the list $P$ \;
								$\union(s\cdot ab, s\cdot ba)$ \; 
								\If{ $\obs_u(s \cdot ab) \neq \obs_u(s\cdot ba)$ }
								{
									\Return $\computewitness(s\cdot ab, s\cdot ba, \epsilon, \epsilon)$ \;
								}
							}
						}
					}
				}
				\tcc{ apply SC conditions } 
				\While{ $P \neq \emptyset$ } 
				{ 
					take a pair $(s, t)$ out of $P$ \;
					% \nllabel{alg:remove}
					\ForEach{ $a \in \Actions$ with $v \not\nintrel \dom(a)$ or $w \not\nintrel \dom(a)$ }
					{ 
						% \nllabel{alg:foreachloop}
						\If{ $\findset(s \cdot a) \neq \findset(t \cdot a)$ }
						{
%ron9: 							$\store((s\cdot a, t\cdot a), (s, t), (a, a))$ in \store \;
							add $((s\cdot a, t\cdot a), (s, t), (a, a))$ to \store \;
							insert $(s \cdot a, t\cdot a)$ into the list $P$ \;
							$\union(s\cdot a, t\cdot a)$ \;
							\If{ $\obs_u(s \cdot a) \neq \obs_u(t \cdot a)$ }
							{
								\Return $\computewitness(s\cdot a, t\cdot a, \epsilon, \epsilon )$ \;
							}
						}
					}
				}
			}
		}
	}
\Return{``secure''}  
\end{algorithm}
\caption{An algorithm for \tasecty.\label{fig:alg-tasecty}}
\end{figure}

The argument for correctness  of the algorithm of Figure~\ref{fig:alg-tasecty} is similar to that for 
the algorithm for \psecty: the algorithm  computes the minimal equivalence relations
satisfying \scta  and \lrta, and checks that these satisfy \octa. 
The argument for the SC conditions is similar to that of Lemma~\ref{lem:pinsecty-alg} and \ref{lem:psecty-alg}. 
The running time is $O(|D|^3 \cdot |A| \cdot |S| \cdot \alpha(|S|))$.

\subsection{Space Complexity and Symbolic Implementation for \psecty, \ipsecty\ and \tasecty}

The  algorithms presented above for \psecty, \ipsecty\ and \tasecty\ 
all require space linear in the size of the number of states of the system. Due the ``state space explosion" problem, 
i.e.,  the fact that the number of states of a system grows exponentially with the number of 
state variables, the number of states may be very large in practical examples. 
For each of these notions of security, it is possible to trade off space for time, at the cost of introducing nondeterminism. Instead of computing an explicit representation of the minimal unwinding relation $\unwind$ of 
the appropriate type, we search through a graph in which the vertices are pairs of states $(s,t)$ such that 
$s\unwind t$, and in which edges correspond to one of the unwinding rules of type LR or SC. 
The search begins at a vertex $(s,s)$, where $s$ is reachable state, and 
terminates and declares ``insecure" if a pair $(s,t)$ is reached such that $\obs_u(s) \neq \obs_u(t)$.   
The search can be conducted using nondeterminism and terminated at depth $|S|^2$ 
with a declaration of ``secure" in the case that no such pair is found. 
This gives a co-NLOGSPACE =NLOGSPACE  algorithm. 
Since graph search trivially reduces to these problems, verification of all three 
security notions is complete for nondeterministic logarithmic space.

Both the algorithms given above and the nondeterministic logarithmic space
approach rely on explicit representation of states. 
We note that, in practice, an effective approach to deciding the three properties 
may be to instead use {\em symbolic} representations of states and relations to 
represent the fixpoint computation for the minimal relation on reachable states 
satisfying the SC and LR type rules, 
following techniques well known from the model checking area \cite{BCMDH92}, 
and then to intersect with a symbolic representation of the complement of the OC type rule 
and check for non-emptiness. The performance of this approach is unpredictable in general, 
so comparison with the algorithms above is a matter for experimental research. 

\subsection{TO-security}

We now prove that \tosecty\ is undecidable. The proof also shows that the source of the undecidability does not lie in using 
complex policies, in fact the problem remains undecidable for a very simple, small policy.

\medskip

\begin{theorem} 
\label{thm:to-undecidable}
It is undecidable 
whether $M$ is \tosec\ with respect to $\nintrel$,
even for a fixed policy containing $4$ domains.
\end{theorem}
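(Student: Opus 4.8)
The plan is to prove undecidability by a reduction from the Post Correspondence Problem (PCP), which is undecidable. I would fix once and for all a four-domain policy $\nintrel$ and then, from an arbitrary PCP instance $P$, construct in polynomial time a finite-state machine $M_P$ over this fixed policy such that $M_P$ is \emph{not} \tosec\ if and only if $P$ has a solution. Since the number of domains is held constant across the reduction, this simultaneously yields undecidability and shows that the source of hardness is not the complexity of the policy, as claimed in the statement. A convenient policy to use is, for instance, the ``two-downgrader'' shape with domains $H, D_1, D_2, L$, edges $H\nintrel D_1$, $H\nintrel D_2$, $D_1\nintrel L$, $D_2\nintrel L$ (plus reflexive edges), and in particular $H\not\nintrel L$. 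Throughout I would take the observer to be $L$ and argue via the flat characterization of Proposition~\ref{prop:ptflat}: $M_P$ is \tosec\ iff every pair $\alpha,\alpha'$ with $\purge_L(\alpha)=\purge_L(\alpha')$, $\tview_{D_1}(\alpha)=\tview_{D_1}(\alpha')$ and $\tview_{D_2}(\alpha)=\tview_{D_2}(\alpha')$ satisfies $\obs_L(s_0\cdot\alpha)=\obs_L(s_0\cdot\alpha')$.

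The mechanism I would exploit is the single feature separating \tosecty\ from the decidable notion \tasecty\ (Theorem~\ref{thm:ta-swap}): the condition of Proposition~\ref{prop:ptflat} compares the \emph{actually transmitted views} $\tview_v$, which are unbounded records of a concrete run, rather than the finitely representable $\ta$-trees on which the decidability of \tasecty\ rests. Although each $M_P$ has finitely many states, its runs, and hence its transmitted views, may be arbitrarily long, and it is this unbounded run-local information that supplies Turing power. The construction would let the high-level dynamics spell out, symbol by symbol, a candidate solution of $P$ and route it through the downgrading domains; the key point is that a downgrader transmits to $L$ only what it has \emph{actually observed}, so I can arrange that the transmitted views pin down $\purge_L$ and the symbols of the candidate while withholding exactly the bit recording whether the candidate is a genuine solution. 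The finite-state dynamics are then set up so that $\obs_L$ reaches a distinguished value precisely at the end of a correct solution.

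For the ``if'' direction (a solution yields TO-insecurity) I would read off an explicit witness pair $\alpha,\alpha'$ from a solution: two runs agreeing on $\purge_L$ and on every relevant $\tview_v$ but differing in $\obs_L$. I expect the main difficulty to be the ``only if'' direction, i.e.\ soundness of the reduction: one must show that \emph{every} pair of runs witnessing TO-insecurity forces a correct solution of $P$, with no spurious witnesses arising from ill-formed or partial computations. Achieving this requires designing the transition function and the observation maps tightly enough that any deviation from a well-formed, matching computation is immediately exposed in some transmitted view (thereby destroying the required collision), so that a view-collision combined with a difference in $L$'s observation can arise only from a genuine solution; verifying this invariant across all runs is where the bulk of the work lies. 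Making the encoded computation deterministic --- for example, reducing instead from the halting problem of a two-counter machine --- is a viable alternative that can ease the soundness argument, since the intended run then becomes unique.
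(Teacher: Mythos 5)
Your overall route coincides with the paper's: a reduction from PCP that maps each instance to a finite-state machine over a \emph{fixed} four-domain intransitive policy, correctness argued through the flat characterization of Proposition~\ref{prop:ptflat}, and the unboundedness of transmitted views correctly identified as what separates the undecidable \tosecty{} from the decidable notions. But what you have written is a plan, not a proof, and the part you defer is where the entire content of the theorem lies. You never construct $M_P$ (no states, no transition function, no observation maps), you never exhibit the witness pair for the ``if'' direction, and for the ``only if'' direction you say only that the dynamics must be designed ``tightly enough.'' The paper's proof spends most of its length on exactly this: a case analysis showing that any violating pair for the observer must consist of two completed, well-formed computations --- first forcing the $end$ action to occur in both runs (via $\tview_C$), then excluding the error outcomes ($i=0$, $V=\top$, $V\in\Sigma^*\setminus\{\epsilon\}$) by showing each would force the same error value in the other run, and only then concluding that the runs have the forms of equations~(\ref{eq:pcpo}) and~(\ref{eq:pcpt}), i.e.\ encode a genuine PCP solution. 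A further point your sketch omits entirely: a TO-insecurity could a priori be witnessed at \emph{any} domain, so the reduction must also establish that the system is \tosec{} with respect to the three non-observer domains unconditionally; the paper does this for $u=A,B,C$, and without it the equivalence ``$M_P$ insecure iff $P$ solvable'' does not follow.

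Two concrete places where your plan, as stated, would need repair. First, your choice of policy discards something load-bearing in the paper's: there the word-guesser $A$ satisfies $A\nintrel D$, so the guessed letters appear in $\purge_D$, which is precisely what forces the two witness runs to spell the \emph{same} word and hence makes a collision equivalent to $U_{i_1}\cdots U_{i_k}=W_{i_1}\cdots W_{i_k}$. In your policy $H\not\nintrel L$, so the candidate word is invisible to $\purge_L$ and must be pinned down exclusively through downgrader observations; but $\tview_v$ only covers the prefix of $v$'s view up to $v$'s last action, so runs in which a downgrader never acts (or stops acting early) have trivial transmitted views and threaten spurious collisions between runs with different hidden words. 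Your proposed remedy --- that any deviation is ``immediately exposed in some transmitted view'' --- points the wrong way for exactly these runs: when a downgrader is silent, nothing is exposed; what one must instead arrange (as the paper does via the $end$ action and $\obs_D=\bot$ until completion) is that $\obs_L$ stays uninformative on every such run. Second, switching to a two-counter machine to make ``the intended run unique'' does not ease the soundness burden: the systems here are deterministic anyway, and the difficulty is not nondeterminism of the encoded computation but the fact that insecurity witnesses range over \emph{all} pairs of action sequences, including ill-formed ones where agents act out of order or not at all; those junk runs must be neutralized by the same kind of case analysis regardless of what model of computation you encode.
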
 

\medskip

\begin{proof} 
We prove the undecidability of \tosecty\ by a reduction from the Post
Correspondence Problem \cite{Post}.  An instance of this problem
consists of a pair of sequences  ${\cal U} = U_1, \ldots, U_n$ and
${\cal W} = W_1, \ldots, W_n$ of words over an alphabet $\Sigma$ with at least
two letters.  The problem PCP is the set of such pairs  $({\cal
U,W})$ such that there exists a sequence of indices $i_1, \ldots, i_k$ with
$1\leq i_j \leq n$ for each $j=1,\ldots, k$, such that $U_{i_1} \ldots
U_{i_k} = W_{i_1} \ldots W_{i_k}$. We encode an instance of this
problem as a machine $M({\cal U,W})$ for the (intransitive) policy for
agents $A,B,C,D$ given by $A \nintrel C$, $A\nintrel D$, $B\nintrel C$
and $C\nintrel D$, such that  $({\cal U,W})\in PCP$ iff $M({\cal U,W})$ 
is not \tosec\ with respect to $\nintrel$. 

Intuitively, in the machine $M({\cal U,W})$, agent $A$ guesses a word
over $\Sigma$, and agent $B$ chooses whether this word is to be compared to a
sequence of $U_i$ or $W_i$, and guesses a sequence of indices used to
make the comparison. 
Agent $C$ observes the indices guessed by $B$, and guesses when the word being constructed is complete. 
Agent $D$ observes nothing until $C$ declares the end of the construction, and 
then observes  whether the word guessed by $A$ does in fact correspond to the sequence of
indices guessed by $B$. 
The definition of \tosec\ will be guaranteed to
hold with respect to agents $A,B$ and $C$, so the determination as to whether
$M({\cal U,W})$ is \tosec\ depends on how the observations of agent
$D$ relate to the actions and observations of $A$ and $C$.
More precisely, $M({\cal U,W})$ has 
\be 
\item states of the form $(p,V,i,x)$, where 
\be

\item $p\in \{U,U',W\}$ indicates whether the sequence of letters
  guessed by $A$ is to be compared with a sequence of $U_i$ (when
  $p\in \{U,U'\}$) or as a sequence of $W_i$ (when $p=W$).

\item $V$ is either a word over $\Sigma$ which is a prefix (possibly the empty word $\epsilon$) 
of one of the $U_i$ or $W_i$, or $\top$.  Intuitively, this indicates a part of
the word guessed by $A$ that will be compared to an index guessed by $B$. 
The case of $\top$ represents that an inconsistency has been detected.%
\footnote{Throughout, we use $\bot$ to represent undetermined information and 
$\top$ to represent inconsistency.}

\item $i\in \{0, \ldots, n\}$ is either $0$ (no activity so far) or 
the last index guessed by $B$, 

\item $x\in \{0,1\}$ is used to represent the state of the computation, 
with $0$ meaning ongoing and $1$ meaning complete. 

\ee

\item initial state $(U,\epsilon,0,0)$, 

\item actions 
\be 
\item of $A$: an action $a$ for each $a\in \Sigma$, corresponding to 
guessing the letter $a$
\item of $B$: an action $w$ (corresponding to the selection of $W$) 
 plus an action $g_i$ for each $i\in \{1,\ldots, n\}$ (corresponding to 
a guess of the index $i$) 
\item of $C$: an action $end$ 
\item of $D$: none 

\ee

\ee The transition function is defined as follows.  For all actions
$b$ and states $s= (p,V,i,x)$, if $x=1$ then we will have
$\step(s,b)=s$, i.e., once the computation has terminated, no action
changes the state. We therefore confine the definitions below to the
case $x=0$.  We make use of two functions $G\colon \{U,U',W\} \rightarrow
\{U,W\}$ defined by $G(U)=G(U')=U$ and $G(W)=W$, and $F\colon \{U,U',W\}
\rightarrow \{U',W\}$ defined by $F(U)=F(U') = U'$ and $F(W) = W$.

In a state $(p,V,i,x)$, the value $G(p)$ captures the choice of ${\cal
U}$ or ${\cal W}$ with which to compare the word being generated by
$A$. Intuitively, both $p=U$ and $p=U'$ represent that the word being
processed is to be compared with the ${\cal U}$. This the default, as
indicated in the initial state. The reason for including $U'$ is that
agent $B$ is given an opportunity to switch the system to comparing
with ${\cal W}$ only in the first step of a run. If it does not act,
then the choice remains with ${\cal U}$ for the remainder of the run.  

In the case of action $w$, we define $\step((p,V,i,0),w) = (W,V,i,0)$ if
$p= U$ and $\step((p,V,i,0),w) = (p,V,i,0)$ otherwise.
This says that $w$ switches the choice of comparison to ${\cal W}$. 
That the choice can be made only if $w$ is the initial action of a
run is captured by defining all other actions $b\neq w$ so that if 
$\step((p,V,i,0),b) = (p',V',i',x')$ then $p' = F(p)$. 

For the actions $a$ of $\Actions$, we define $\step((p,V,i,x),a) = 
(F(p),V',i,x)$, where $V' = V\cdot a$ if $V\cdot a$ is a prefix of
$G(p)_j$ for some $j$, and $V' = \top$ otherwise. Intuitively, $V$ is
used to collect a fragment of the sequence being generated by $A$ for
comparison with the $G(p)_j$.  We accumulate the fragment while it is
a prefix of such a string, and as soon as this is no longer the case
we flag the inconsistency.

For the actions $g_j$ of $B$, we define 
 $\step((p,V,i,0),g_j) = (F(p),V',j,0)$, where 
\be 
\item if $G(p)_j= V$ then $V' = \epsilon$, and  
\item if $G(p)_j\neq  V$ then $V' = \top$.
\ee 
Intuitively, this captures that the effect of the action $g_j$ is to 
compare $G(p)_j$ with the current fragment of the string being generated 
by $A$. If they are equal, we reset $V$ to $\epsilon$ in order to check the next fragment. 
Otherwise, we flag the inconsistency. 

For the action $end$ of $C$, we define 
 $\step((p,V,i,0),end) = (F(p),V',i,1)$, where 
\be 
\item if $V = \epsilon$ then $V'=\epsilon$, and  
\item if $V \neq \epsilon$ then $V'=\top$.  
\ee 
Intuitively, this action checks that the end is declared at a time 
when there is no fragment currently being processed, and flags an 
inconsistency otherwise.

The observations are now defined as follows. The observations of $A$
and $B$ are trivial: $\obs_A(s) = \obs_B(s) = \bot$ for all states
$s$.  For $C$, we define 
\begin{equation*}
\obs_C((p,V,i,x)) =
\begin{cases}
i &  \text{if $V = \epsilon$} \\
\top & \text{if $V=\top$} \\ 
\bot & \text{otherwise.}
\end{cases}
\end{equation*}
Note that this means
that for the initial state $s_0$ we have $\obs_C(s_0) = 0$.
Intuitively, since $i$ records the last (successful) guess of index
for a fragment of the word being generated by $A$, we have that $C$
becomes aware of a guess whenever it is correct, and can see from its
observation $\bot$ that a further fragment is in the process of being
constructed.  

For $D$ we define
\begin{equation*}
\obs_D((p,V,i,x))
 =
\begin{cases} 
\bot & \text{when $x=0$,} \\
G(p) & \text{when $x=1$, and $V = \epsilon$ and $i\neq 0$} \\
\top & \text{when $x= 1$ and either $V \neq \epsilon$ or $i=0$.}
\end{cases}
\end{equation*}
Intuitively, this means that $D$ observes $\bot$  until $C$ declares the 
end of the string, and learns whether ${\cal U}$ or ${\cal V}$ fragments
were being checked when a decomposition has been successfully guessed. 
Otherwise, it learns that the guesses do not match. 
This completes the definition of the system $M({\cal U}, {\cal W})$. 

\ifallproofs

We leave it to the reader to check that the conditions for \tosecty\
are satisfied in $M({\cal U}, {\cal W})$ for the agents $u=A,B,C$. For
$u=D$, we claim that the definition is violated iff there exists a
sequence of indices $i_1, \ldots i_k$ such that $U_{i_1}\ldots U_{i_k}
= W_{i_1}\ldots W_{i_k}$.  For, suppose that such a sequence exists.
Define
\begin{equation}\label{eq:pcpo}
\alpha =
U_{i_1}g_{i_1} \ldots g_{i{k-1}}U_{i_k}g_{i_k}end
\end{equation} 
and 
\begin{equation} \label{eq:pcpt}
\alpha' = w
W_{i_1}g_{i_1} \ldots g_{i{k-1}}W_{i_k}g_{i_k}end.
\end{equation} 
 Then we have
$\purge_D(\alpha) = U_{i_1}\ldots U_{i_k}end= W_{i_1}\ldots W_{i_k}end
= \purge_D(\alpha')$.  Moreover, $v\nintrel D$ and $v\neq D$ iff 
%ron6: $v = C$, and we have $\tview_C(\alpha) = 0\bot i_1\bot i_2 \ldots i_k end = \tview_C(\alpha)$. 
$v=A$ or $v = C$. 
If $U_{i_1}\ldots U_{i_k}$ is the sequence of letters $a_1a_2\ldots a_m$, 
then $\tview_A(\alpha) = \bot a_1 \bot a_2\ldots \bot a_m  = \tview_A(\alpha')$. 
Also, we have $\tview_C(\alpha) = 0\bot i_1\bot i_2 \ldots i_k \bot end = \tview_C(\alpha)$. 
 However $s_0\cdot \alpha = (U',\epsilon, i_k,1)$ and
$s_0\cdot \alpha' = (W,\epsilon, i_k,1)$, so $\obs_D(s_0\cdot \alpha)
= U \neq W = \obs_D(s_0\cdot \alpha')$. Thus  the condition for 
\tosecty\ is violated. 

Conversely, suppose that the condition for \tosecty\ is violated. 
As noted above, the violation can only occur for agent $D$.  Let the
witness be the sequences of actions $\alpha, \alpha'$.  Then we have
$\purge_D(\alpha) = \purge_D(\alpha')$ 
and 
%ron6: + 
$\tview_A(\alpha) = \tview_A(\alpha')$ 
and
$\tview_C(\alpha) = \tview_C(\alpha')$ and $\obs_D(s_0\cdot \alpha)
\neq \obs_D(s_0\cdot \alpha')$.

The action $end$ must occur in at least one of these sequences, else
we have $\obs_D(s_0\cdot \alpha) = \obs_D(s_0\cdot \alpha') = \bot$,
since only the action $end$ can set $x=1$.  Since 
$\tview_C(\alpha) = \tview_C(\alpha')$, and the action $end$ of $C$ is recorded in its 
possibly transmitted observations, it follows that $end$ in
fact occurs in both sequences.  Since no action changes the state
after the occurrence of $end$, we may assume without loss of
generality that $end$ is the final action in both sequences.
(Deleting any subsequent actions preserves the observations. 
Any actions of 
%ron6: 
$A$ or 
$C$ must occur in both, so their deletion 
maintains equality of $\purge_D$ 
%ron6: 
and $\tview_A$ and 
and $\tview_C$.
All other actions can be deleted without change to $\purge_D$, 
%ron6: + 
$\tview_A$ 
or $\tview_C$.)  
Similarly, we may also assume that any occurrence of $w$ must be as
the initial action in the sequence, since otherwise this action
changes nothing. 

Since both sequences contain $end$ we do not have $\obs_D(s_0\cdot \alpha) =\bot$. 
Let $s_0\cdot \alpha = (p,V,i,1)$ and suppose that
%ron6:  $\obs_D(s) = \top$. 
$\obs_D(s_0\cdot \alpha) = \top$. 
Then 
we have $V\neq \epsilon$ or $i=0$. 
There are several possibilities. We show that 
each implies $\obs_D(s_0\cdot \alpha') =\top$, yielding a contradiction with 
$\obs_D(s_0\cdot \alpha) \neq \obs_D(s_0\cdot \alpha')$. The different cases are:
\begin{enumerate}
\item If $i=0$ then $\alpha$ contains no action $g_j$, so $\tview_C(\alpha)$
  is $0 end$ or $0\bot end$.  Hence $\tview_C(\alpha')$ also has this form,
  from which it follows that $\alpha'$ also contains no action $g_j$, so
  $\obs_D(s_0\cdot \alpha') =\top$.
\item If $V = \top$ then at some stage of the computation either the current
  fragment ceased to be a prefix of any $G(p)_j$, or $B$ guessed an index not
  matching the current fragment.  In either case, $\top$ must occur in
  $\tview_C(\alpha)$, hence in $\tview_C(\alpha')$ also. This implies, by the
  definitions of $\obs_C$ and $\obs_D$ and the fact that once $V$ becomes
  $\top$ it remains $\top$, that $\obs_D(s_0\cdot \alpha') =\top$.
\item Alternatively, if $V \in \Sigma^*\setminus \epsilon$, then
  $\tview_C(\alpha)$ ends with $\bot end$, hence also $\tview_C(\alpha')$ ends
  with $\bot end$. This means that in $\alpha'$ a fragment was under
  construction when $end$ occurred, hence if $s_0\cdot \alpha' = (p',V',i',1)$
  then we have $V'\neq \epsilon$, and hence $\obs_D(s_0\cdot \alpha') =\top$
\end{enumerate}

It therefore follows that $\obs_D(s_0\cdot \alpha)$ and $\obs_D(s_0\cdot \alpha')$
take (since they differ) the values $U$ and $W$. 
The sequences of the $g_j$ in $\alpha$ must be the same, since 
no errors were detected in either computation, 
so for each occurrence $g_j$ we have that $j$ appears in $\tview_C$. 
The last action before $end$ in these sequences 
must be an action $g_j$, else we end in a state $(p,V,i,1)$ with $V\neq \epsilon$
or  $i=0$. It follows that $\alpha$ and $\alpha'$ have the 
forms in equations~(\ref{eq:pcpo}) and~(\ref{eq:pcpt}), and that 
$({\cal U,W})\in PCP$. 
\else
It can be verified that this reduction is correct and hence establishes undecidability of \tosecty.
%ron6: say a bit more: 
It is easy to  check that the conditions for \tosecty\
are satisfied in $M({\cal U}, {\cal W})$ for the agents $u=A,B,C$. For
$u=D$,  the definition is violated iff there exists a
sequence of indices $i_1, \ldots i_k$ such that $U_{i_1}\ldots U_{i_k}
= W_{i_1}\ldots W_{i_k}$.
\fi
\end{proof}\

We note that the undecidability result for \tosecty\ implies that
there are no simple unwinding conditions that are complete for this
notion of security.  In particular, any first-order set of conditions
on a collection of binary relations on states can be checked in 
PTIME, hence cannot be both sound and complete. 

\subsection{ITO-security} 

A proof for the undecidability of \itosecty\ could be given that is similar to that for \tosecty. 
However, the result can also be obtained by noting that the similarity of the two definitions allows
for a reduction from \tosecty\  to \itosecty. The details of the reduction are given in the appendix. 
The following result is then immediate from Theorem~\ref{thm:to-undecidable}.

\begin{theorem} 
It is undecidable to determine whether $M$ is \itosec\ with respect to $\nintrel$, even for a fixed policy with 
4 domains. 
\end{theorem}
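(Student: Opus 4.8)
The plan is to reduce \tosecty\ to \itosecty: from a system $M$ over a policy $\nintrel$ I would build a system $M'$ over the \emph{same} policy such that $M$ is \tosec\ iff $M'$ is \itosec. Since Theorem~\ref{thm:to-undecidable} gives undecidability of \tosecty\ for a fixed $4$-domain policy, and the construction leaves the policy (and hence the number of domains) unchanged, this immediately yields undecidability of \itosecty\ for that same fixed $4$-domain policy.

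The starting observation is that $\ito_u$ and $\pt_u$ are built by identical recursions and differ in exactly one clause: for an action $a$ with $\dom(a)\nintrel u$ and $\dom(a)\neq u$, the function $\pt_u$ embeds the view $\view_{\dom(a)}(\alpha)$, whereas $\ito_u$ embeds $\view_{\dom(a)}(\alpha a)$, i.e.\ additionally the observation that $\dom(a)$ makes \emph{immediately after} performing its own action $a$. The reduction is therefore designed to make this extra observation vacuous. Concretely, I would transform $M$ into a system $M'$ in which no action ever changes the observation of its own domain, that is $\obs_{\dom(a)}(s\cdot a)=\obs_{\dom(a)}(s)$ for all states $s$ and actions $a$. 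The natural way to enforce this is to augment the state of $M$ so as to delay each domain's perception of the effect of its own actions by one step (for instance by recording enough history that the acting domain sees its new observation only after a subsequent transition), while leaving the observations of all other domains, and the underlying transition structure, intact.

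Granting the property $\obs_{\dom(a)}(s\cdot a)=\obs_{\dom(a)}(s)$, one shows that on $M'$ the two functions coincide, $\ito_u=\pt_u$, so that $M'$ is \itosec\ iff $M'$ is \tosec. Indeed, with this property $\view_{\dom(a)}(\alpha a)$ is just $\view_{\dom(a)}(\alpha)$ with the action $a$ and a (repeated) copy of the unchanged last observation appended via $\acat$; since $\ito_u(\alpha)=\ito_u(\beta)$ forces the corresponding tree nodes to carry the same action label $a$, the trailing action and observation appended inside the embedded views agree exactly when the underlying views agree, so $\ito_u(\alpha)=\ito_u(\beta)$ iff $\pt_u(\alpha)=\pt_u(\beta)$ (a short induction on the tree structure). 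It then remains to verify that the transformation preserves \tosecty, i.e.\ that $M$ is \tosec\ iff $M'$ is \tosec, by exhibiting a correspondence between runs of $M$ and of $M'$ that matches up $\pt_u$-values and final observations.

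The main obstacle is this last correspondence. The delay construction shifts observations in time, and because $\view$ (and hence $\pt$) collapses stuttering via the absorptive concatenation $\acat$, the shift must be arranged so that it neither merges previously distinct views nor separates previously equal ones; getting the initial and final observations and the stutter-elimination to line up in both directions of the equivalence is the delicate part of the argument. Once this correspondence is established, undecidability of \itosecty\ follows.
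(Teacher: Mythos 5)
Your high-level strategy---a reduction from \tosecty\ to \itosecty\ over the same fixed policy---is exactly the paper's, and your intermediate observation is correct: on a system satisfying $\obs_{\dom(a)}(s\cdot a)=\obs_{\dom(a)}(s)$ for all $s,a$, the recursions for $\ito_u$ and $\pt_u$ separate the same pairs of traces (the embedded view $\view_{\dom(a)}(\alpha a)$ is then just $\view_{\dom(a)}(\alpha)$ with $a$ and the repeated last observation appended, so the two tree values determine each other), hence \itosecty\ and \tosecty\ coincide on such a system. The fatal gap is in the construction you propose to \emph{obtain} that property. Delaying each domain's perception of its own action's effect until ``a subsequent transition'' creates a brand-new information channel: the \emph{release} of the delayed observation tells the domain that some other action has occurred, and this release is triggered even by actions of domains that are not permitted to interfere with it. Concretely, take the downgrader policy $H\nintrel D\nintrel L$ and the system in which the $H$-action $h$ flips $H$'s own observation from $0$ to $1$ and every other observation is constant; this $M$ is \psec, hence \tosec\ and \itosec. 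In your $M'$ we get $\obs'_H(s_0'\cdot h)=0$ (delayed) but $\obs'_H(s_0'\cdot hl)=1$, where $l$ is an $L$-action, because $l$ is a subsequent transition and the observation function must be a function of the (augmented) state. Since $L\not\nintrel H$, the definitions give $\pt_H(hl)=\pt_H(h)$ and $\ito_H(hl)=\ito_H(h)$, so $M'$ is neither \tosec\ nor \itosec\ although $M$ is \tosec: the forward direction of your claimed equivalence fails, and it fails for essentially every system in which some action changes its own domain's observation, which is the typical case. Restricting the release to interfering actions is not a repair you argue for, and it raises its own problems (the release timing still encodes other domains' activity, and observations that are never released lose information needed for the converse direction); you yourself flag step 3 as unproven, and it is precisely where the construction breaks.

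The paper avoids this trap by never displacing observations in time. Instead of altering the semantics of existing actions, it \emph{adds}, for each action $a$, an optional final copy $\final{a}$ that acts like $a$ but permanently sets the acting domain's observation to $\bot$ and deactivates that domain from then on. The extra observation that $\ito$ transmits (relative to $\pt$) is thus made uninformative only ``at the point of transmission,'' on exactly those runs in which interfering domains end their activity with final actions; those runs suffice to simulate every TO-style transmission of $M$ (this gives $M$ \tosec\ $\Leftarrow$ $M'$ \itosec\ via a map $\convert_u$ replacing each interfering domain's last action by its final copy), while runs without final actions behave just as in $M$ and cause no spurious violations (giving the other direction via a map $\convertback$ that erases post-final activity). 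Any successful repair of your approach would need this character: make the post-action observation vacuous only where it would otherwise be transmitted, rather than delaying it along every run.
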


\section{Related Work}\label{sec:related} 

The notion of \emph{noninterference} was first proposed by Goguen and Meseguer \cite{GogMes}.  
Early work in this area was motivated by multi-level secure systems,
and dealt with 
partially ordered (hence transitive) information flow policies.  
The simplest of these is the
two-domain policy with domains $L$ and $H$ and  $L \nintrel H$, but not $H \nintrel L$. 
Much of the literature is confined to this simple policy. 
Even with this restriction, there exists a large set of proposed
definitions of noninterference \cite{sutherland86,WJ90,mccullough88,FG01,Ryan01}.  
These definitions generally agree when applied to deterministic systems,
and the differences relate to how the definitions should behave on nondeterministic
systems. In addition to state-observed systems model used in the present paper,
a variety of systems models have been considered, including  action-observed
systems,  trace semantics, and process algebraic semantics (both CCS and CSP flavours).
A number of works have sought to classify the definitions when
formulated in a very general process algebraic setting
\cite{FG01}, as well as to establish formal relations between
definitions cast in different semantic models \cite{MZ10}.

The main point of overlap  of our work with this literature is to consider 
how our  results concerning \psecty, when applied to transitive policies, 
relate to other algorithmic verification approaches in the literature for such policies. 
Our approach here is similar to other work in the area. 
In particular, the idea of running two copies of the system in parallel,  
in order to compare two different runs, has been used before \cite{BartheDR04}. 
Other approaches have been developed for automated verification of 
noninterference  based on process algebraic bisimulation techniques \cite{FG01,FG96}. 
Mantel \cite{Mantel-thesis} has characterised many of the existing definitions of noninterference 
as compositions of a set of Basic Security Properties. The complexity of verifying these
basic properties  has been studied \cite{DSouzaRS05}. 
A few works have considered richer systems models than
finite state systems, e.g. pushdown systems \cite{DSouzaHKRS08}.
We note that our results in this paper, and in much of the literature, is concerned 
with asynchronous systems in which agents are unaware of the passage of time. 
Some of the literature deals with synchronous systems, where a similar spectrum of 
definitions of noninterference exists for nondeterminisitic systems. 
Some recent work has  investigated verification of synchronous notions of noninterference \cite{KB06,CMZ10}.

Some work on development of tools based on decidable cases of
such definitions of noninterference has been performed. Focardi et al. describe a 
tool based on process algebraic techniques \cite{FGP95}. 
Whalen et al. \cite{Whalen10} present an approach to model checking
noninterference that is in use at Rockwell Collins for
verification of MILS systems.  
Their approach is a mix of model checking and static analysis, in which a model checker is
used to search through an enriched version of the model in which labels
computed by static analysis are associated to systems components.
They formally prove it to be sound  with respect to a definition of noninterference
from work by Greve, Wilding and van Fleet \cite{GWV03}.  While they discuss examples
requiring intransitive policies, they leave details of this for future work.

Since we have confined ourselves in this paper to deterministic systems, 
but focus on richer intransitive policies, much of the work discussed above, 
which is confined to transitive policies and nondeterministic systems, 
is orthogonal to our concerns. 
Algorithmic verification of intransitive noninterference has had less attention in the literature.
After the work of Rushby \cite{rushby92}, \ipsecty\ has generally been taken to be
the definition studied.

Pinsky
\cite{pinsky95} presents a PTIME procedure for deciding \ipsecty\  that, in effect, generates a
relation that is claimed to satisfy  Rushby's unwinding conditions for transitive
noninterference just when the system is secure. However, in fact the
relation may fail to satisfy the Output Consistency condition,
so this claim is incorrect.
(Pinsky's argument supporting the claim that the relation
satisfies  Output Consistency, in the corollary to Theorem 2,
states that $SA(basis_\pi(z),\alpha)$ is a subset of
$view(state\_action(z,\alpha))$. This  is correct for transitive policies,
but could be false for intransitive policies.)
That such an approach cannot work for \ipsecty\ also follows from
results in \cite{meyden2007}, where it is shown that Rushby's unwinding conditions
are sound also for \tasecty,  which is a stronger notion than \ipsecty.
Moreover, an example in \cite{meyden2007} shows that  a system may be
\tasec, but no Rushby unwinding exists on the system (although one
will exist on the infinite state unfolded system, when the system satisfies
\tasecty).
Thus, an approach based on finding a Rushby unwinding, on the system as given, will also fail to be 
complete for the notion of TA-security. 

Hadj-Alouane et al. \cite{HLFMY05} also present a decision
procedure for \ipsecty, but it has  complexity $O(2^{|\States |.2^{|D|}})$, which is less efficient than our procedure by two exponentials. 

Roscoe and Goldsmith \cite{RoscoeG99} have presented a critique of \ipsecty\ 
(arguing that it is too liberal in the information flows it permits), and have proposed two alternate 
definitions cast in the process algebra CSP, based on what they call {\em lazy} and 
{\em mixed} abstraction operators. It has been shown by van der Meyden \cite{meyden-compini} 
that the definition based on lazy abstraction corresponds to \psecty, and the 
version based on mixed abstraction corresponds to 
%ron8: a definition closely  related to \tosecty. 
\itosecty. 
Roscoe and Goldsmith give an informal discussion, 
without precise complexity results  or proof, 
of algorithms for deciding ``the generalised noninterference condition". 
Based on van der Meyden's characterization of the definition based on the mixed abstraction 
as 
%ron8: related to \tosecty, we would conjecture that this definition is undecidable, 
as corresponding to \itosecty, we obtain that the definition based on mixed abstraction is
undecidable.
It therefore seems that
their comments should be interpreted as concerned (like most of the preceding content in their paper) just with 
the lazy abstraction based definition, and hence comparable to our
PTIME result for \psecty.  

%ron8:  two paras on unwinding here 
Unwinding was introduced in \cite{goguen84} and given a crisp 
presentation in \cite{rushby92}, for both transitive and intransitive 
security policies. In particular, Rushy shows that a notion of 
unwinding for transitive policies is both sound (if an unwinding exists, 
then the system is secure) and complete (if the system is secure, then there
exists an unwinding). He also defines a  notion of {\em weak unwinding} tailored to 
intransitive security policies, and proves its soundness for IP-security, but not completeness. 
The reason for this gap was identified in \cite{meyden2007}: weak unwinding 
is also sound for the strictly stronger notion of  TA-security, so cannot be 
complete for IP-security. 

In a context of systems with local policies, sound and complete unwindings are given in \cite{ESW13}. The unwindings for \ipsecty\ in this work are a special cases of the unwindings given there. 
\newcommand{\uf}{\mathit{uf}}

Van der Meyden also shows that 
weak unwinding is complete for TA-security in the sense that if a system $M$ is TA-secure, 
then there exists a weak unwinding on the system $\uf(M)$ obtained by {\em unfolding} $M$. 
The  system $\uf(M)$ is essentially the same as $M$, but with a new component added to 
states that records the complete history of the system. The system $\uf(M)$ is {\em bisimilar} to
$M$ in a appropriate sense of bisimilarity. It is generally held that two bisimilar systems 
are equivalent with respect to all properties of interest. However, somewhat unusually, 
the existence of a weak unwinding is not preserved by bisimilarity. This means that,  
while unfolding and then searching for a weak unwinding is a complete proof technique for 
TA-security, it falls short of providing a set of unwinding conditions on the system {\em as given}. 
This is significant, in that the existence of a first-order expressible set of such conditions 
would also yield a  decision procedure for TA-security of finite state systems. (The procedure would 
be of at most nondeterministic polynomial-time complexity, involving guessing an unwinding and 
verifying its properties.)  By contrast, the process of unfolding turns a finite state system 
into an infinite-state system, so a proof technique that relies upon it does not trivially 
yield a decision procedure. Indeed, the decidability of TA-security was an open problem
until the present work. 

In practice, definitions of the kind we have studied are very liberal in the information flows that they permit: when a (source) domain acts, 
everything that it knows (in some sense of knowledge) may be transmitted to any domain with which the source is permitted to interfere. 
In practice, one generally wants to limit the information that flows from one domain to another to be just a subset of the information available to the source. 
%ron8: 
A framework for such policies that generalizes the definitions studied in the present paper 
has been developed by van der Meyden and Chong \cite{chong-meyden-tark09,chong-meyden-law12}.

Approaches to stating policies expressing such limitations have
%ron8: 
also 
 been developed in the context of language-based approaches to security, where they are generally supported by means of sound but incomplete static analyses \cite{Mantel:Sands:APLAS04,Sabelfeld:Sands:CSFW05,CM04,BanerjeeNR08,MyersSZ04}. 
In the existing work, the policy is generally taken to be $L \nintrel H$ with exceptions allowed to $H \not \nintrel L$, or more generally, 
a partial order with exceptions. The system is given by a single, typically deterministic program, and the focus is on relating initial  values of input variables to 
final values of output variables, rather than on what can be deduced from ongoing observations in the state machine approach
we have considered here.

\section{Conclusion}\label{sec:concl} 

In this paper, we have determined the computational complexity of verifying whether a finite-state system satisfies an intransitive noninterference security property. The polynomial-time upper bounds build on new characterizations 
%ron8: of  two of the four notion 
and unwindings for two of the notions 
of noninterference dealt with. 
They also allow counterexamples  (which can be used to improve the system in question) to be found when the system is insecure.

We have considered only deterministic systems:  there have been several proposals 
 to define intransitive noninterference in nondeterministic systems
\cite{young94,Mantel01,RoscoeG99,oheimb04,EMZ12}, the issue of complexity of these definitions remains open. 

It would also be desirable to investigate algorithms and complexity for information flow policies 
of the richer types studied in the literature on programming languages approaches to declassification, in order to 
obtain sound and complete approaches for such specifications.  Since intransitive noninterference policies provide a format for 
specifying architectural structure of a system, it would be interesting to  combine the strengths of the programming languages perspective
and the state machine model approach we have followed in this paper. 

{\bf Acknowledgments:} 
Work  supported by Australian Research Council Discovery grant DP1097203.
The second author thanks  the Courant Institute, New York University, 
%ron8: 
and the Computer Science Department, Stanford University 
for hosting
sabbatical visits during which this research was conducted in part.

\bibliographystyle{IEEEtran}
\bibliography{refs}

\newpage 

\section*{Appendix: Reduction from TO-security to ITO-security} 

We show that TO-security can be reduced to \itosecty. 

First, we need some new definitions: 
For all $u \in \Dom$ and all $\alpha \in \Actions$, $\lpre_u(\alpha)$
is the largest prefix of $\alpha$ that ends in an action $a$
with $\dom(a) = u$. 
%ron: added clarification, a similar one is needed for tview 
(If there is no such action, then we take $\lpre_u(\alpha)$ to be $\epsilon$.) 
Similar to the definition of $\tview$, we define 
\begin{equation*}
  \ftview_u(\alpha) = \view_u(\lpre_u(\alpha)) \enspace. 
\end{equation*}

Now we can give a characterization of \itosecty, similar to the
characterization of \tosecty\  in 
Proposition~\ref{prop:ptflat}.

\begin{proposition} \label{prop:charn-ito}
  $M$ is \itosec\ with respect to a policy $\nintrel$ iff for all
  sequences $\alpha, \beta \in \Actions^*$, and domains $u \in \Dom$, 
  if 
  $\purge_u(\alpha) = \purge_u(\beta)$ and
  $\ftview_v(\alpha) = \ftview_v(\beta)$ for all domains $v \neq u$ 
  such that $v \nintrel u$, 
  then
  $\obs_u(s_0 \cdot \alpha) = \obs_u(s_0 \cdot \beta)$. 
\end{proposition}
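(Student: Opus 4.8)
The plan is to prove both directions in close parallel with Proposition~\ref{prop:ptflat}, exploiting that $\ito$ differs from $\pt$ only in that, for an action $a$ with $\dom(a)\nintrel u$ and $\dom(a)\neq u$, it records $\view_{\dom(a)}(\alpha a)$ (the view including the observation produced by $a$) rather than $\view_{\dom(a)}(\alpha)$; correspondingly $\ftview$ replaces $\tview$. Throughout I would view $\ito_u(\alpha)$ as a tree whose spine of action labels, read from the root outward, is exactly the subsequence of interfering actions of $\alpha$, i.e.\ $\purge_u(\alpha)$. I will call the hypothesis of the proposition (namely $\purge_u(\alpha)=\purge_u(\beta)$ together with $\ftview_v(\alpha)=\ftview_v(\beta)$ for all $v\neq u$ with $v\nintrel u$) the \emph{flat condition}.

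For the easy direction, [flat condition $\Rightarrow$ equal observations] $\Rightarrow$ \itosecty, I would first show that $\ito_u(\alpha)=\ito_u(\beta)$ implies the flat condition. Tree equality forces equality of the spines, hence $\purge_u(\alpha)=\purge_u(\beta)$. Moreover, for each $v\neq u$ with $v\nintrel u$, every $v$-action lies on the spine, so the transmitted view attached to the \emph{last} $v$-action is $\view_v(\lpre_v(\alpha))=\ftview_v(\alpha)$; tree equality makes these equal, and when $v$ performs no action both sides are $\obs_v(s_0)$. Since \itosecty\ is by definition [$\ito_u(\alpha)=\ito_u(\beta)\Rightarrow$ equal observations], the assumed implication then yields \itosecty.

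For the hard direction, \itosecty $\Rightarrow$ [flat condition $\Rightarrow$ equal observations], I would prove by induction on $|\alpha|+|\beta|$ the \emph{strengthened} statement: if $M$ is \itosec\ and the flat condition holds for $(u,\alpha,\beta)$, then $\ito_u(\alpha)=\ito_u(\beta)$ \emph{and} $\view_u(\alpha)=\view_u(\beta)$. Equal observations then follow since $\obs_u(s_0\cdot\alpha)$ is the last symbol of $\view_u(\alpha)$. If $\purge_u(\alpha)=\epsilon$, both trees are $\obs_u(s_0)$, and since every prefix leaves $\ito_u$ unchanged, \itosecty\ keeps $\obs_u$ constant, so both views equal $\obs_u(s_0)$. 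Otherwise write $\purge_u(\alpha)=\gamma a$ and split $\alpha=\alpha_1 a\alpha_2$, $\beta=\beta_1 a\beta_2$ at the last interfering action $a$ (the same symbol in both, by equal purge), with $\alpha_2,\beta_2$ containing no interfering action. The steps are: (1) re-establish the flat condition for the shorter pair $(\alpha_1,\beta_1)$ --- immediate for $\purge_u$ and for $\ftview_v$ when $\dom(a)\neq v$, and for $\dom(a)=v\neq u$ using that $\ftview_v(\alpha_1)$ is the prefix of $\ftview_v(\alpha)$ ending at the second-to-last $v$-action, via an auxiliary prefix-monotonicity lemma ($\gamma$ a prefix of $\delta$ implies $\view_v(\gamma)$ a prefix of $\view_v(\delta)$); (2) apply the induction hypothesis to obtain $\ito_u(\alpha_1)=\ito_u(\beta_1)$ and $\view_u(\alpha_1)=\view_u(\beta_1)$; (3) note $\ito_u(\alpha)=\ito_u(\alpha_1 a)$ by definition, while \itosecty\ keeps $\obs_u$ constant through the trailing non-interfering block so that $\view_u(\alpha)=\view_u(\alpha_1 a)$ (likewise for $\beta$); (4) show $\ito_u(\alpha_1 a)=\ito_u(\beta_1 a)$, using the subtree equality from (2) together with $\view_u(\alpha_1)=\view_u(\beta_1)$ when $\dom(a)=u$, and with $\view_{\dom(a)}(\alpha_1 a)=\ftview_{\dom(a)}(\alpha)=\ftview_{\dom(a)}(\beta)=\view_{\dom(a)}(\beta_1 a)$ when $\dom(a)\neq u$; and (5) conclude $\obs_u(s_0\cdot\alpha_1 a)=\obs_u(s_0\cdot\beta_1 a)$ by \itosecty, so that the appended action and observation agree and $\view_u(\alpha_1 a)=\view_u(\beta_1 a)$.

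The main obstacle is exactly the case $\dom(a)=u$ in step~(4): the $\ito$-tree records $u$'s own view $\view_u(\alpha_1)$ at each of $u$'s actions, and this self-view is not controlled by the flat condition, which constrains only $\ftview_v$ for $v\neq u$. Carrying $\view_u(\alpha)=\view_u(\beta)$ inside the induction hypothesis is what resolves this, and is the reason the two equalities must be established simultaneously rather than separately. The only other delicate point is the prefix-recovery in step~(1), where the immediate-post-observation feature distinguishing $\ito$/\ftview\ from $\pt$/$\tview$ must be tracked so that the correct prefix of $\ftview_v(\alpha)$ is identified.
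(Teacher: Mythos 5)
Your proof is correct, and at the top level it has the same shape as the paper's: the easy direction goes through showing that $\ito_u(\alpha)=\ito_u(\beta)$ implies the flat condition, and the hard direction is an induction on the combined length of $\alpha$ and $\beta$. But your handling of the hard direction differs from the paper's exactly where it matters, and your version is the sounder one. The paper's induction proves the purely combinatorial statement ``flat condition $\Rightarrow\ito_u(\alpha)=\ito_u(\beta)$'' and invokes \itosecty\ only once, at the very end; inside its induction, in the sub-case $\dom(a)=u$, it silently equates the middle components $\view_u(\alpha')$ and $\view_u(\beta')$ of the two trees, an equality that neither its induction hypothesis nor the flat condition supplies --- precisely the obstacle you single out. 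Indeed, without a security assumption the combinatorial statement is false: take actions $a,c$ with $\dom(a)=u$ and $\dom(c)\not\nintrel u$ in a system where $\obs_u(s_0\cdot c)\neq\obs_u(s_0)$; then $\alpha=a$ and $\beta=ca$ satisfy the flat condition, yet $\ito_u(\alpha)=(\obs_u(s_0),\obs_u(s_0),a)$ while $\ito_u(\beta)=(\obs_u(s_0),\obs_u(s_0)\,\obs_u(s_0\cdot c),a)$. So some use of \itosecty\ inside the induction is unavoidable, and your strengthened hypothesis --- carrying $\view_u(\alpha)=\view_u(\beta)$ alongside tree equality, re-established in your step (5) via \itosecty, and also used for the trailing non-interfering block and the base case --- is exactly the needed repair; an equivalent fix would be a standalone lemma that in an \itosec\ system $\ito_u(\alpha)=\ito_u(\beta)$ implies $\view_u(\alpha)=\view_u(\beta)$. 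The remaining differences are inessential: you split at the last interfering action where the paper peels off one action at a time, and in the easy direction the paper formalizes your ``spine plus last $v$-action'' reading by defining extraction functions $P_u$ and $F_v$ on the range of $\ito_u$ and proving $P_u(\ito_u(\alpha))=\purge_u(\alpha)$ and $F_v(\ito_u(\alpha))=\ftview_v(\alpha)$ by induction --- the same argument in different clothing.
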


\begin{proof} 
We first argue from left to right. 
We show by induction on the combined length of 
$\alpha, \beta \in \Actions^*$ that, for $u\in \Dom$, if  we have 
$\purge_u(\alpha) = \purge_u(\beta)$ and 
$ \ftview_{v}(\alpha ) = \ftview_{v}(\beta)$ for all domains $v\neq u$ with $v\nintrel u$, 
then $\ito_u(\alpha) = \ito_u(\beta)$. 
Thus, if $M$ is \itosec\ and the antecedent of the right hand side holds, then so does 
the consequent $\obs_u(s_0 \cdot \alpha) = \obs_u(s_0 \cdot \beta)$,  by \itosecty. 
The proof of the induction is trivial in the base case of $\alpha = \beta = \epsilon$. 
Suppose that $\alpha =\alpha'a$ where $a\in \Actions$, and   
$\purge_u(\alpha' a ) = \purge_u(\beta)$ and 
$ \ftview_{v}(\alpha' a ) = \ftview_{v}(\beta)$ for all domains $v\neq u$ with $v\nintrel u$. 
We consider several cases. 

{\em Case 1:} $\dom(a) \not \nintrel u$. Then 
$\purge_u(\alpha' ) = \purge_u(\alpha' a) = \purge_u(\beta)$, and 
for $v\nintrel u$ with $v\neq u$  have $v \neq \dom(a)$, so 
$\ftview_u(\alpha') = \ftview_u(\alpha' a) = \ftview_u(\beta)$. 
By the induction hypothesis, $\ito_u(\alpha') = \ito_u(\beta)$. 
Since $\ito_u(\alpha' a) = \ito_u(\alpha)$ in this case, we obtain 
$\ito_u(\alpha' a) = \ito_u(\beta)$.  

{\em Case 2:} $\dom(a) \nintrel u$. In this case, 
$\purge_u(\alpha') a = \purge_u(\alpha' a) = \purge_u(\beta)$, 
so $\beta = \beta' b$ for some $b\in \Actions$. 
In case $\dom(b) \not \nintrel u$, we may apply Case 1 with the roles 
of $\alpha$ and $\beta$ swapped. We therefore assume without loss of generality that 
$\dom(b) \nintrel u$. Then $\purge_u(\beta) = \purge_u(\beta')b$, and it follows
that $a=b$ and $\purge_u(\alpha') = \purge_u(\beta')$. 

Consider now $v\neq u$ with $v\nintrel u$.  In the case $v\neq \dom(a)$, we have 
$\ftview_v(\alpha'  ) = \ftview_v(\alpha' a)  = \ftview_v(\beta' b)  =  \ftview_v(\beta')$. 
In the case that $v= \dom(a)$, we have 
$\ftview_v(\alpha' a) = \view_v(\alpha' a)$ and $\ftview_v(\beta' b) = \view_v(\beta' b)$, 
so we obtain that  $\view_v(\alpha' a) =  \view_v(\beta' b)$, so also
$\view_v(\alpha' ) =  \view_v(\beta' )$. The latter implies that $\ftview_v(\alpha' ) =  \ftview_v(\beta' )$
in this case also. 

It therefore follows by the induction hypothesis that $\ito_u(\alpha') = \ito_u(\beta')$. 
Thus, in case $\dom(a) \neq u$, using what was shown above, we have 
\begin{align*} 
\ito_u(\alpha'a)  & = (\ito_u(\alpha'), \view_{\dom(a)}(\alpha' a), a) \\
& = (\ito_u(\beta'), \view_{\dom(a)}(\beta' b), b) \\
& = \ito_u(\beta'b)\mathpunct.
\end{align*} 
Similarly, in the case $\dom(a) = u$, we have 
\begin{align*} 
\ito_u(\alpha'a)  & = (\ito_u(\alpha'), \view_{\dom(a)}(\alpha' ), a) \\
& = (\ito_u(\beta'), \view_{\dom(a)}(\beta' ), b) \\
& = \ito_u(\beta'b)\mathpunct.
\end{align*}
Thus, in either case we have $\ito_u(\alpha) = \ito_u(\beta)$, as claimed.  
This completes the proof of the direction from left to right. 

For the converse, define for each domain $u$, a function $P_u$ and functions $F_v$, all 
on the range of the function $\ito_u$, 
by $P_u(\epsilon) =\epsilon$ and $F_v(\epsilon) = \obs_v(s_0)$, and 
$P_u((x,y,a)) =  P_u(x) a$ and 
$$F_v((x,y,a)) =  \begin{cases} 
y & \text{if $\dom(a) = v$,} \\ 
F_v(x) & \text{if $\dom(a)\neq  v$.} 
\end{cases} $$
We claim that for all $u,v \in \Dom$ with $u\neq v$ and $v\nintrel u$, 
and all $\alpha \in \Actions^*$, we have $P_u(\ito_u(\alpha)) = \purge_u(\alpha)$ and 
$F_v(\ito_u(\alpha)) = \ftview_v(\alpha)$. 
Now, if  $M$ is not \itosec\ then  there exist $\alpha, \beta \in \Actions^*$ and domain $u$
with $\ito_u(\alpha) = \ito_u(\beta)$ and $\obs_u(s_0\cdot \alpha) \neq \obs_u(s_0\cdot \beta)$. 
It follows from the claim that $\purge_u(\alpha) = \purge_u(\beta)$ and 
$\ftview_v(\alpha) = \ftview_v(\beta)$ for domains $v\neq u$ with $v\nintrel u$,
so the right hand side of the proposition is also false. 

The proof of the claim that $P_u(\ito_u(\alpha)) = \purge_u(\alpha)$ is a straightforward
induction, left to the reader. For $F_v(\ito_u(\alpha)) = \ftview_v(\alpha)$ we argue inductively, 
as follows. The base case of $\alpha = \epsilon$ is trivial. For $\alpha = \alpha' a $, 
we consider the following cases: 

{\em Case 1: } $\dom(a) \not \nintrel u$. In this case we have $v \neq \dom(a)$. 
Thus, 
\begin{align*} 
F_v(\ito_u(\alpha' a)) & = F_v(\ito_u(\alpha')) \\ 
& = \ftview_v(\alpha') & \text{by induction,} \\ 
& = \ftview_v(\alpha' a) 
\end{align*} 

{\em Case 2:} $\dom(a) \nintrel u$ and $\dom(a) \neq u$ and $v = \dom(a)$. 
In this case 
\begin{align*} 
F_v(\ito_u(\alpha' a)) & = F_v((\ito_u(\alpha'), \view_{\dom(a)}(\alpha' a), a)) \\ 
& = \view_v(\alpha' a)  \\ 
& = \ftview_v(\alpha' a) 
\end{align*} 

{\em Case 3:} $\dom(a) \nintrel u$ and $\dom(a) \neq u$ and $v \neq  \dom(a)$. 
In this case 
\begin{align*} 
F_v(\ito_u(\alpha' a)) & = F_v((\ito_u(\alpha'), \view_{\dom(a)}(\alpha' a), a)) \\ 
& = F_v(\ito_u(\alpha' ))  \\ 
& = \ftview_v(\alpha' ) & \text{by induction}\\ 
& = \ftview_v(\alpha'  a)  
\end{align*} 

{\em Case 4:}  $\dom(a) = u$. 
In this case, 
\begin{align*} 
F_v(\ito_u(\alpha' a)) & = F_v((\ito_u(\alpha'), \view_{u}(\alpha' ), a)) \\ 
& = F_v(\ito_u(\alpha' ))  & \text{since  $v\neq \dom(a)$}\\ 
& = \ftview_v(\alpha' ) & \text{by induction}\\ 
& = \ftview_v(\alpha'  a)  & \text{since  $v\neq \dom(a)$}
\end{align*}
This completes the proof of the claim.  
\end{proof}

We now show to reduce \tosecty\  to \itosecty. Given a system $M$, we construct a 
system  $M'$ such that $M$ is \tosec\ iff $M'$ is \itosec. The intuition for the 
construction is that \itosecty\ permits a faster flow of information than 
\tosecty. In particular, when  action $a$, with $\dom(a) \nintrel u$ and $\dom(a) \neq u$, 
is performed after sequence $\alpha$, 
\itosecty\ permits the  transmission of the information in the view $\view_{\dom(a)}(\alpha a)$, 
whereas \tosecty\ permits transmission only of the  information in the shorter view 
$\view_{\dom(a)}(\alpha )$. The reduction handles this by  replacing the final observation 
$\obs_u(s_0\cdot\alpha a)$ made by $\dom(a)$  by the uninformative observation $\bot$. 
This makes the faster flow of \itosecty, in the system $M'$ equivalent to the slower
flow of \tosecty\ in $M$. 

More precisely, given a system $M = \langle \States, s_0, \Actions, \step, \obs, \dom
\rangle$, 
we define a new system $M' = \langle \States', s_0', \Actions',
\step', \obs', \dom' \rangle$ as follows. 
We define the set of final actions $\final{\Actions} = \{ \final{a} |
a \in \Actions \}$. 
For every such an action we have $\dom(\final{a}) = \dom(a)$. 
The idea 
%ron: of this 
underlying
final actions is that, if an agent has performed one of
its final actions, all further actions of this agent are ignored by
the system. 
Therefore the system has to keep track of which agent has performed one
of its final actions. 
If an agent has performed one of its final actions, 
%ron: once, 
its observation
is set to $\bot$ and will never change. 
More 
%ron: formal:
formally:  
\begin{align*}
  \States'& = \States \times \powerset{\Dom} \\
  s_0' & = (s_0, \emptyset) \\
  \Actions' & = \Actions \cup \final{\Actions} \\
\intertext{For any $s \in \States$ and $U \subseteq \Dom$:}
  \obs_u' (s, U) & = 
  \begin{cases}
    \obs_u(s) & \text{if } u \not\in U \\
    \bot & \text{otherwise} 
  \end{cases}
\\
\intertext{For any $s\in \States$, $a \in \Actions$
  and $U \subseteq
  \Dom$:}
(s, U) \cdot a & = 
\begin{cases}
  (s\cdot a, U) & \text{ if $\dom(a) \not\in U$ and $a \in
    \Actions$}\\
  (s \cdot b, U \cup \{\dom(a)\}) & 
  \text{ if $\dom(a) \not \in U$ and $a \in \final{\Actions}$ with $a
    = \final{b}$} \\
  (s, U) & \text{ if $\dom(a) \in U$}
\end{cases}
\end{align*}
For the functions $\view$, $\tview$, etc., we use the same notation in
both systems. 
%ron: It should be clear from the context, on which system this functions rely. 
The intended system will be clear from the context.

\begin{lemma}
  A system $M$ is \tosec\ with respect to a policy $\nintrel$ iff
  the system $M'$ is \itosec\ with respect to the same policy
  $\nintrel$. 
\end{lemma}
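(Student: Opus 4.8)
The plan is to prove both implications through their contrapositives, using the flat characterizations of \tosecty\ (Proposition~\ref{prop:ptflat}) and of \itosecty\ (Proposition~\ref{prop:charn-ito}). The bridge between the two systems is the projection $\pi\colon(\Actions')^*\to\Actions^*$ that replaces every final action $\final a$ by $a$ and deletes every action performed by an already–finalized agent. First I would record the basic simulation facts, by easy inductions on sequence length: for every $\gamma\in(\Actions')^*$ the $\States$-component of $s_0'\cdot\gamma$ equals $s_0\cdot\pi(\gamma)$, and $\obs'_u(s_0'\cdot\gamma)=\obs_u(s_0\cdot\pi(\gamma))$ whenever $u$ has not performed a final action along $\gamma$. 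I would also note the elementary identity that, in any system, $\ftview_v(\gamma)$ is $\tview_v(\gamma)$ followed by the single observation recorded immediately after the last $v$-action of $\gamma$ (and is the initial observation when $v$ never acts). This one extra observation is exactly the discrepancy between the two security notions, and the construction of $M'$ is designed to neutralize it by forcing that observation to be $\bot$.

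For the implication ``$M'$ \itosec\ $\Rightarrow$ $M$ \tosec'' I would argue contrapositively: assuming $M$ is not \tosec, take a witness $(\alpha,\beta,u)$ from Proposition~\ref{prop:ptflat}, so $\purge_u(\alpha)=\purge_u(\beta)$, $\tview_v(\alpha)=\tview_v(\beta)$ for all $v\neq u$ with $v\nintrel u$, but $\obs_u(s_0\cdot\alpha)\neq\obs_u(s_0\cdot\beta)$. I would build $\hat\alpha$ from $\alpha$ (and $\hat\beta$ from $\beta$) by relabeling, for every agent $v\neq u$, the last action of $v$ to its final version. Each such agent then finalizes exactly at its last action, so no ignored actions arise, the $\States$-component is unchanged, and $u$ stays unfinalized; hence $\obs'_u(s_0'\cdot\hat\alpha)\neq\obs'_u(s_0'\cdot\hat\beta)$. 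The purge condition transfers because $\purge_u(\alpha)=\purge_u(\beta)$ contains all actions of every agent $v\nintrel u$, so relabeling last occurrences is consistent. For the $\ftview$ condition I would compute $\ftview_v(\hat\alpha)$ to be $\tview_v(\alpha)$ with its terminal action relabeled to $\final a$ and a trailing $\bot$ appended, and then invoke $\tview_v(\alpha)=\tview_v(\beta)$; this gives an \itosecty\ violation by Proposition~\ref{prop:charn-ito}.

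For the implication ``$M$ \tosec\ $\Rightarrow$ $M'$ \itosec'' I again argue contrapositively: assume $M'$ is not \itosec, take a witness $(\alpha,\beta,u)$ from Proposition~\ref{prop:charn-ito}, and set $\alpha^*=\pi(\alpha)$, $\beta^*=\pi(\beta)$. Since $u\nintrel u$, equality of the purges forces $u$ to finalize in $\alpha$ iff in $\beta$; were it finalized both observations would equal $\bot$, contradicting the witness, so $u$ is unfinalized and $\obs_u(s_0\cdot\alpha^*)\neq\obs_u(s_0\cdot\beta^*)$. For the purge condition I would observe that, for an agent $v\nintrel u$, whether one of its actions is ignored is already determined by $\purge_u(\alpha)$ itself, because the finalizing action of $v$ also lies in the purge; thus deleting ignored actions and relabeling finals acts identically on the two equal purges and yields $\purge_u(\alpha^*)=\purge_u(\beta^*)$. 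For the $\tview$ condition I would split on whether $v$ finalizes (a case that is consistent between $\alpha$ and $\beta$ by the purge argument): if it does not, the views of $v$ coincide in $M'$ and $M$ and equality of $\ftview_v$ hands me equality of $\tview_v$ of the projections directly; if it does, then $\ftview_v(\alpha)$ has the shape $W\cdot\final a\cdot\bot\cdots$, where $W$ is the pre-finalization view and $\final a$ is $v$'s final action, and since final-action symbols occur in the view only from finalization onward, matching the unique first such symbol in the two equal $\ftview_v$ values forces $W_\alpha=W_\beta$ and a common $a$, whence $\tview_v(\alpha^*)=W\cdot a=\tview_v(\beta^*)$.

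The main obstacle is precisely this last $\tview$/$\ftview$ bookkeeping in the second direction: a witness for $M'$ may perform a final action and then further, ignored, actions of the same agent, so the last $v$-action of $\alpha$ need not be the final one, and the view carries trailing $\bot$'s and ignored-action symbols. The crux is to verify that, after identifying where finalization occurs, equality of $\ftview_v$ in $M'$ still pins down equality of the genuine $M$-view $\tview_v$ of the projections. I expect everything else---the simulation identities and the two purge transfers---to be routine inductions on the length of the action sequences.
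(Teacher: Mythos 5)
Your proposal is correct and takes essentially the same route as the paper's proof: both directions are argued contrapositively through Propositions~\ref{prop:ptflat} and~\ref{prop:charn-ito}, your projection $\pi$ is exactly the paper's function $\convertback$, and your relabel-the-last-action construction is the paper's $\convert_u$ (the paper relabels only the last actions of domains $v \nintrel u$ with $v\neq u$, you of all $v \neq u$; both work). The only divergence is in how the transfer claims are verified---the paper uses inductions on the combined length of the two action sequences, whereas you argue structurally via the commutation of $\purge_u$ with $\pi$, the decomposition of $\ftview_v$ as $\tview_v$ followed by one trailing observation, and matching of the first final-action symbol in equal views---but these establish the same intermediate facts, so the arguments coincide in substance.
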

\begin{proof}
%ron: 
We first show the implication from left to right. 
 We begin by defining a function that transfers between runs in the two systems, and 
 prove several of its properties. 
   %ron: this construct simplifies presentation 
  \newcommand{\down}[1]{\overline{#1}}
  For an action $a \in A'$, we define $\down{a} = a$ if $a\in A$ and 
 $\down{a}   = b $ if $a = \final{b}\in \final{\Actions}$. 
  We define the function 
  $\convertback \colon \Actions'^* \rightarrow \Actions^*$ 
  %with
  by  
  $\convertback(\epsilon) = \epsilon$ and
$$\convertback(\alpha a) = 
  \begin{cases}
    \convertback(\alpha) & \text{ if there is a final action of
      $\dom(a)$ in $\alpha$} \\
%ron:       \convertback(\alpha)b & \text{ if $a =
%        \final{b}$ and $a$ is a final action or} \\
%      & \text{$a = b$ if $a$ is not a
%        final action}
%      \\
%      & \text{ and if
%        there is non final action of $\dom(a)$ in $\alpha$.} 
        \convertback(\alpha)\,\down{a} & \text{ if 
        there is no final action of $\dom(a)$ in $\alpha$.} 
  \end{cases}
$$ 
This function $\convertback$ removes for each agent its actions
performed after the agent's first final action. This first final action
is 
%ron:substituted 
replaced 
by the corresponding action from $\Actions$. 

We observe, that we have for any $\alpha \in \Actions'^*$: 
If $s_0' \cdot \alpha = (s, U)$ for some $s \in \States$ and $U
\subseteq \Dom$, then $s= s_0 \cdot \convertback(\alpha)$. 
This implies that for any $u \in \Dom$ and any $\alpha \in \Actions'^*$ that does not
contain a final action of $u$, we have
\begin{equation*}
  \obs_u(s_0 \cdot \convertback(\alpha)) = \obs
'_u(s_0'\cdot \alpha) \enspace. 
\end{equation*}
This also implies that for any $u \in \Dom$ and any $\alpha\in \Actions'^*$ that does 
not contain any final action of $u$, we have
\begin{equation*}
  \view_u(\convertback(\alpha)) =  \view_u(\alpha) \enspace. 
\end{equation*}
%ron: begin argument 
(Here the left hand side is computed in $M$ and the right hand side in $M'$.) 
The argument for this is an induction on $\alpha$. The claim is trivial in case $\alpha = \epsilon$. 
Suppose $\alpha = \alpha' a $ where $a\in \Actions$, 
and $\alpha' a$ does not contain any final actions of $u$. We consider
several cases: 
\begin{itemize} 
\item {\em Case 1:} $\dom(a) = u$. Then $a$ is not final and $\alpha'$ contains no final action of $u$,  
so $\convertback(\alpha' a ) = \convertback(\alpha') a$, and 
\begin{align*} 
\view_u(\alpha' a) & = \view_u(\alpha') \, a \, \obs'_u(s'_0\cdot \alpha' a) \\ 
& = \view_u(\convertback(\alpha')) \, a\, \obs'_u(s'_0\cdot \alpha' a)  & \text{by induction}\\ 
& =  \view_u(\convertback(\alpha'))\, a \,\obs_u(s_0\cdot \convertback(\alpha' a)) \\ 
& = \view_u(\convertback(\alpha')) \,a \,\obs_u(s_0\cdot \convertback(\alpha') a) \\ 
& = \view_u(\convertback(\alpha')a) \\ 
& = \view_u(\convertback(\alpha' a)) \enspace.
\end{align*}  
\item {\em Case 2:} $\dom(a) \neq u$ and there is no final action of $\dom(a)$ in $\alpha'$. 
Then $\convertback(\alpha' a ) = \convertback(\alpha') \down{a}$, so 
 \begin{align*} 
\view_u(\alpha' a) & = \view_u(\alpha') \circ \obs'_u(s'_0\cdot \alpha' a) \\ 
& = \view_u(\convertback(\alpha')) \circ \obs'_u(s'_0\cdot \alpha' a)  & \text{by induction}\\ 
& =  \view_u(\convertback(\alpha'))\circ \obs_u(s_0\cdot \convertback(\alpha' a)) \\ 
& = \view_u(\convertback(\alpha')) \circ\obs_u(s_0\cdot \convertback(\alpha') a) \\ 
& = \view_u(\convertback(\alpha')a) \\ 
& = \view_u(\convertback(\alpha' a)) \enspace.
\end{align*}  

\item {\em Case 3:} $\dom(a) \neq u$ and there is a final action of
  $\dom(a)$ in $\alpha'$. 
Then $\convertback(\alpha' a ) = \convertback(\alpha') $, so 
 \begin{align*} 
\view_u(\alpha' a) & = \view_u(\alpha') \circ \obs'_u(s'_0\cdot \alpha' a) \\ 
& = \view_u(\convertback(\alpha')) \circ \obs'_u(s'_0\cdot \alpha' a)  & \text{by induction}\\ 
& =  \view_u(\convertback(\alpha'))\circ \obs_u(s_0\cdot \convertback(\alpha' a)) \\ 
& = \view_u(\convertback(\alpha')) \circ\obs_u(s_0\cdot \convertback(\alpha')) \\ 
& = \view_u(\convertback(\alpha')) \\ 
& = \view_u(\convertback(\alpha' a)) \enspace.
\end{align*}

\end{itemize} 
This completes the proof that $\view_u(\convertback(\alpha)) =  \view_u(\alpha)$. 
Plainly, this implies that for  any $\alpha,\beta\in \Actions'^*$ that do 
not contain any final action of $u$, we have
\begin{equation*}
  \view_u(\convertback(\alpha)) = \view_u(\convertback(\beta)) 
\text{ iff } 
\view_u(\alpha) = \view_u(\beta) \enspace. 
\end{equation*}

%ron:substituting \beta for \alpha' throughout here, and taking alpha =  \alpha' a 
We now claim that for any $u \in \Dom$ and any 
%ron: keeping this in the nontechnical form 
%$\alpha, \alpha' \in (\Actions'^* \setminus \{\final{a} \mid a \in \Actions, \dom(a) = u \})^*$ 
$\alpha, \beta\in \Actions'^*$ that do not contain a final action of domain $u$,  
that 
if $\purge_u(\alpha) = \purge_u(\beta)$ 
and 
$\ftview_v(\alpha) = \ftview_v(\beta)$ for all $v \neq u$, $v
\nintrel u$, then
$\purge_u(\convertback(\alpha)) = \purge_u(\convertback(\beta))$
and 
$\tview_v(\convertback(\alpha)) = \tview_v(\convertback(\beta))$ for
all $v \neq u$, $v \nintrel u$. 

We prove the claim by induction over the combined length of $\alpha$ and 
$\beta$.
It is clear that it holds for $\alpha = \beta = \epsilon$. 
%ron: moved this here 
Consider $\alpha = \alpha' a$ and $\beta$,  
%with $\dom(a) \not\nintrel u$
%ron:+ 
neither containing a final action of domain $u$,  
such that
$\purge_u(\alpha' a) = \purge_u(\beta)$ and 
for all $v \neq u$, $v \nintrel u$: $\ftview_v(\alpha' a) =
\ftview_v(\beta)$ and that the implication holds for strings of
shorter combined length. 
We consider several cases: 
\begin{itemize} 
\item {\em Case 1:}  $\dom(a) \not\nintrel u$.  
This gives
\begin{equation*}
  \purge_u(\alpha') = \purge_u(\alpha' a) = \purge_u(\beta)
\end{equation*}
and 
\begin{equation*}
  \ftview_v(\alpha') = \ftview_v(\alpha' a) = \ftview_v(\beta)
  \end{equation*}
for all $v \neq u$, $v\nintrel u$.
%ron: invoke IH here 
By the induction hypothesis, 
\begin{equation*}
  \purge_u(\convertback(\alpha')) = \purge_u(\convertback(\beta))
\end{equation*}
and 
\begin{equation*}
  \tview_v(\convertback(\alpha')) =  \tview_v(\convertback(\beta))
  \end{equation*}
for all $v \neq u$, $v\nintrel u$.
%ron: moved here,  
If there is a final action of $\dom(a)$ in $\alpha'$, then 
%ron: + explanation
$\convertback(\alpha' a ) = \convertback(\alpha')$ 
%and  the induction hypothesis can be applied directly. 
and the claim for the pair $\alpha' a$ and $\beta$ is immediate. 
We assume in the following that
%this is not the case. 
there is no final action of $\dom(a)$ in $\alpha'$. 
%ron: replacing b by down{a}  
%Let be $b \in \Actions$ such that $a = \final{b}$ if $a$ is a final action and $a = b$ otherwise. 
Then, 
%ron: + 
since $\dom(\down{a}) = \dom(a) \not \nintrel u$, 
 we have
\begin{align*}
  \purge_u(\convertback(\alpha' a)) & = \purge_u(\convertback(\alpha')
  \down{a}) \\
  & = \purge_u(\convertback(\alpha')) \\
  & = \purge_u(\convertback(\beta )) 
\end{align*}
and for all $v \neq u$, $v \nintrel u$: 
\begin{align*}
  \tview_v(\convertback(\alpha' a)) & = \tview_v(\convertback(\alpha')
 \down{a}) \\
  & = \tview_v(\convertback(\alpha'))  \\
  & = \tview_v(\convertback(\beta )) \enspace. 
\end{align*}

\item {Case 2: } 
$\dom(a) \nintrel u$. 
In this case we have 
$\purge_u(\alpha' ) a = \purge_u(\alpha'  a) = \purge_u(\beta)$, 
so $\beta = \beta' b$ for some action $b$. If $\dom(b) \not \nintrel u$, then 
we may swap the roles of $\alpha$ and $\beta$ and apply the previous case. 
Hence, without loss of generality, $\dom(b) \nintrel u$, 
and we have  $\purge_u(\beta) = \purge_u(\beta') b$. 
It follows that $a=b $ and $\purge_u(\alpha') = \purge_u(\beta')$. 
For all $v \neq u$, $v \nintrel u$ with $v \neq \dom(a)$ we have
\begin{align*}
  \ftview_v(\alpha') & = \ftview_v(\alpha' a) \\
  & = \ftview_v(\beta' a) \\
  & = \ftview_v(\beta') \enspace. 
\end{align*}
%sebastian : added the v = dom(a) case
If $v = \dom(a)$ then we have
\begin{align*}
  \view_v(\alpha' a) & = \ftview_v(\alpha' a) \\
  & = \ftview_v(\beta' a) \\
  & = \view_v(\beta' a) \enspace. 
\end{align*}
This implies $\view_v(\alpha') = \view_v(\beta')$ and therefore,
$\ftview_v(\alpha') = \ftview_v(\beta')$. 

Thus, by the induction hypothesis, 
$$\purge_u(\convertback(\alpha')) = \purge_u(\convertback(\beta'))$$ and 
$\tview_v(\convertback(\alpha')) =  \tview_v(\convertback(\beta'))$ 
for all $v\neq u$, $v\nintrel u$. 

%ron: need this argument here again 
If there is a final action of $\dom(a)$ in $\alpha'$ then, by $\purge_u(\alpha') = \purge_u(\beta')$, 
there is also a final action of $\dom(a)$ in $\beta'$, and we have 
$\convertback(\alpha' a) = \convertback(\alpha')$   and $\convertback(\beta' a) = \convertback(\beta')$. 
The desired conclusion is then direct from the above inductive conclusion. Alternately, 
if there is no final action of $\dom(a)$ in $\alpha'$ then there is no final action of $\dom(a)$ in $\beta'$. 
In this case, since $\dom(\down{a})= \dom(a) \nintrel u$, we have 
%This gives
%ron: a -> down{a} here 
\begin{align*}
  \purge_u(\convertback(\alpha' a)) & = \purge_u(\convertback(\alpha') \down{a})
  \\
  %ron: need to move it out and back in 
  & = \purge_u(\convertback(\alpha'))\down{a} \\
    & = \purge_u(\convertback(\beta'))\down{a} \\
    & = \purge_u(\convertback(\beta') \down{a}) \\
  & = \purge_u(\convertback(\beta'a)) \enspace.
\end{align*}
Also, for all $v \neq u$, $v \nintrel u$,  
%ron: 
in case $\dom(a) \neq v$ we have 
\begin{align*}
  \tview_v(\convertback(\alpha' a)) & = \tview_v(\convertback(\alpha')
  \down{a}) \\
  & = \tview_v(\convertback(\alpha'))\\
  & = \tview_v(\convertback(\beta'))\\
  & = \tview_v(\convertback(\beta') \down{a}) \\
  & = \tview_v(\convertback(\beta' a)) \enspace.   
\end{align*}
In the case $\dom(a) = v$ we argue as follows. 
If $a$ is not a final action, 
we have
\begin{align*}
  \view_{\dom(a)}(\alpha' a) & = \ftview_{\dom(a)}(\alpha' a) \\
  & = \ftview_{\dom(a)}(\beta' a) \\
  & = \view_{\dom(a)}(\beta' a) \enskip .
\end{align*}
%ron: + 
Since there is no final action of $\dom(a)$ in $\alpha'$ or $\beta'$, and $a$ is not final, 
there is no final action of $\dom(a)$ in $\alpha' a$ or $\beta' a$,
and therefore
$$\view_{\dom(a)}(\convertback(\alpha' a)) =
\view_{\dom(a)}(\convertback(\beta' a))\enskip,$$ 
%ron: + 
using the equivalence proved above. From this equation it follows that 
$$\tview_{\dom(a)}(\convertback(\alpha' a)) =
\tview_{\dom(a)}(\convertback(\beta' a))\enskip.$$ 

In the case that $a$ is a final action, 
%ron: say $a = \final{b}$, 
we have
\begin{align*}
\view_{\dom(a)}(\alpha') a \bot 
& = 
  \ftview_{\dom(a)}(\alpha' a) \\
  & = \ftview_{\dom(a)}(\beta' a) \\
  & = \view_{\dom(a)}(\beta') a \bot \enspace. 
\end{align*}
Therefore $\view_{\dom(a)}(\alpha') = \view_{\dom(a)}(\beta')$. 
%ron: + 
Since there is no final action of $\dom(a)$ in $\alpha'$ or $\beta'$,  it follows using the equivalence proved above that 
$$\view_{\dom(a)}(\convertback(\alpha')) = \view_{\dom(a)}(\convertback(\beta'))\enskip . $$ 
Thus,
\begin{align*}
  \tview_{\dom(a)}(\convertback(\alpha' a)) & = \tview_{\dom(a)}(\convertback(\alpha')\,
  \down{a}) \\
  & = \view_{\dom(a)}(\convertback(\alpha'))\, \down{a} \\
  & = \view_{\dom(a)}(\convertback(\beta'))\, \down{a} \\
  & = \tview_{\dom(a)}(\convertback(\beta' a)) 
  \enspace. 
\end{align*}
\end{itemize} 
This completes the proof of the claim. 

%ron: added the argument about how  the above yield the desired conclusion 
We are now positioned to prove that if $M$ is \tosec\ then $M'$ is \itosec.  
We show the contrapositive. Suppose $M'$ is not \itosec. 
By Proposition~\ref{prop:charn-ito}, there exist $\alpha, \beta \in \Actions'^*$ and domain $u$ 
such that $\purge_u(\alpha) = \purge_u(\beta)$ and
  $\ftview_v(\alpha) = \ftview_v(\beta)$ for all domains $v \neq u$ 
  such that $v \nintrel u$, and  $\obs'_u(s'_0 \cdot \alpha) \neq \obs'_u(s'_0 \cdot \beta)$. 
It follows from $\purge_u(\alpha) = \purge_u(\beta)$ that 
$\alpha$ contains a final action of domain $u$ iff $\beta$ contains a final action of  domain $u$. 
But if both contain such a  final  action, then $\obs'_u(s'_0 \cdot \alpha) = \bot =  \obs'_u(s'_0 \cdot \beta)$, 
contrary to assumption. Thus neither $\alpha$ nor $\beta$ contain a  final action of $u$. 
By what was shown above, $\purge_u(\convertback(\alpha)) = \purge_u(\convertback(\beta))$ and
  $\ftview_v(\convertback(\alpha)) = \ftview_v(\convertback(\beta))$ for all domains $v \neq u$ 
  such that $v \nintrel u$. 
  Also, $\obs_u(s_0 \cdot \convertback(\alpha)) \neq  \obs_u(s_0 \cdot \convertback(\beta))$.
  By the characterization of  \tosecty\  of Proposition~\ref{prop:ptflat}, this 
implies that  $M$ is not \tosec\ with respect to $\nintrel$.

For the other direction of the proof we define a function
$\convert \colon \Dom \times \Actions^* \rightarrow \Actions^*$, 
%ron: that substitutes very last action $a$ in a string by $\final{a}$ for 
%every agent $\dom(a) \neq u$ with $\dom(a) \nintrel u$. 
which, for each domain $v\neq u$ with $v \nintrel u$, replaces 
the rightmost action $a$ with $\dom(a) = v$ by the action  $\final{a}$.

We observe that,  for all $u \in \Dom$, if $\gamma, \gamma'$ are prefixes of 
$\alpha$ and $\convert_u(\alpha)$, respectively, and have the same length, then 
for all $U \subseteq \Dom$, 
if $s_0'\cdot \gamma' = (s, U)$ then $s = s_0 \cdot \gamma$. 
Therefore, since $\convert_u(\alpha)$ contains no final action of domain $u$, 
we have $\obs_u(s_0\cdot \convert_u(\alpha)) = \obs_u(s_0\cdot \alpha)$.  
Moreover,  if $\gamma,\gamma'$ are prefixes of $\alpha$ and $\convert_u(\alpha)$, respectively, 
have the same length, and $\gamma$ does not contain the rightmost action of domain $v$ in $\alpha$ (if any), 
then $\view_v(\gamma) = \view_v(\gamma')$. 

We show that, for all $u \in \Dom$ and all $\alpha, \alpha' \in
\Actions^*$,  
if $\purge_u(\alpha) = \purge_u(\alpha')$ and 
$\tview_v(\alpha) = \tview_v(\alpha')$ for all $v \neq u$, $v
\nintrel u$, 
then 
$\purge_u(\convert_u(\alpha)) = \purge_u(\convert_u(\alpha'))$
and 
$\ftview_v(\convert_u(\alpha)) = \ftview_v(\convert_u(\alpha'))$ for
all $v \neq u$, $v \nintrel u$. 
%ron: 
By an argument similar to that for the opposite direction, it then follows that 
if $M'$ is \itosec\ then $M$ is \tosec. 

We observe that, for any agent $u$, the functions $\purge_u$ and
$\convert_u$ commute. 
This shows that for all $\alpha, \alpha' \in \Actions^*$,  
if $\purge_u(\alpha) = \purge_u(\alpha')$ then, 
$\purge_u(\convert_u(\alpha)) = \purge_u(\convert_u(\alpha'))$. 

To complete the argument, we assume 
$\purge_u(\alpha) = \purge_u(\alpha')$ and 
for a domain $v\neq u$ with $v\nintrel u$, we have
$\tview_v(\alpha) = \tview_v(\alpha')$, and show that  $\ftview_v(\convert_u(\alpha)) = \ftview_v(\convert_u(\alpha'))$. From $\purge_u(\alpha) = \purge_u(\alpha')$ it follows 
that the sequences of actions of domain $v$ in $\alpha$ and $\alpha'$ are the same. 
In particular, if neither sequence contains an action of domain $v$, then the claim is trivial. 
Suppose that $a$ is the last action of domain $v$ in both $\alpha$ and $\alpha'$. 
Then we may write $\alpha = \alpha_1 a \alpha_2$ and $\alpha' = \alpha'_1 a \alpha'_2$, 
where $\alpha_2, \alpha_2'$ contain no actions of domain $v$. Since 
$\tview_v(\alpha) = \tview_v(\alpha')$, we have $\view_v(\alpha_1) = \view_v(\alpha'_1)$. 
Also
$\convert_u(\alpha) = \gamma_1 \final{a} \gamma_2$ and $\convert_u(\alpha') = \gamma'_1 \final{a} \gamma'_2$, 
where $\gamma_1, \gamma'_1$ are of the same length as $\alpha_1, \alpha_1'$, respectively, 
and $\gamma_2, \gamma_2'$ contain no actions of domain $v$. 
Thus, using the observation above, 
\begin{align*}
  \ftview_v(\convert_u(\alpha )) &  = \ftview_v(\gamma_1 \final{a} \gamma_2) \\
  & = \view_v(\gamma_1)\, \final{a} \bot \\
  & = \view_v(\alpha_1) \,\final{a} \bot \\
  & = \view_v(\alpha'_1)\, \final{a} \bot \\
  & = \view_v(\gamma'_1)\, \final{a} \bot \\
  & = \ftview_v(\convert_u(\alpha')) 
  \enspace. 
\end{align*}
\end{proof}

\end{document} 

%%% Local Variables: 
%%% mode: TeX-PDF
%%% TeX-master: t
%%% End: 